\journalname{TEST} 
\begin{document}


\title{LRD spectral analysis of multifractional  functional time series on manifolds}

\titlerunning{LRD manifold multifractional  functional time series}

\author{
      Diana  P. Ovalle--Mu\~noz$^1$ 
\and
        M. Dolores Ruiz--Medina$^2$ 
 }

\authorrunning{D. P. Ovalle--Mu\~noz \and M. D. Ruiz--Medina} 

\institute{University of Granada$^{1,*,2}$
\at
 Faculty of Sciences, Avd, Fuente Nueva s/n, 18071 Granada, Spain\\
\email{mruiz@ugr.es} $^*$ corresponding author 
}

\date{Received: 1 December 2022 / Revised:  / Accepted: ......}

\maketitle

\begin{abstract} This paper addresses the estimation of the second--order structure of a manifold cross--time random field (RF) displaying spatially varying  Long Range Dependence (LRD), adopting the  functional time series framework  introduced  in  \cite{RuizMedina2022}.
Conditions for the asymptotic unbiasedness of the integrated periodogram operator in  the Hilbert--Schmidt operator norm are derived beyond structural assumptions. Weak--consistent  estimation of the long--memory operator is achieved under a semiparametric functional spectral framework in  the Gaussian context. The case where the projected manifold process  can display Short Range Dependence (SRD) and LRD at different  manifold scales is also analyzed.
 The performance of both estimation procedures is illustrated in the simulation study,  in the context of multifractionally integrated spherical functional autoregressive--moving average (SPHARMA(p,q))  processes.

\keywords{Connected and compact two--point homogeneous spaces \and
Ibragimov contrast function \and  LRD  multifractionally integrated functional time series \and manifold cross-time RFs \and multifractional  spherical  stochastic partial differential equations}

\subclass{60G10 \and  60G12 \and  60G18 \and  60G20 \and 60G22  (primary) \and  60G60}

\end{abstract} 


\section{Introduction} \label{sec:1}

\setcounter{section}{1} \setcounter{equation}{0} 
The literature on weakly dependent functional time series has been widely developed in the last few decades, allowing the statistical analysis,  and inference on stochastic  processes under a Markovian framework (see, e.g., \cite{Bosq2000}; \cite{Horvath2012}).  Nowadays, spectral analysis of functional time series models constitutes
an open research area.
 Under suitable functional cumulant mixing conditions, and the summability in time of the trace norm of the elements of the covariance operator family, in \cite{Panaretos13}, a weighted periodogram operator estimator  of the spectral density operator is derived. Its asymptotic analysis is addressed. Particularly,   the asymptotic normality of the functional discrete Fourier transform (fDFT) of the curve data is proved (see also \cite{Tavakoli2014}). In \cite{Panaretos13b},  a harmonic principal component analysis of functional time series, based on  Karhunen–Lo\'eve--like decomposition in the temporal functional spectral domain is proposed, the so–called Cram\'er–Karhunen–Lo\'eve representation  (see also \cite{Rubin20a};  \cite{Rubin20b}).   Some recent applications in the context of functional regression are obtained  in \cite{Pham2018}. Hypothesis testing for detecting modelling differences in functional time series dynamics  is achieved in \cite{Tavakoli2016} in the functional spectral domain.

Recently,  an attempt to extend spectral analysis of  functional time series   to the context of  LRD functional sequences has been presented  in \cite{RuizMedina2022}, covering, in particular,  some examples of the LRD funtional time series family analyzed in \cite{LiRobinsonShang19} in the temporal domain.   The application   of harmonic analysis in this more general context entails important  advantages as given in   \cite{RuizMedina2022}. Particularly, under stationary in time, the temporal dependence range can be approximated from the behavior in a neighborhood of zero frequency of the spectral density operator family  at different spatial resolution levels. Moreover, a more flexible modeling framework can be introduced in this setting. Particularly,  the projected process can display LRD and  SRD depending on the spatial scale, according to the support of the spectral measure of the LRD operator characterizing the distribution of its  eigenvalues.  In \cite{LiRobinsonShang19}, Functional Principal Component
Analysis (FPCA),  based on the long-run covariance function,  is applied in the consistent estimation of the dimension and the orthonormal
functions  spanning the dominant subspace, where the projected curve
process displays the largest dependence range. Fractionally integrated functional autoregressive moving averages processes  constitute an  interesting example (see \cite{LiRobinsonShang19}). The  multifractionally integrated version of this process family  can be analyzed under the modeling framework introduced in \cite{RuizMedina2022}.

Connected  and compact two--point homogeneous spaces constitute an example of manifold, with isometrically equivalent  properties to the sphere, locally resembles an Euclidean space. Here, we will denote it as   $\mathbb{M}_{d},$ with $d$ being  its topological dimension. The isotropy or   invariance of a kernel   with respect to the group of isometries of $\mathbb{M}_{d}$ allows its diagonal representation  in terms of a fixed  orthonormal basis given by  the eigenfunctions of the Laplace--Beltrami operator on $L^{2}(\mathbb{M}_{d}, d\nu).$ Thus, the separable Hilbert space $H=L^{2}\left(\mathbb{M}_{d},d\nu\right)$ of square integrable functions on  $\mathbb{M}_{d}$
  is considered  in our
 functional time series analysis. Here, $d\nu$ is the normalized Riemannian measure on $\mathbb{M}_{d}.$  Particularly, this Hilbert space framework has been adopted by several authors in the current literature for the special case of the sphere. That is the case  of \cite{CaponeraMarinucci}, where  estimation and asymptotic analysis of  spherical functional time series is achieved,  introducing  new model families (see \cite{Caponera21}).  Also, in the LRD framework,  sphere cross-time random ﬁelds are analyzed  in \cite{MarinucciRV}, investigating the asymptotic  behavior, under temporal increasing domain, of the empirical measure of  a excursion area  at any threshold. Time--dependent RF solution to a fractional pseudodifferential equation on the sphere is introduced in \cite{Ovidio16}  (see also \cite{Anh2018}). The eigenfunctions of the Laplace Beltrami operator on $L^{2}\left(\mathbb{M}_{d},d\nu\right),$  and the  corresponding zonal functions  play a crucial role in the  analysis of  manifold cross-time random ﬁelds (see, e.g.,  \cite{MaMalyarenko}). For example, Sobolev regularity and H\"older continuity of Gaussian RFs on a connected and compact two--point homogeneous space are  studied in  \cite{Cleanthous} and \cite{Cleanthous2}, by exploiting asymptotic properties of the pure point spectra of invariant kernels, and projection of functions into the eigenfunctions of the Laplace Beltrami operator.
Some motivating real data applications can be found  in the field of  Cosmic Microwave Background   (CMB) radiation  (see, e.g.,  \cite{Marinucci} and references therein).
 It is  well-known the interest of these RFs
 in  climatic change analysis (see, e.g., \cite{Alegria21}).

 The spectral domain allows to characterize LRD in functional time series in terms of the unboundedness at zero frequency of the corresponding element of the family of spectral density operators. Specifically, the divergence of the eigenvalues of the elements of the  spectral density operator family  at a neighborhood of zero frequency leads to different levels of singularity at this frequency depending on the spatial scale (see, e.g., \cite{RuizMedina2022}). The case of SRD and LRD at different spatial scales can also be analyzed, in the case of non--trivial null space  of the LRD operator. Thus,  the projected process displays SRD in this subspace, and LRD in the eigenspaces associated with the elements of the support of the spectral measure of the   LRD operator.

  From a theoretical point of view,  this paper contributes providing a sufficient condition for the asymptotic unbiasedness of the integrated  periodogram operator, in the Hilbert--Schmidt operator norm, for the class of   zero--mean, stationary and isotropic,  mean--square continuous Gaussian, or elliptically contoured, spatiotemporal RFs on $\mathbb{M}_{d}.$ This result provides a suitable setting for applying the LRD spectral  functional time series framework  introduced in  \cite{RuizMedina2022}, where  the weak--consistent estimation of the second--order structure  in the functional spectral domain is achieved  under a
     Gaussian scenario. Our  scenario is a bit different, since the Hilbert--Schmidt operator scenario has been considered.  Note that this scenario has usually been adopted in the current literature on functional time series  (see \cite{Bosq2000}; \cite{CaponeraMarinucci}). The case where
 SRD and LRD are displayed at different  manifold scales is also addressed, combining   semiparametric estimation of the spectral density operator based on minimum contrast    at manifold scales where LRD is displayed, and a nonparametric estimation based on the weighted periodogram  operator  in the remaining manifold scales where SRD is observed. The preliminary result derived in the Supplementary Material on spatiotemporal Karhunen--Lo\'eve expansion of the restriction to a bounded closed temporal interval of a   zero--mean, stationary in time and isotropic in space,  mean--square continuous Gaussian spatiotemporal RF on $\mathbb{M}_{d}$ has been applied.

From a practical point of view, the simulation study undertaken  in  Section \ref{sec:6} illustrates the performance beyond the Gaussian scenario of the two proposed estimation approaches in the functional spectral domain, in the context of     multifractionally integrated SPHARMA(p,q)   processes. Two cases are analyzed respectively corresponding to a decreasing and increasing positive bounded eigenvalue sequence of the  LRD operator. Particularly,  we study the scenario where  the projected process displays SRD and LRD   at different spherical scales. This last case is illustrated  when  the eigenvalues  of the LRD operator vanish at high discrete Legendre frequencies.    Details on the  implementation of both estimation methodologies, and some conclusions on the  results obtained   beyond the Gaussian scenario  are  also included in  Sections \ref{sec:6} --\ref{sec:7} (see also the Supplementary Material).

The outline of the paper is the following. Section \ref{sec:2} introduces notation, technical tools and preliminary elements.  Asymptotic unbiasedness of the integrated periodogram operator, and weak consistent minimum contrast estimation are studied in Section \ref{sec:3}.
  An extended formulation of this estimation methodology to the case where temporal LRD and SRD are displayed at different manifold scales is proposed     in Section \ref{sec:4}.  A simulation study is undertaken in Section \ref{sec:6} to illustrate the performance   of the proposed estimation methodologies. A summary of conclusions about the simulation study is given  in Section \ref{sec:7}. The Supplementary Material complements the theoretical material about the paper, and provides  the results of the simulation study in the remaining scenarios analyzed  that  are not displayed in the paper.

 \section{Preliminaries} \label{sec:2}
\setcounter{section}{2} \setcounter{equation}{0} 
\setcounter{section}{2} \setcounter{equation}{0} 
The family of manifold cross--time RFs analyzed in this paper is  introduced in this section. The  connected and compact two--point homogeneous spaces are briefly described, providing some preliminary algebraic notions, with reference to the invariant probabilistic measure, and the Laplace Beltrami operator.

Let $X=\{ X(\mathbf{x},t),\ \mathbf{x}\in \mathbb{M}_{d},\ t\in \mathbb{T}\}$ be a wide sense stationary in time and isotropic in space  zero--mean,  and mean--square continuous Gaussian, or elliptically contoured, RF on the basic probability space $(\Omega ,\mathcal{A},P).$   Here, $\mathbb{T}$ denotes the temporal domain, which usually is $\mathbb{Z}$ or $\mathbb{R}$ (see also the Supplementary Material for the case of a bounded temporal interval $[0,T]$).   Assume also that the map $\widetilde{X}_{t}:(\Omega ,\mathcal{A})\longrightarrow \left(L^{2}(\mathbb{M}_{d},d\nu),\mathcal{B}(L^{2}(\mathbb{M}_{d},d\nu))\right)$ is   measurable, with  $\widetilde{X}_{t}(\mathbf{x}):=X(\mathbf{x},t)$ for every $t\in \mathbb{T}$ and $\mathbf{x}\in \mathbb{M}_{d}.$  Here,  $\mathcal{B}(L^{2}(\mathbb{M}_{d},d\nu)) $ denotes the $\sigma$--algebra generated by all cylindrical subsets of $L^{2}(\mathbb{M}_{d},d\nu).$

Let $d_{\mathbb{M}_{d}}$ be the geodesic distance induced by the isometry with the unit sphere, preserving the spherical distance $\rho(\mathbf{x},\mathbf{y})=\cos^{-1}(\left\langle\mathbf{x},\mathbf{y}\right\rangle),$   $\mathbf{x},\mathbf{y}\in \mathbb{S}_{d},$
and  let $\omega_{d}=\int_{\mathbb{M}_{d}}d\nu(\mathbf{x}).$ In the following,
$R_{n}^{(\alpha, \beta )}\left(\cos\left(d_{\mathbb{M}_{d}}(\mathbf{x},\mathbf{y})\right)\right)=$ \linebreak $\frac{P_{n}^{(\alpha, \beta )}\left(\cos\left(d_{\mathbb{M}_{d}}(\mathbf{x},\mathbf{y})\right)\right)}{P_{n}^{(\alpha, \beta )}\left(1\right)},$  with $P_{n}^{(\alpha, \beta )}$ being  the Jacobi polynomial of degree $n$ with a pair of parameters $(\alpha ,\beta )$  (see, e.g., \cite{Andrews99}).
For each $n\in \mathbb{N}_{0},$ $\{ S_{n,1}^{d},\dots,S_{n,\delta(n,d)}^{d}\}$   is the orthonormal basis of eigenfunctions of the
eigenspace  $H_{n}$ of Laplace--Beltrami operator $\Delta_{d}$  on $L^{2}(\mathbb{M}_{d},d\nu),$  associated with the eigenvalue
 $\lambda_{n}=-n\varepsilon(n\varepsilon+\alpha +\beta +1).$  The dimension $\delta(n,d)$ of
$H_{n}$ is given by  $$\delta(n,d)=\frac{(2n+\alpha+\beta +1)\Gamma (\beta +1)\Gamma (n+\alpha +\beta +1)\Gamma (n+\alpha +1)}{\Gamma (\alpha +1)\Gamma (\alpha +\beta +2)\Gamma (n+1)\Gamma (n+\beta +1)}.$$
Note that $\alpha =(p+q-1)/2,$  $\beta =(q-1)/2,$  and $\varepsilon =2$ if $\mathbb{M}_{d}=\mathbb{P}^{d}(\mathbb{R}),$ the projective space over the field $\mathbb{R},$   and  $\varepsilon =1,$ otherwise.
Parameters $q$ and $p$ are the dimensions of some root spaces connected with   the Lie algebras of the groups $G$ and $K,$   with $\mathbb{M}_{d}\simeq G/K,$ being $G$ the connected component of the group of isometries of $\mathbb{M}_{d},$
 and $K$ the stationary subgroup of a fixed point \textbf{o} in $\mathbb{M}_{d}$
 (see, e.g., Table 1  in  \cite{MaMalyarenko}).

The next lemma is applied along the paper.

\begin{lemma}
\label{cor:adformula}
 (See \cite[Theorem 3.2.]{Gine1975} and \cite[p 455]{Andrews99})
For every $n\in \mathbb{N}_{0},$ the following addition formula holds:
\begin{equation}
\sum_{j=1}^{\delta(n,d)}S_{n,j}^{d}(\mathbf{x})S_{n,j}^{d}(\mathbf{y})=\frac{\delta(n,d)}{\omega_{d}}R_{n}^{\alpha,\beta }\left(\cos(d_{\mathbb{M}_{d}}(\mathbf{x},\mathbf{y}))\right),\quad \mathbf{x},\mathbf{y}\in \mathbb{M}_{d}.
\label{af}
\end{equation}
\end{lemma}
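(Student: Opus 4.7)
The plan is to recognize the left-hand side as the reproducing kernel $K_n(\mathbf{x},\mathbf{y}) := \sum_{j=1}^{\delta(n,d)} S_{n,j}^{d}(\mathbf{x}) S_{n,j}^{d}(\mathbf{y})$ of the finite-dimensional eigenspace $H_n$ and then exploit the two-point homogeneity of $\mathbb{M}_d$ to force $K_n$ into a one-variable shape, finally pinning down the normalization constant by evaluation on the diagonal. First I would verify that $K_n$ does not depend on the choice of the orthonormal basis $\{S_{n,j}^d\}_{j=1}^{\delta(n,d)}$: an orthogonal change of basis leaves the trace-type sum invariant, so $K_n$ is intrinsically attached to $H_n$.

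Second, I would use that $\Delta_d$ commutes with the action of the isometry group $G$ (with $\mathbb{M}_d\simeq G/K$). Hence every eigenspace $H_n$ is $G$-invariant, so $g\cdot\{S_{n,j}^d\}_{j}$ is again an orthonormal basis of $H_n$ for each $g\in G$; by the basis-independence established above,
\begin{equation*}
K_n(g\cdot\mathbf{x},g\cdot\mathbf{y}) = K_n(\mathbf{x},\mathbf{y}),\qquad g\in G.
\end{equation*}
Because $\mathbb{M}_d$ is two-point homogeneous, the diagonal $G$-orbits on $\mathbb{M}_d\times\mathbb{M}_d$ are indexed by the geodesic distance $d_{\mathbb{M}_d}(\mathbf{x},\mathbf{y})$. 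Therefore $K_n(\mathbf{x},\mathbf{y})=\varphi_n\bigl(\cos d_{\mathbb{M}_d}(\mathbf{x},\mathbf{y})\bigr)$ for some one-variable function $\varphi_n$.

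Third, for fixed $\mathbf{y}$ the map $\mathbf{x}\mapsto K_n(\mathbf{x},\mathbf{y})$ lies in $H_n$ and is invariant under the stabilizer of $\mathbf{y}$, i.e.\ it is a zonal element of $H_n$. The classification of zonal spherical functions on connected compact two-point homogeneous spaces (Cartan--Helgason--Gangolli; the version cited from \cite{Gine1975} and \cite{Andrews99}) shows that this one-dimensional space of zonal functions in $H_n$ is spanned by the Jacobi polynomial $R_n^{(\alpha,\beta)}(\cos d_{\mathbb{M}_d}(\cdot,\mathbf{y}))$ with the parameters $\alpha=(p+q-1)/2$, $\beta=(q-1)/2$ and multiplicity law $\lambda_n=-n\varepsilon(n\varepsilon+\alpha+\beta+1)$ recalled before the lemma. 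Thus $\varphi_n(\cos d_{\mathbb{M}_d}(\mathbf{x},\mathbf{y}))=c_n R_n^{(\alpha,\beta)}(\cos d_{\mathbb{M}_d}(\mathbf{x},\mathbf{y}))$.

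Fourth, I would fix $c_n$ by specializing to $\mathbf{x}=\mathbf{y}$ and integrating over $\mathbb{M}_d$: since $R_n^{(\alpha,\beta)}(1)=1$ by definition, one gets $K_n(\mathbf{x},\mathbf{x})=c_n$, and using orthonormality,
\begin{equation*}
\int_{\mathbb{M}_d} K_n(\mathbf{x},\mathbf{x})\,d\nu(\mathbf{x}) = \sum_{j=1}^{\delta(n,d)} \|S_{n,j}^{d}\|_{L^{2}(\mathbb{M}_d,d\nu)}^{2} = \delta(n,d),
\end{equation*}
while $K_n(\mathbf{x},\mathbf{x})$ is constant in $\mathbf{x}$ by the $G$-invariance argument, so $c_n\,\omega_d=\delta(n,d)$, yielding $c_n=\delta(n,d)/\omega_d$ and the stated identity. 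The main obstacle, and the step that relies on the deepest outside input, is the third one: identifying the unique (up to scalar) zonal eigenfunction in $H_n$ with the specific Jacobi polynomial $R_n^{(\alpha,\beta)}$ with the root-data parameters $(\alpha,\beta)$ of $G/K$. Everything else reduces to basic reproducing-kernel bookkeeping and the two-point homogeneity that characterizes this class of manifolds.
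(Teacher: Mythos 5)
Your argument is correct and is essentially the proof the paper delegates to its references: the paper itself gives no proof of Lemma \ref{cor:adformula}, citing \cite[Theorem 3.2]{Gine1975} and \cite[p.~455]{Andrews99}, and the standard argument there is exactly your chain of basis-independence of the reproducing kernel, $G$-invariance plus two-point homogeneity forcing a zonal form, the Cartan--Helgason identification of the one-dimensional space of zonal eigenfunctions in $H_n$ with $R_n^{(\alpha,\beta)}$, and normalization via $\int_{\mathbb{M}_d} K_n(\mathbf{x},\mathbf{x})\,d\nu(\mathbf{x})=\delta(n,d)$. You correctly isolate the only nontrivial external input (the zonal-function classification), so nothing is missing.
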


Let $C(d_{\mathbb{M}_{d}}(\mathbf{x},\mathbf{y}),t-s)= E\left[X(\mathbf{x},t) X(\mathbf{y},s)\right],$ for $\mathbf{x},\mathbf{y}\in \mathbb{M}_{d},$ and $t,s\in \mathbb{T},$ be the covariance function of $X.$  Assume that  $C(d_{\mathbb{M}_{d}}(\mathbf{x},\mathbf{y}),t)=C(d_{\mathbb{M}_{d}}(\mathbf{x},\mathbf{y}),-t).$
The following diagonal series expansion  holds  under the conditions of   Theorem  4   in  \cite{MaMalyarenko}:
 \begin{eqnarray}&&C(d_{\mathbb{M}_{d}}(\mathbf{x},\mathbf{y}),t-s)=
  \sum_{n\in \mathbb{N}_{0}}
B_{n}(t-s)\sum_{j=1}^{\delta (n,d)}S_{n,j}^{d}(\mathbf{x})S_{n,j}^{d}(\mathbf{y})\nonumber\\
&&=
\sum_{n\in \mathbb{N}_{0}} \frac{\delta (n,d)}{\omega_{d}}  B_{n}(t-s)R_{n}^{(\alpha, \beta )}\left(\cos\left(d_{\mathbb{M}_{d}}(\mathbf{x},\mathbf{y})\right)\right),\ \mathbf{x},\mathbf{y}\in \mathbb{M}_{d}, \ t,s \in \mathbb{T}.\nonumber\\ \label{klexpc2}
\end{eqnarray}

 \section{Operator based minimum contrast parameter estimation of LRD} \label{sec:3}
\setcounter{section}{3} \setcounter{equation}{0} 
This section  adopts the semiparametric spectral LRD  functional time series framework introduced in \cite{RuizMedina2022} (see \textbf{Condition C1} below), for inference on an LRD manifold cross–time RF, constructed from a zero–mean, stationary in time, and isotropic in space, mean–square continuous Gaussian,
or elliptically contoured spatiotemporal RF  $X=\{ X(\mathbf{x},t),\ \mathbf{x}\in \mathbb{M}_{d},\ t\in \mathbb{T}=\mathbb{Z}\}$   (see, e.g., \cite{MarinucciRV} for sphere cross–time RFs).

 Note that under the conditions assumed in Section \ref{sec:2},
 $X=\{ X(\mathbf{x},t),\ \mathbf{x}\in \mathbb{M}_{d},\ t\in \mathbb{T}=\mathbb{Z}\}$ defines a functional time series $\{ \widetilde{X}_{t}(\cdot ),\  t\in \mathbb{Z}\}$. In Section \ref{mce}
 below, we will work under this scenario to implement minimum contrast parameter estimation under  \textbf{Condition C1} (providing  the semiparametric functional spectral framework  introduced in \cite{RuizMedina2022}).

  \textbf{Condition C0} below establishes a sufficient condition for the asymptotic  unbiasedness of the integrated periodogram operator of $X$ in the Hilbert \linebreak -Schmidt operator norm beyond structural assumptions (see Theorem \ref{th1}). The weak--consistency of the minimum contrast estimator of the LRD operator is then obtained in  Theorem \ref{th2}, under new \textbf{Condition C0} and  \textbf{Condition C1}, adopting the
$L^{2}(\mathbb{M}_{d}, d\nu )$--valued time series framework.

  The following condition will be assumed in the subsequent development.

\medskip

\noindent \textbf{Condition C0}.  The elements of the temporal coefficient sequence  $\left\{B_{n}(\tau ),\right.$ $\left.n\in \mathbb{N}_{0}\right\}$ in equation (\ref{klexpc2}) are such that  \begin{equation}
\sum_{n\in \mathbb{N}_{0}}\delta (n,d)\sum_{\tau \in \mathbb{Z}}B_{n}^{2}(\tau )<\infty.
\label{hsc}
\end{equation}

 Under condition (\ref{hsc}) (see equation (\ref{klexpc2c}) below),
 one can define almost surely (a.s.) in $\omega \in [-\pi,\pi],$ i.e., $\omega \in [-\pi,\pi]\backslash \mathcal{D}_{0},$ $\int_{\mathcal{D}_{0}}d\omega=0,$
 the spectral density operator $\mathcal{F}_{\omega}$ in the space  $\mathcal{S}(L^{2}(\mathbb{M}_{d},d\nu; \mathbb{C}))$ of Hilbert--Schmidt operators on $L^{2}(\mathbb{M}_{d},d\nu; \mathbb{C})$
 as follows:
   \begin{equation}
  \mathcal{F}_{\omega}
  \underset{\mathcal{S}(L^{2}(\mathbb{M}_{d},d\nu; \mathbb{C}))}{=}
  \frac{1}{2\pi} \sum_{\tau \in \mathbb{Z}}\exp\left(-i\omega \tau\right)\mathcal{R}_{\tau},\label{sdo2}
\end{equation}
\noindent where  $\mathcal{R}_{\tau}=E[\widetilde{X}_{s}\otimes \widetilde{X}_{s+\tau}]=E[ \widetilde{X}_{s+ \tau}\otimes\widetilde{X}_{s}],$ for $\tau,s\in \mathbb{Z}.$

 Note that Parseval identity  leads  to
\begin{eqnarray}&&
\int_{-\pi}^{\pi}\|\mathcal{F}_{\omega }\|_{\mathcal{S}(L^{2}(\mathbb{M}_{d},d\nu; \mathbb{C}))}^{2}d\omega
=\sum_{n\in \mathbb{N}_{0}}\delta (n,d)\int_{-\pi}^{\pi}|f_{n}(\omega )|^{2}d\omega
\nonumber\\
&&=\sum_{n\in \mathbb{N}_{0}}\delta (n,d)\sum_{\tau \in \mathbb{Z}}B_{n}^{2}(\tau )
=\sum_{\tau \in \mathbb{Z}}\left\|\mathcal{R}_{\tau }\right\|_{\mathcal{S}(L^{2}(\mathbb{M}_{d},d\nu; \mathbb{R}))}^{2}
<\infty.
\label{klexpc2c}
\end{eqnarray}

 From equations   (\ref{klexpc2}) and   (\ref{klexpc2c}),
     $\mathcal{F}_{\omega}$ has  kernel $f_{\omega }(\mathbf{x},\mathbf{y}),$  $\mathbf{x}, \mathbf{y}\in \mathbb{M}_{d},$  admitting  for $\omega \in [-\pi,\pi]\backslash \mathcal{D}_{0},$ the following diagonal series expansion in the space $\mathcal{S}(L^{2}(\mathbb{M}_{d},d\nu; \mathbb{C}))\equiv  L^{2}(\mathbb{M}_{d}\times \mathbb{M}_{d},d\nu\otimes d\nu; \mathbb{C}):$
 For $\mathbf{x}, \mathbf{y}\in \mathbb{M}_{d},$
\begin{eqnarray}&&
f_{\omega }(\mathbf{x},\mathbf{y})\underset{L^{2}(\mathbb{M}_{d}\times \mathbb{M}_{d})}{=}\frac{1}{2\pi}\sum_{n\in \mathbb{N}_{0}}f_{n}(\omega )\sum_{j=1}^{\delta (n,d)}S_{n,j}^{d}(\mathbf{x})S_{n,j}^{d}(\mathbf{y})
\nonumber\\&&=\frac{1}{2\pi}
\sum_{n\in \mathbb{N}_{0}}
 \left[\sum_{\tau \in \mathbb{Z}}\exp(-i\omega \tau)B_{n}(\tau) \right]\frac{\delta (n,d)}{\omega_{d}}R_{n}^{(\alpha, \beta )}\left(\cos\left(d_{\mathbb{M}_{d}}(\mathbf{x},\mathbf{y})\right)\right).
\label{eigsdo}
\end{eqnarray}

  The convergence to zero in the Hilbert--Schmidt operator norm of the integrated bias of the  periodogram operator also  holds  under   (\ref{hsc}).
\begin{theorem}
\label{th1}
 Under    condition (\ref{hsc}),
$$
\left\|\int_{-\pi}^{\pi }\left[\mathcal{F}_{\omega}
-\mathcal{F}_{\omega}^{(T)}\right]d\omega\right\|_{\mathcal{S}\left(L^{2}(\mathbb{M}_{d},d\nu; \mathbb{C})\right)}\to 0,\quad  T\to \infty,
$$
\noindent where $\mathcal{F}_{\omega}^{(T)}$ denotes the mean of the periodogram operator
$$ p_{\omega }^{(T)}=\widetilde{X}_{\omega}^{(T)}\otimes
\overline{\widetilde{X}_{\omega}^{(T)}},\quad \omega \in [-\pi,\pi],$$
\noindent with $\widetilde{X}^{(T)}_{\omega }(\cdot)
\underset{L^{2}(\mathbb{M}_{d},d\nu; \mathbb{C})}{=} \frac{1}{\sqrt{2\pi
T}}\sum_{t=1}^{T}\widetilde{X}_{t}(\cdot )\exp\left(-i\omega t\right),$
$\omega\in  [-\pi,\pi],$ being  the functional Discrete Fourier Transform  (fDFT),
based on a functional sample $\widetilde{X}_{t},$ $t=1,\dots,T,$ of spatiotemporal RF $X.$ Here, as before, for each $t=1,\dots,T,$
  $\widetilde{X}_{t}(\mathbf{x}):= X(\mathbf{x},t),$ for every $\mathbf{x}\in \mathbb{M}_{d}.$
\end{theorem}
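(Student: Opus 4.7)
The strategy is to diagonalize both $\mathcal{F}_{\omega}$ and the expected periodogram $\mathcal{F}_{\omega}^{(T)}$ in the Laplace--Beltrami eigenbasis $\{S_{n,j}^{d}\}_{n,j}$, reduce the Hilbert--Schmidt norm of the integrated bias to a scalar series over the eigenmodes $n\in\mathbb{N}_{0}$, and conclude by dominated convergence using condition (\ref{hsc}). First I would compute $\mathcal{F}_{\omega}^{(T)}=E[p_{\omega}^{(T)}]$ from the definition of the fDFT: stationarity of $\widetilde{X}_{t}$ yields $\mathcal{F}_{\omega}^{(T)}=\frac{1}{2\pi T}\sum_{t,s=1}^{T}\mathcal{R}_{t-s}\,e^{-i\omega(t-s)}$. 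Substituting the diagonal expansion (\ref{klexpc2}) of $\mathcal{R}_{t-s}$ and applying the addition formula of Lemma \ref{cor:adformula}, the kernel of $\mathcal{F}_{\omega}^{(T)}$ admits an expansion parallel to (\ref{eigsdo}), but with $f_{n}(\omega)$ replaced by the Ces\`aro-truncated analogue $f_{n}^{(T)}(\omega)=\frac{1}{T}\sum_{t,s=1}^{T}B_{n}(t-s)\,e^{-i\omega(t-s)}$. In particular, $\mathcal{F}_{\omega}$ and $\mathcal{F}_{\omega}^{(T)}$ share the same rank-$\delta(n,d)$ eigen-operators $\sum_{j}S_{n,j}^{d}\otimes S_{n,j}^{d}$ and differ only in their scalar symbols attached to each spatial mode.

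Integrating in $\omega$ and invoking the orthonormality of $\{S_{n,j}^{d}\}_{n,j}$ in $L^{2}(\mathbb{M}_{d},d\nu)$, I would express the Hilbert--Schmidt norm of the integrated bias as the scalar series $\bigl\|\int_{\mathbb{R}}[\mathcal{F}_{\omega}-\mathcal{F}_{\omega}^{(T)}]\,d\omega\bigr\|_{\mathcal{S}}^{2}=\frac{1}{(2\pi)^{2}}\sum_{n}\delta(n,d)\,|c_{n}^{(T)}|^{2}$, with $c_{n}^{(T)}=\int_{\mathbb{R}}[f_{n}(\omega)-f_{n}^{(T)}(\omega)]\,d\omega$. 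For each fixed $n$, Fourier inversion at the origin together with the standard convergence of the Ces\`aro-truncated Fourier transform to the full one yields $c_{n}^{(T)}\to 0$ as $T\to\infty$, so termwise convergence of the scalar series to zero is immediate.

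The main obstacle is justifying the passage to the limit inside the infinite sum over $n$. I would aim for a uniform estimate of the form $|c_{n}^{(T)}|^{2}\le C\,\|B_{n}\|_{\mathcal{S}(L^{2}(\mathbb{T},dt))}^{2}$ with $C$ independent of $T$ and $n$, obtained by Cauchy--Schwarz combined with a Fej\'er-kernel representation of the Ces\`aro truncation (the same representation already implicit in (\ref{klexpc2c})). Condition (\ref{hsc}) then supplies the summable dominating series $\sum_{n}\delta(n,d)\,\|B_{n}\|_{\mathcal{S}(L^{2}(\mathbb{T},dt))}^{2}<\infty$, so that dominated convergence applies to the series $\sum_{n}\delta(n,d)|c_{n}^{(T)}|^{2}$ and the theorem follows. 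The delicate point is ensuring that the constant $C$ genuinely does not depend on $n$, because the Ces\`aro weighting interacts with the temporal decay rate of $B_{n}$ differently at different spatial resolution levels, and it is precisely at this step that the spatially varying long-range dependence structure of the model must be handled with care.
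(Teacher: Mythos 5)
Your overall strategy (diagonalize both operators in the Laplace--Beltrami basis via the addition formula, reduce to the scalar symbols $f_{n}$ versus their Ces\`aro truncations $f_{n}^{(T)}$, and close with dominated convergence using (\ref{hsc})) is in substance the same mechanism the paper relies on; the difference is that the paper does not carry it out at all, but simply verifies (\ref{klexpc2c}) and then cites Lemma~1 and equations (4.6)--(4.8) of the proof of Theorem~1 in \cite{RuizMedina2022}. So you are reconstructing the outsourced argument rather than taking a different route. However, there are two concrete problems with your reconstruction. First, if you actually compute $c_{n}^{(T)}$ you will find it is identically zero, not merely convergent to zero: $f_{n}^{(T)}$ is the convolution of $f_{n}$ with the Fej\'er kernel $F_{T}$, and since $F_{T}$ has unit mass on $[-\pi,\pi]$, $\int_{-\pi}^{\pi}f_{n}^{(T)}(\omega)\,d\omega=\int_{-\pi}^{\pi}f_{n}(\omega)\,d\omega=B_{n}(0)$ for every $T$ (equivalently, only the lag-$u=0$ term survives the frequency integration, and it carries Ces\`aro weight $1$). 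Your reduction therefore ``proves'' the displayed statement only because, read literally mode by mode, it is degenerate; the content of the result (and what is actually needed downstream in Theorem~\ref{th2}) is the convergence of the frequency-integrated \emph{norm} $\int\|\mathcal{F}_{\omega}-\mathcal{F}_{\omega}^{(T)}\|_{\mathcal{S}}\,d\omega$, or of weighted integrals $\int[\,p_{\omega}^{(T)}-\mathcal{F}_{\omega}\,]\mathcal{W}_{\omega,\theta}\,d\omega$, where no such cancellation occurs. Your argument as written never engages with that quantity.

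Second, for the non-degenerate version the step you yourself flag as delicate --- a bound $|c_{n}^{(T)}|^{2}\leq C\,\|B_{n}\|^{2}$ with $C$ independent of $n$ and $T$ --- is asserted rather than proved, and it is precisely the step that cannot be waved through: under LRD the symbols $f_{n}$ blow up at $\omega=0$ at a rate $|\omega|^{-\alpha(n,\theta)}$ that varies with $n$, so a pointwise-in-$n$ Fej\'er estimate with a universal constant is not obviously available. The route taken in the cited reference (and implicitly adopted by the paper) avoids this entirely: one works with the operator-valued map $\omega\mapsto\mathcal{F}_{\omega}$ as an element of $L^{1}([-\pi,\pi];\mathcal{S}(L^{2}(\mathbb{M}_{d},d\nu;\mathbb{C})))$ --- which is what (\ref{klexpc2c}) guarantees --- and uses the approximate-identity property of the Fej\'er kernel in that Bochner space, so that the convergence is controlled by the $L^{1}$-modulus of continuity of the whole operator family and no uniform-in-$n$ scalar constant is ever needed. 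You should either adopt that normwise argument or supply an honest proof of your uniform bound; as it stands the proposal leaves the only nontrivial estimate unestablished.
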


\begin{proof}
The proof follows from equation  (\ref{klexpc2c}) implying Lemma 1 in \cite{RuizMedina2022} holds in our context.
  Hence,  equations  (4.6)--(4.8) in  the proof of Theorem 1 in \cite{RuizMedina2022}, and the remaining steps of the proof  of this theorem can be obtained in a similar way.
   \end{proof}
\subsection{Minimum contrast parameter estimation}
\label{mce}

For the implementation of the minimum contrast parameter estimation of the spatial--varying  LRD  parameter of $X,$ in the $L^{2}(\mathbb{M}_{d},d\nu)$--valued time series   framework introduced  in \cite{RuizMedina2022},  the following condition is assumed:

\medskip
 \noindent \textbf{Condition C1}. The
 elements of the function sequence  of Fourier transforms  $\{f_{n}(\cdot ),\ n\in \mathbb{N}_{0}\}$  in (\ref{eigsdo})    admit the following semiparametric modeling: For every  $\omega \in [-\pi,\pi]\backslash \{0\},$
\begin{eqnarray}f_{n,\theta }(\omega )&=&
B_{n}^{\eta }(0)M_{n}(\omega )\left[4(\sin(\omega /2))^{2}\right]^{-\alpha (n,\theta)/2},\ \theta \in \Theta,\ n\in \mathbb{N}_{0},\label{eqsmc1}\end{eqnarray}
\noindent where   $l_{\alpha }(\theta )\leq \alpha (n,\theta)\leq L_{\alpha }(\theta ),$  for any $n\geq 0,$ and $\theta \in \Theta,$  for certain $l_{\alpha }(\theta ), L_{\alpha }(\theta )\in (0,1/2).$ The elements of the function sequence $\{M_{n},\ n\in \mathbb{N}_{0}\}$  are strictly positive  continuous functions on $[-\pi,\pi],$  slowly varying    at zero frequency  in the Zygmund’s sense (see, e.g., Definition 6.6 in \cite{Beran17}). For every $\omega \in [-\pi,\pi],$ $\left\{M_{n}(\omega ),\ n\in \mathbb{N}_{0}\right\}$   are the eigenvalues of   operator $\mathcal{M}_{\omega} \in \mathcal{S}(L^{2}(\mathbb{M}_{d},d\nu; \mathbb{C})),$  and $\|\mathcal{M}_{\omega}\|_{\mathcal{S}(L^{2}(\mathbb{M}_{d},d\nu; \mathbb{C}))}\in L^{2}([-\pi,\pi]).$
The  operator family $\left\{\mathcal{M}_{\omega},\ \omega \in [-\pi,\pi]\right\}$ defines  the SRD spectral family. The  associated kernel family satisfies
\begin{eqnarray}&&\mathcal{K}_{\mathcal{M}_{\omega}}(\mathbf{x},\mathbf{y})=\sum_{n\in \mathbb{N}_{0}} M_{n}(\omega )
\sum_{j=1}^{\delta (n,d)}S_{n,j}^{d}(\mathbf{x})S_{n,j}^{d}(\mathbf{y})\nonumber\\
&=&
\sum_{n\in \mathbb{N}_{0}}
M_{n}(\omega )\frac{\delta (n,d)}{\omega_{d}}R_{n}^{(\alpha, \beta )}\left(\cos\left(d_{\mathbb{M}_{d}}(\mathbf{x},\mathbf{y})\right)\right),\ \mathbf{x},\mathbf{y}\in \mathbb{M}_{d}.\nonumber
\end{eqnarray}

 The parameter space  $\Theta $ is assumed to be  compact   with non null interior.
For each $\theta \in \Theta,$ $\{\alpha (n,\theta),\ n\in \mathbb{N}_{0}\}$  is the  system of eigenvalues of  the parameterized long--memory operator  $\mathcal{A}_{\theta }$ with kernel $\mathcal{K}_{\mathcal{A}_{\theta }}$  admitting the following  diagonal series expansion:
\begin{eqnarray}&&
\mathcal{K}_{\mathcal{A}_{\theta }}(\mathbf{x},\mathbf{y})=\sum_{n\in \mathbb{N}_{0}} \alpha (n,\theta)
\sum_{j=1}^{\delta (n,d)}S_{n,j}^{d}(\mathbf{x})S_{n,j}^{d}(\mathbf{y})\nonumber\\
&&=
\sum_{n\in \mathbb{N}_{0}}
\alpha (n,\theta)\frac{\delta (n,d)}{\omega_{d}}R_{n}^{(\alpha, \beta )}\left(\cos\left(d_{\mathbb{M}_{d}}(\mathbf{x},\mathbf{y})\right)\right),\ \mathbf{x},\mathbf{y}\in \mathbb{M}_{d}.
\nonumber\end{eqnarray}
\noindent Hence,  for every $\theta \in \Theta,$ $\mathcal{A}_{\theta }$  has  isotropic kernel and defines  a strictly positive self--adjoint operator on $L^{2}(\mathbb{M}_{d},d\nu ),$ with  norm in the space $\mathcal{L}(L^{2}(\mathbb{M}_{d},d\nu))$  of bounded linear operators
 less than $1/2.$  Note that,  since  $\sin(\omega )\sim \omega,$
$\omega \to 0,$
\begin{equation}
\left|1-\exp\left(-i\omega \right)\right|^{-\mathcal{A}_{\theta
}}=[4\sin^{2}(\omega /2)]^{-\mathcal{A}_{\theta }/2}\sim |\omega
|^{-\mathcal{A}_{\theta }},\quad \omega \to 0,\label{ffsvint}
\end{equation}
\noindent where the frequency varying operator
  $\left|1-\exp\left(-i\omega \right)\right|^{-\mathcal{A}_{\theta }/2}$ is interpreted
as  in \cite{Charac14}; \cite{Rackauskasv2}. In particular, equation (3.1) in \textbf{Assumption II} in \cite{RuizMedina2022}, characterizing LRD of functional time series in the spectral domain, holds.
\begin{remark}\label{ac} \textbf{Condition C1} restricts the eigenvalues of $\mathcal{A}_{\theta }$  to the interval $(0, 1/2),$  leading to  a shorter range of spectral singularity  at zero frequency at  any spatial scale, allowing (\ref{klexpc2c}) holds. In particular,
\begin{eqnarray}
&&\int_{-\pi}^{\pi} \left\|\mathcal{F}_{\omega }\right\|_{\mathcal{S}(L^{2}(\mathbb{M}_{d},d\nu; \mathbb{C}))}^{2}d\omega=\sum_{n\in \mathbb{N}_{0}}\delta (n,d)\int_{-\pi}^{\pi}|f_{n}(\omega )|^{2}d\omega \nonumber\\ &&\hspace*{0.3cm} \leq
\left\{\left[\int_{-\pi}^{-1}+\int_{1}^{\pi}\right]|\omega|^{-2L(\theta ) } +  \int_{-1}^{1}  |\omega|^{-2l(\theta ) }\right\}
\left\|\mathcal{M}_{\omega }\right\|^{2}_{\mathcal{S}\left(L^{2}(\mathbb{M}_{d},d\nu; \mathbb{C})\right)}d\omega  <\infty.
\nonumber\\
\label{sqishsdo}
\end{eqnarray}
\end{remark}
 The sequence $\left\{B_{n}^{\eta }(0)\geq 0,\ n\in \mathbb{N}_{0}\right\}$  in equation (\ref{eqsmc1}) defines the eigenvalues of the trace integral autocovariance operator
 $\mathcal{R}^{\eta }_{0}=E\left[ \eta_{t} \otimes \eta_{t}  \right]=E\left[ \eta_{0} \otimes \eta_{0}  \right],$  $t\in \mathbb{Z},$  of the zero--mean  innovation process $\left\{\eta_{t},\ t\in \mathbb{Z}\right\},$ with kernel $r_{\eta }$ satisfying
 $$r_{\eta }(\mathbf{x},\mathbf{y})=\sum_{n\in \mathbb{N}_{0}}B_{n}^{\eta }(0)\sum_{j=1}^{\delta (n,d)}S_{n,j}^{d}(\mathbf{x})S_{n,j}^{d}(\mathbf{y}),\quad\forall \mathbf{x},\mathbf{y}\in \mathbb{M}_{d},
 $$
 \noindent where $\frac{B_{n}(0)}{B_{n}^{\eta }(0)}=\int_{-\pi}^{\pi}M_{n}(\omega )\left[4(\sin(\omega /2))^{2}\right]^{-\alpha (n,\theta)/2}d\omega,$  for each  $n\in \mathbb{N}_{0}.$

    Assume that the true parameter value $\theta_{0}$
lies in the non empty interior of compact set  $\Theta,$ and
 $\alpha (\cdot ,\theta_{1})\neq \alpha (\cdot ,\theta _{2}),$
 for $\theta_{1}\neq \theta_{2},$  and
 $\theta_{1},\theta_{2}\in \Theta, $ ensuring identifiability.
 Let  $\left\{\widehat{\alpha}_{T} (n,\theta )= \alpha (n, \widehat{\theta }_{T}),\ n\in \mathbb{N}_{0}\right\}$ be the  parametric estimators of the eigenvalues
 of $\mathcal{A}_{\theta }$ by minimum contrast.  The computation of the minimum contrast parametric estimator  $\widehat{\theta }_{T}$ requires the introduction of some operator families in the spectral domain as  now briefly describe.

 Operator integrals   are understood here as improper operator Stieltjes integrals which strongly converge  (see, e.g., Section 8.2.1 in  \cite{Ramm05}).
Specifically, in the definition of our loss function, integration in the temporal spectral domain with respect to  weighting operator $\mathcal{W}_{\omega}$ is achieved.  For each $\omega \in [-\pi,\pi],$  the invariant kernel  $\mathcal{K}_{\mathcal{W}_{\omega}}$   of   $\mathcal{W}_{\omega}$    on $\mathbb{M}_{d}\times \mathbb{M}_{d}$ admits the following series expansion:
\begin{eqnarray}
&& \mathcal{K}_{\mathcal{W}_{\omega}}(\mathbf{x},\mathbf{y})=\sum_{n\in \mathbb{N}_{0}} W(\omega ,n ,\gamma )
\sum_{j=1}^{\delta (n,d)}S_{n,j}^{d}(\mathbf{x})S_{n,j}^{d}(\mathbf{y})
\nonumber\\
&&=\sum_{n\in \mathbb{N}_{0}}
\widetilde{W}(n) |\omega |^{\gamma
}\sum_{j=1}^{\delta (n,d)}S_{n,j}^{d}(\mathbf{x})S_{n,j}^{d}(\mathbf{y}),\  \gamma>0,\ \mathbf{x},\mathbf{y}\in \mathbb{M}_{d}, \label{eqdeftildew}
\end{eqnarray}
\noindent where the  eigenvalues $\{ W(\omega ,n ,\gamma ) ,\ n\in \mathbb{N}_{0}\}$ factorize as
$W(\omega ,n ,\gamma )=
\widetilde{W}(n) |\omega |^{\gamma },$ for every $n\in \mathbb{N}_{0},$  with $\left\{\widetilde{W}(n),\ n\in \mathbb{N}_{0}\right\}$ defining the eigenvalues of a self--adjoint positive bounded operator
$\widetilde{\mathcal{W}} ,$ such that, for certain $m_{\widetilde{\mathcal{W}}}>0,$
$M_{\widetilde{\mathcal{W}}}>0$
\begin{eqnarray}&&
m_{\widetilde{\mathcal{W}}}\leq  \left\|\widetilde{\mathcal{W}}^{1/2}(\psi)\right\|^{2}_{L^{2}(\mathbb{M}_{d},d\nu; \mathbb{C})}
\leq M_{\widetilde{\mathcal{W}}},\ \psi\in L^{2}(\mathbb{M}_{d},d\nu; \mathbb{C});\
   \|\psi\|_{L^{2}(\mathbb{M}_{d},d\nu; \mathbb{C})}=1.
\nonumber
\end{eqnarray}
\noindent
Under \textbf{Condition C1}, define the normalizing  self--adjoint integral  operator  $\mathcal{N}_{\theta }$ by the  kernel
\begin{eqnarray}&&
\mathcal{K}_{\mathcal{N}_{\theta }}(\mathbf{x},\mathbf{y})
=\sum_{n\in \mathbb{N}_{0}}\widetilde{W}(n)\left[\int_{-\pi}^{\pi} \frac{B_{n}^{\eta }(0)M_{n}(\omega )\left[4(\sin(\omega /2))^{2}\right]^{-\alpha (n,\theta)/2}
}{|\omega|^{-\gamma }} d\omega \right]\nonumber\\
&&\hspace*{0.5cm}\times
\sum_{j=1}^{\delta (n,d)}S_{n,j}^{d}(\mathbf{x})S_{n,j}^{d}(\mathbf{y}),\quad \mathbf{x},\mathbf{y}\in \mathbb{M}_{d},\ \theta \in \Theta,\quad \gamma>0.
\label{eqdeftildew2}
\end{eqnarray}

Hence,  for each $\theta \in \Theta,$  and $\omega \in [-\pi,\pi],$  $\omega \neq 0,$ the kernel  $\mathcal{K}_{\Upsilon_{\omega ,\theta }}$ of the density operator
\begin{equation}
 \Upsilon_{\omega ,\theta }=[\mathcal{N}_{\theta }]^{-1}\mathcal{F}_{\omega ,\theta }= \mathcal{F}_{\omega ,\theta }[\mathcal{N}_{\theta }]^{-1}
\label{fsdo}
\end{equation}
\noindent satisfies
$
\int_{-\pi}^{\pi}\int_{\mathbb{M}_{d}}\mathcal{K}_{\Upsilon_{\omega ,\theta }}(\mathbf{x},\mathbf{y})\mathcal{K}_{\mathcal{W}_{\omega }}(\mathbf{y},\mathbf{z})
d\nu(\mathbf{y})d\omega =\delta (\mathbf{x}-\mathbf{z}),$ for $\mathbf{x},\mathbf{z}\in \mathbb{M}_{d},$
 in the  weak sense,  meaning that,
  for every $\varrho, \psi \in L^{2}(\mathbb{M}_{d},d\nu; \mathbb{C}),$
\begin{eqnarray}
&&
\int_{-\pi}^{\pi}\Upsilon_{\omega ,\theta }\mathcal{W}_{\omega }
(\varrho)(\psi )d\omega
=\left\langle
\varrho,\psi\right\rangle_{L^{2}(\mathbb{M}_{d},d\nu; \mathbb{C})},\quad \forall \theta \in \Theta.
\label{identityintegralop}
\end{eqnarray}
\noindent Here, $\delta (\mathbf{x}-\mathbf{y})$ denotes the Dirac Delta distribution.  Equivalently,
$\int_{-\pi}^{\pi}\Upsilon_{\omega ,\theta }\mathcal{W}_{\omega
}d\omega $  defines the identity operator  $\mathcal{I}_{L^{2}(\mathbb{M}_{d},d\nu; \mathbb{C})}$
on $L^{2}(\mathbb{M}_{d},d\nu; \mathbb{C})$ for all $\theta \in \Theta,$ having unitary eigenvalues, i.e.,  $\int_{-\pi}^{\pi}\Upsilon (\omega , n ,\theta
)W(n,\omega ,\gamma )d\omega=1,$ for $n\in \mathbb{N}_{0},$   $\theta \in \Theta .$

Given the candidate set constituted by the parametric operator  families \linebreak $\left\{\Upsilon_{\omega ,\theta },\ \omega \in [-\pi,\pi]\right\},$ $\theta \in \Theta ,$ the theoretical  loss function  $L(\theta_{0},\theta )$ to be minimized is defined  as
\begin{eqnarray}
&&L(\theta_{0},\theta ):=\|U_{\theta }-U_{\theta_{0} }\|_{\mathcal{L}\left(L^{2}(\mathbb{M}_{d},d\nu; \mathbb{C})\right)}\nonumber\\
&&=\left\|\int_{-\pi}^{\pi}\mathcal{F}_{\omega ,\theta_{0}}
\ln\left(\Upsilon_{\omega , \theta_{0} }\Upsilon_{\omega ,\theta
}^{-1}\right)\mathcal{W}_{\omega }d\omega \right\|_{\mathcal{L}\left(L^{2}(\mathbb{M}_{d},d\nu; \mathbb{C})\right)}
\nonumber\\
&&=\left\{\sup_{n\geq 0}\left|\widetilde{W}(n)\int_{-\pi}^{\pi}\frac{f_{n,\theta_{0} }(\omega )}{ |\omega|^{-\gamma
}}\ln\left(\frac{\Upsilon (\omega , n, \theta_{0})}{\Upsilon (\omega , n, \theta )
}\right) d\omega
\right|\right\},\label{eqkld}
\end{eqnarray}
\noindent where  the theoretical contrast operator $U_{\theta }$ has kernel
\begin{eqnarray}
&&  \mathcal{K}_{U_{\theta }}\left(\mathbf{x},\mathbf{y}\right)
=-\sum_{n\in \mathbb{N}_{0}}U_{\theta }(n)\left[\sum_{j=1}^{\delta (n,d)}S_{n,j}^{d}(\mathbf{x})S_{n,j}^{d}(\mathbf{y})\right]
\nonumber\\
&&
=-\sum_{n\in \mathbb{N}_{0}}
\left[\sum_{j=1}^{\delta (n,d)}S_{n,j}^{d}(\mathbf{x})S_{n,j}^{d}(\mathbf{y})\right]\int_{-\pi}^{\pi} \frac{B_{n}^{\eta }(0)M_{n}(\omega )}{\left[4(\sin(\omega /2))^{2}\right]^{\alpha (n,\theta_{0})/2}}
\nonumber\\
&&\hspace*{1.5cm}\times \ln\left(\Upsilon (\omega , n ,\theta
)\right)|\omega|^{\gamma }\widetilde{W}(n)d\omega ,\ \mathbf{x},\mathbf{y}\in \mathbb{M}_{d},\ \theta \in \Theta .\nonumber
\end{eqnarray}
 \noindent Hence, $U_{\theta }\in \mathcal{L}\left(L^{2}(\mathbb{M}_{d},d\nu; \mathbb{C})\right)$ (see also Remark 7 in \cite{RuizMedina2022}).

  For every $\theta \in \Theta ,$          the eigenvalues $\left\{L_{n}(\theta_{0}, \theta ),\ n\in \mathbb{N}_{0}\right\}$ of
 the loss operator  $U_{\theta }-U_{\theta_{0} }$ satisfy   $L_{n}(\theta_{0}, \theta )=\widetilde{W}(n)\int_{-\pi}^{\pi}\frac{f_{n,\theta_{0} }(\omega )}{ |\omega|^{-\gamma
}}\ln\left(\frac{\Upsilon (\omega , n, \theta_{0})}{\Upsilon (\omega , n, \theta )
}\right) d\omega\geq 0$ (see \cite{RuizMedina2022} for more details).
   Hence, \begin{eqnarray}
 && L(\theta_{0},\theta )=\sup_{n\geq 0}L_{n}(\theta_{0}, \theta )>0,\quad \theta  \neq  \theta_{0}\nonumber\\
  && L(\theta_{0},\theta )=\sup_{n\geq 0}L_{n}(\theta_{0}, \theta )=0\ \Leftrightarrow \  \theta =  \theta_{0}.
  \label{eqjibbh}
  \end{eqnarray}

  From  (\ref{eqjibbh}),
\begin{eqnarray}
\theta_{0}&=&
\mbox{arg} \
 \min_{\theta \in \Theta } \sup_{n\geq 0}L_{n}(\theta_{0}, \theta )
=\mbox{arg} \ \min_{\theta \in \Theta } \sup_{n\geq 0}U_{\theta }(n).\label{mfethetabb}\end{eqnarray}
The  empirical contrast operator  $\mathbf{U}_{T, \theta}$
\begin{equation}
\mathbf{U}_{T, \theta}=-\int_{-\pi}^{\pi }p_{\omega }^{(T)}
\ln\left(\Upsilon_{\omega ,\theta }\right)\mathcal{W}_{\omega} d\omega ,\ \theta\in \Theta,
\label{eco}
\end{equation}
\noindent then provides the empirical loss function, and  $\widehat{\theta }_{T}$ can be  computed as
\begin{eqnarray}
\widehat{\theta }_{T}&=& \mbox{arg} \ \min_{\theta \in \Theta }
\left\|-\int_{-\pi}^{\pi }p_{\omega }^{(T)}
\ln\left(\Upsilon_{\omega ,\theta }\right)\mathcal{W}_{\omega} d\omega\right\|_{\mathcal{L}(L^{2}(\mathbb{M}_{d},d\nu; \mathbb{C}))}. \label{mfetheta}
\end{eqnarray}

The following result provides weak--consistency of the minimum contrast parameter estimator $\mathcal{A}_{\widehat{\theta}_{T}}$ of the log--memory operator.
\begin{theorem} \label{th2} 
Let $\{\widetilde{X}_{t},\ t\in \mathbb{Z}\}$  be a  Gaussian  $L^{2}(\mathbb{M}_{d},d\nu)$--valued sequence satisfying
 \textbf{Conditions C0--C1}, as well as the conditions imposed through equations (\ref{eqdeftildew})--(\ref{identityintegralop})
 with $\gamma >1.$
 Assume that the slowly varying functions  $M_{n}(\omega ),\ \omega \in [-\pi,\pi],$ $n \in \mathbb{N}_{0}$    in equation (\ref{eqsmc1}) are   such that,
 for any $\xi>0,$ $
\lim_{\omega \to 0}\left[\sup_{n\in \mathbb{N}_{0}}\left|\frac{M_{n}\left(\omega /\xi \right)}{M_{n}\left(\omega \right)}
-1\right|\right]=0.$ Then,  the following limit holds:
\begin{eqnarray}
&& E\left\|\int_{-\pi}^{\pi}\left[p_{\omega
}^{(T)}-\mathcal{F}_{\omega ,\theta_{0}}\right]\mathcal{W}_{\omega
,\theta }d\omega \right\|_{\mathcal{S}(L^{2}(\mathbb{M}_{d},d\nu; \mathbb{C}))} \to
0,\quad T\to \infty, \label{eqtlimits1}
\end{eqnarray}
\noindent
where, for $(\omega , \theta)\in [-\pi,\pi]\backslash \{0\}\times \Theta ,$
$
\mathcal{W}_{\omega,\theta }= \ln\left(\Upsilon_{\omega ,\theta}\right)\mathcal{W}_{\omega}.$
 The minimum contrast estimator (\ref{mfetheta})  then satisfies
$ \widehat{\theta }_{T}\to_{P} \theta_{0},$ as  $T\to \infty ,$
 where $\to_{P}$ denotes   convergence   in probability.
\end{theorem}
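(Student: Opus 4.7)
The overall strategy is to deduce the $L^1$--convergence in (\ref{eqtlimits1}) from the Hilbert--Schmidt asymptotic unbiasedness already established in Theorem \ref{th1}, and then to translate this into consistency of the argmin via a standard Ibragimov--type minimum contrast scheme. The key observation is that under \textbf{Condition C1}, the weighting operator family $\mathcal{W}_{\omega,\theta}=\ln(\Upsilon_{\omega,\theta})\mathcal{W}_{\omega}$ has kernel whose eigenvalues factor as $\widetilde{W}(n)|\omega|^{\gamma}\ln\Upsilon(\omega,n,\theta)$, with $\gamma>1$. Because $\alpha(n,\theta)\in(0,1/2)$ uniformly in $(n,\theta)$ and $M_{n}(\cdot)$ is slowly varying, the logarithm $\ln\Upsilon(\omega,n,\theta)$ grows at worst logarithmically near $\omega=0$, so that $|\omega|^{\gamma}\ln\Upsilon(\omega,n,\theta)$ is integrable and uniformly bounded over $n\in\mathbb{N}_{0}$ and $\theta\in\Theta$. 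This is the mechanism that transfers the HS--bound for the unweighted bias in Theorem \ref{th1} to the weighted integrated bias (\ref{eqtlimits1}).

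First, I would write
\begin{eqnarray*}
\int_{-\pi}^{\pi}\bigl[p_{\omega}^{(T)}-\mathcal{F}_{\omega,\theta_{0}}\bigr]\mathcal{W}_{\omega,\theta}\,d\omega
=\int_{-\pi}^{\pi}\bigl[p_{\omega}^{(T)}-\mathcal{F}_{\omega}^{(T)}\bigr]\mathcal{W}_{\omega,\theta}\,d\omega
+\int_{-\pi}^{\pi}\bigl[\mathcal{F}_{\omega}^{(T)}-\mathcal{F}_{\omega,\theta_{0}}\bigr]\mathcal{W}_{\omega,\theta}\,d\omega,
\end{eqnarray*}
splitting deterministic bias from stochastic fluctuation. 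For the deterministic term I would invoke Theorem \ref{th1} combined with H\"older's inequality in the space $\mathcal{S}\otimes\mathcal{L}$ to bound its $\mathcal{S}$--norm by $\|\int[\mathcal{F}_{\omega}-\mathcal{F}_{\omega}^{(T)}]d\omega\|_{\mathcal{S}}\cdot\sup_{\omega,\theta}\|\mathcal{W}_{\omega,\theta}\|_{\mathcal{L}}$, the supremum being finite by the preceding discussion (this is where $\gamma>1$, boundedness of $\widetilde{W}$, and the uniform control $\alpha(n,\theta)\leq L_{\alpha}(\theta)<1/2$ are all used simultaneously). For the stochastic term I would follow equations (4.6)--(4.8) of \cite{RuizMedina2022}: in the Gaussian case the second moment of $\|p_{\omega}^{(T)}-\mathcal{F}_{\omega}^{(T)}\|_{\mathcal{S}}$ factors through fourth--order cumulants of the fDFT that can be computed explicitly and tend to zero after integration in $\omega$, once again under the integrability guaranteed by (\ref{sqishsdo}).

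Having (\ref{eqtlimits1}), consistency then follows from a standard argmin argument. Identifiability is already built into the statement of \textbf{Condition C1} through $\alpha(\cdot,\theta_{1})\neq\alpha(\cdot,\theta_{2})$ for $\theta_{1}\neq\theta_{2}$, and combined with (\ref{eqjibbh}) this gives a unique minimum of $L(\theta_{0},\cdot)$ at $\theta_{0}$. I would use (\ref{eqtlimits1}) together with continuity of $\theta\mapsto\ln\Upsilon_{\omega,\theta}$ in the operator norm (itself a consequence of the slowly varying hypothesis on $M_{n}$ together with $\alpha(n,\theta)$ ranging in the compact interval $[l_{\alpha}(\theta),L_{\alpha}(\theta)]\subset(0,1/2)$) to prove that $\sup_{\theta\in\Theta}\|\mathbf{U}_{T,\theta}-U_{\theta}\|_{\mathcal{L}}\to 0$ in probability; compactness of $\Theta$ converts the pointwise bound into the uniform one. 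The usual Ibragimov contrast argument (argmin of a sequence of uniformly convergent continuous functionals on a compact set converges in probability to the argmin of the limit) then yields $\widehat{\theta}_{T}\to_{P}\theta_{0}$.

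The main technical obstacle I anticipate is the uniformity of these bounds near $\omega=0$ and simultaneously in $n$ and $\theta$. The LRD singularity of $\mathcal{F}_{\omega,\theta_{0}}$ at zero frequency is spatially varying through $\alpha(n,\theta_{0})$, the slowly varying factors $M_{n}(\omega)$ must be controlled uniformly in $n$ (this is exactly why the uniform--in--$n$ slow variation hypothesis $\lim_{\omega\to 0}\sup_{n}|M_{n}(\omega/\xi)/M_{n}(\omega)-1|=0$ is imposed), and the weight $|\omega|^{\gamma}$ with $\gamma>1$ must be large enough to render $|\omega|^{\gamma}\ln\Upsilon(\omega,n,\theta)$ integrable uniformly in $(n,\theta)$, but not so large that it kills the information that distinguishes $\theta$ from $\theta_{0}$. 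Managing this balance --- and in particular extracting the uniform HS--norm control on $\int[p_{\omega}^{(T)}-\mathcal{F}_{\omega,\theta_{0}}]\mathcal{W}_{\omega,\theta}d\omega$ over the entire compact $\Theta$ --- is the delicate step; everything else is, modulo bookkeeping, a transcription of the argument of \cite{RuizMedina2022} to the present manifold setting via the addition formula of Lemma \ref{cor:adformula}.
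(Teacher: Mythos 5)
Your overall architecture --- bias/fluctuation decomposition of $\int_{-\pi}^{\pi}[p_{\omega}^{(T)}-\mathcal{F}_{\omega,\theta_{0}}]\mathcal{W}_{\omega,\theta}\,d\omega$, Gaussian fourth--order moment control of the stochastic term, then identifiability plus uniform convergence on the compact $\Theta$ to get $\widehat{\theta}_{T}\to_{P}\theta_{0}$ --- matches the intended argument: the paper's own proof is a one--line deferral to Theorem 2 of \cite{RuizMedina2022}, whose structure is exactly this. However, there is a genuine gap in your treatment of the deterministic term. You propose to bound $\bigl\|\int_{-\pi}^{\pi}[\mathcal{F}_{\omega}^{(T)}-\mathcal{F}_{\omega,\theta_{0}}]\mathcal{W}_{\omega,\theta}\,d\omega\bigr\|_{\mathcal{S}}$ by $\bigl\|\int[\mathcal{F}_{\omega}-\mathcal{F}_{\omega}^{(T)}]\,d\omega\bigr\|_{\mathcal{S}}\cdot\sup_{\omega,\theta}\|\mathcal{W}_{\omega,\theta}\|_{\mathcal{L}}$, i.e.\ to factor the $\omega$--dependent weight out of the integral and invoke Theorem \ref{th1}. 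This does not work: the operator H\"older inequality applies pointwise in $\omega$, so it would give at best $\int_{-\pi}^{\pi}\|\mathcal{F}_{\omega}^{(T)}-\mathcal{F}_{\omega,\theta_{0}}\|_{\mathcal{S}}\,\|\mathcal{W}_{\omega,\theta}\|_{\mathcal{L}}\,d\omega$, whereas Theorem \ref{th1} controls only the norm of the \emph{integrated} unweighted bias, not the integral of the pointwise bias norm. Near $\omega=0$ the pointwise discrepancy $\|\mathcal{F}_{\omega}^{(T)}-\mathcal{F}_{\omega,\theta_{0}}\|_{\mathcal{S}}$ is not small (the spectral density operator is unbounded there under LRD), so no factorization of the weight can reduce the weighted statement to the unweighted one. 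The correct route is to redo the bias computation of equations (4.6)--(4.8) of \cite{RuizMedina2022} \emph{with the weight inserted}, working eigenvalue by eigenvalue in the common basis $\{S_{n,j}^{d}\}$ furnished by Lemma \ref{cor:adformula}: the factor $W(\omega,n,\gamma)\ln\Upsilon(\omega,n,\theta)=\widetilde{W}(n)|\omega|^{\gamma}\ln\Upsilon(\omega,n,\theta)$ with $\gamma>1$ tames the singularity $|\omega|^{-\alpha(n,\theta_{0})}M_{n}(\omega)$ of $f_{n,\theta_{0}}$ uniformly in $n$ and $\theta$, and the uniform slow--variation hypothesis on $M_{n}$ yields the required uniform--in--$n$ Fej\'er--kernel asymptotics. (A minor related slip: equations (4.6)--(4.8) of \cite{RuizMedina2022} are the bias estimates --- they are cited for that purpose in the proof of Theorem \ref{th1} --- not the fourth--cumulant bound for the fluctuation term, which is a separate Gaussian computation.) The consistency step from (\ref{eqtlimits1}) via compactness, continuity in $\theta$, and the identifiability encoded in (\ref{eqjibbh}) is fine as you describe it.
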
 

\begin{proof}     The proof follows as in    Theorem 2 in \cite{RuizMedina2022}.
\end{proof}

\medskip

\noindent \emph{Global analysis.}  Condition (\ref{hsc})   is a key condition in our approach,   leading to equation (\ref{klexpc2c}) introducing  the Hilbert--Schmidt operator setting, usually considered in functional time series analysis. Note that, under this setting,  LRD still can be displayed (see equation (\ref{sqishsdo})). Furthermore, equation (\ref{klexpc2c}) allows to apply the spectral analysis of LRD functional time series introduced in \cite{RuizMedina2022}  for inference on manifold cross--time RFs. Specifically,    equation  (\ref{hsc}) ensures Lemma 1 in \cite{RuizMedina2022} holds under alternative conditions  for this RF family. Then, asymptotic unbiasedness of the integrated periodogram also holds, beyond the Gaussian scenario under non structural assumptions (see Theorem \ref{th1}).  Furthermore,   Section \ref{mce}, and, in particular,  \textbf{Condition C1}, provides the semiparametric functional  spectral scenario introduced in \cite{RuizMedina2022} to be applied for   minimum contrast estimation, when  a functional sample of a manifold cross--time RF can be observed. Theorem \ref {th2} ensures weak--consistency under a Gaussian scenario as given in  \cite{RuizMedina2022}.
\section{SRD--LRD estimation in the spectral domain}\label{sec:4}

This section has a double, theoretical and practical, motivation. Specifically, on the one hand a wider family of spatiotemporal RF models is analyzed displaying SRD and LRD at different manifold scales, in the spirit of the LRD  framework introduced in \cite{LiRobinsonShang19}, but extended to the multifractional context.  On the other hand, this manifold--scale varying memory behavior can be observed in some stochastic  fractional or multifractional pseudodifferential  in time evolution equations, defined from a   local spatial differential operator, where the decay velocity of the temporal correlation function is accelerated by the decay of the spatial pure point spectrum, modifying the temporal  LRD level  of the model at the smallest spatial scales.   The reverse situation  can be observed  in  processes defined by  evolution equations given by a local differential operator in time, and a   fractional or multifractional pseudodifferential  operator in space.
Thus, the solution to these models  displays a spatial fractal behavior, reflected in a slow decay of its spatial pure point spectrum, slowing down
the decay velocity of the temporal correlation function,
leading to a stronger dependence at small spatial scales. Both behaviors can be observed within  the model family introduced in \cite{Anh2018}; \cite{AnhLeonenkoa}; \cite{AnhLeonenkob}.

 Let us now   consider
 in \textbf{Condition C1},  $l_{\alpha }(\theta )=0,$ and $\alpha (n,\theta )=0,$ for  $n\in \mathcal{D}_{\mbox{\small SRD}}\subset \mathbb{N}_{0},$   meaning that the process projected  into the eigenspaces $H_{n},$   $n\in \mathcal{D}_{\mbox{\small SRD}}\subset \mathbb{N}_{0},$ of the spherical  Laplace Beltrami operator displays SRD. While LRD is observed at the remaining eigenspaces.
  Without loss of generality let $\mathcal{D}_{\mbox{\small SRD}}=
 \{ 0,\dots n_{0}\},$  for certain $n_{0}\geq 1.$ (The reverse situation where the eigenvalues of the LRD operator vanish at large $n$ has been analyzed in the simulation study in Section \ref{sim2}).
  The projected   SRD process then admits  the following expansion (see Theorem 1 in the Supplementary Material):
\begin{eqnarray}&&\widetilde{X}_{t}^{(n_{0})}(\mathbf{x})=
\sum_{n=0}^{ n_{0}}\sum_{j=1}^{\delta (n,d)} V_{n,j}(t) S_{n,j}^{d}(\mathbf{x}),\ \mathbf{x}\in \mathbb{M}_{d},\ t=1,\dots,T.\nonumber
 \end{eqnarray}
\noindent
 For $\omega\in [-\pi ,\pi],$ the   fDFT  projected  into $\oplus_{n=0}^{n_{0}}H_{n}$
  is  expressed as
\begin{eqnarray}&&\widetilde{X}^{(T, n_{0} )}_{\omega}(\mathbf{x})
= \sum_{n=0}^{ n_{0}}\sum_{j=1}^{\delta (n,d)} \left[\frac{1}{\sqrt{2\pi
T}}\sum_{t=1}^{T}  V_{n,j}(t) \exp\left(-i\omega t\right)\right]S_{n,j}^{d}(\mathbf{x}),\ \mathbf{x}\in \mathbb{M}_{d}.\nonumber
\end{eqnarray}
\noindent   The corresponding  projected
  periodogram operator
$  p_{\omega }^{(T, n_{0})}=\widetilde{X}_{\omega }^{(T,n_{0})}\otimes
\overline{\widetilde{X}_{\omega}^{(T, n_{0})}}$ then
involves the tensorial product of the eigenfunctions of the Laplace Beltrami operator up to order $n_{0}.$ For $\omega \in [-\pi,\pi],$ the   kernel estimator $$\widehat{f}_{\omega }^{(T, n_{0})}(\mathbf{x},\mathbf{y})=\left[\frac{2\pi}{T}\right]\sum_{t\in [1,T-1]}  W^{(T)}\left(\omega - \frac{2\pi t}{T} \right) p_{2\pi t/T}^{(T, n_{0})}(\mathbf{x},\mathbf{y}),\ \mathbf{x},\mathbf{y}\in \mathbb{M}_{d},$$
\noindent of $\mathcal{F}_{\omega } $ projected into $\oplus_{n=0}^{n_{0}}H_{n},$ based on the weighted periodogram operator, is  computed. Here,
 $
W^{(T)}(x) = \sum_{j\in \mathbb{Z}}\frac{1}{B_{T}} W\left(\frac{x + 2\pi j}{B_{T}}\right),$
 with $B_{T}$ being the positive  bandwidth parameter. Function  $W$ on $\mathbb{R}$ is  real, and  such that
$W$ is positive, even, and bounded in variation, with $W(x) =0$ if  $ |x|\geq 1,$
$\int_{\mathbb{R}} \left|W(x)\right|^{2}dx <\infty,$ and  $\int_{\mathbb{R}} W(x)dx =1$  (see \cite{Panaretos13}).

Under
\textbf{Condition C1}, the minimum contrast estimators
 $\widehat{\alpha}_{T}(n, \theta_{0} )=$\linebreak $\alpha (n,\widehat{\theta}_{T}),$ $n>n_{0},$ are computed  as given in equations (\ref{eqdeftildew})--(\ref{mfetheta}). Hence, the mixed SRD--LRD kernel estimator of the spectral density operator is  given by
\begin{eqnarray}&&\widehat{f}_{\omega}^{(T)}(\mathbf{x},\mathbf{y})=\widehat{f}_{\omega  }^{(T,n_{0})}(\mathbf{x},\mathbf{y})+\widehat{f}_{\omega  }^{(T,(n_{0}+1,\infty))}(\mathbf{x},\mathbf{y},\widehat{\theta}_{T})\nonumber\\
&&=\left[\frac{2\pi}{T}\right]\sum_{t\in [1,T-1]}  W^{(T)}\left(\omega - \frac{2\pi t}{T} \right) p_{2\pi t/T}^{(T, n_{0})}(\mathbf{x},\mathbf{y})\nonumber\\
&&+
\sum_{n=n_{0}+1}^{\infty}\frac{B_{n}^{\eta }(0)M_{n}(\omega )}{\left[4(\sin(\omega /2))^{2}\right]^{\alpha (n,\widehat{\theta}_{T})/2}}\sum_{j=1}^{\delta (n,d)}S_{n,j}^{d}(\mathbf{x})S_{n,j}^{d}(\mathbf{y}).
\label{mixest}
\end{eqnarray}

\setcounter{section}{4} \setcounter{equation}{0} 
\section{Simulation study}\label{sec:6}
This section illustrates the results derived in the context of
 multifractionally integrated SPHARMA(p,q) processes (see, e.g., Example 1 in Section 3.3 in \cite{RuizMedina2022}, and \cite{LiRobinsonShang19} in the particular case  of fractionally integrated functional autoregressive moving averages processes). See also \cite{CaponeraEJS} and \cite{CaponeraMarinucci} for the case of SPHARMA(p,q) processes.

 Considering the unit sphere $\mathbb{S}_{2}$ in $\mathbb{R}^{3},$ and hence,  the separable Hilbert space $H=L^{2}(\mathbb{S}_{2},
 d\nu),$   a   multifractionally integrated SPHARMA(p,q) process \linebreak $\left\{\widetilde{X}_{t},\ t\in \mathbb{Z}\right\}$ is defined by  the following  state space equation:
\begin{eqnarray}&&
(\mathcal{I}_{L^{2}(\mathbb{S}_{2},d\nu)}-B)^{\mathcal{A}_{\theta }/2}(\boldsymbol{\Phi}_{p}(B)\widetilde{X}_{t})(\mathbf{x})=
\varepsilon_{t}(\mathbf{x})
+
(\boldsymbol{\Psi}_{q}(B)\varepsilon_{t})(\mathbf{x}),\quad \mathbf{x}\in \mathbb{S}_{2},\ t\in \mathbb{Z}.\nonumber\\
\label{ex2modeldef}
\end{eqnarray}
\noindent
\textbf{Condition C1} holds,  since
this model constitutes a particular case  $H=L^{2}(\mathbb{S}_{2},
 d\nu)$  of the more general formulation
 given  in Section 3.3 in \cite{RuizMedina2022} in the  functional time series framework.
Then, as before, $\mathcal{A}_{\theta } $ is the LRD operator satisfying $l_{\alpha }(\theta ), L_{\alpha }(\theta )\in (0,1/2).$  In particular,
equation  (\ref{sqishsdo}) holds.
Here, operator $(\mathcal{I}_{L^{2}(\mathbb{S}_{2},d\nu)}-B)^{\mathcal{A}_{\theta }/2}$   is interpreted
as  in \cite{Charac14};  \cite{Rackauskasv2}, and
 $B$ is a difference operator satisfying
$
E\|B^{j}\widetilde{X}_{t}-\widetilde{X}_{t-j}\|_{H}^{2}=0,$ for $t,j\in \mathbb{Z}.$

 Here, $\boldsymbol{\Phi}_{p}(B)=1-\sum_{k=1}^{p}\Phi_{k}B^{k},$ and $\boldsymbol{\Psi}_{q}(B)=\sum_{l=1}^{q}\Psi_{l}B^{l},$
where operators $\Phi_{k},$ $k=1,\dots,p,$ and $\Psi_{l},$ $l=1,\dots,q,$ are assumed to be invariant positive self-adjoint bounded operators on $L^{2}(\mathbb{S}_{2},d\nu)$  admitting the representation:
\begin{eqnarray}
\Phi_{k}&=&\sum_{n\in \mathbb{N}_{0}}\lambda_{n}(\Phi_{k})\sum_{j=1}^{\delta (n,d)}S_{n,j}^{d}\otimes S_{n,j}^{d},\  k=1,\dots,p\nonumber\\
\Psi_{l}&=&\sum_{n\in \mathbb{N}_{0}}\lambda_{n}(\Psi_{l})\sum_{j=1}^{\delta (n,d)}S_{n,j}^{d}\otimes S_{n,j}^{d},\  l=1,\dots,q.
\label{de}
\end{eqnarray}
\noindent Here, $d=2,$ $\delta (n,2)=2n+1,$  $\mathbb{M}_{2}=\mathbb{S}_{2},$  $\omega_{2}=|\mathbb{S}_{2} |=4\pi.$
Also,  $\Phi_{p,n}(z)=1-\sum_{k=1}^{p}\lambda_{n}(\Phi_{k})z^{k}$ and $\Psi_{q,n}(z)=\sum_{l=1}^{q}\lambda_{n}(\Psi_{l})z^{l},$
 $n\in \mathbb{N}_{0},$
 have not common
roots, and  their roots are outside of the unit circle (see  Corollary 6.17 in \cite{Beran17}).
By  similar arguments, as justified in Sections 3.3 and 6 in   \cite{RuizMedina2022}, conditions (\ref{eqdeftildew})--(\ref{identityintegralop}) are satisfied by this model family.
Hence, the results given in Theorems \ref{th1} and \ref{th2}  hold.

  Theorem 1 and Section 1.1 of  the Supplementary Material have been applied in the simulation methodology  adopted in  the generations of some special cases within  the family of multifractionally integrated SPHARMA(p,q) processes.
   In particular, a spherical uniform pole $\mathbf{U}=\mathbf{u}_{0}$ (see Figure \ref{pole}) in the involved zonal functions (defined from  the Legendre polynomials $\left\{ P_{n},\ n\in \mathbb{N}_{0}\right\}$) has been  independently generated  of
      the $L^{2}(\mathbb{S}_{2},d\nu)$--valued Gaussian strong--white noise   innovation process   with variance $\sigma^{2}_{\varepsilon}=\sum_{n\in \mathbb{N}_{0}}[\sigma_{n}^{\varepsilon}]^{2}.$ Table 3 of the Supplementary Material provides  the specific parametric scenarios considered. Under such scenarios, Section 3 of the Supplementary Material  displays generations of some special cases of multifractionally integrated  SPHARMA(p,q) processes, for $p=1,3$ and $q=0,$ and  $p=1,3,$  and $q=1$ in equation (\ref{ex2modeldef}). The particular cases analyzed of   LRD operator eigenvalue sequences     can be found in Table 1 of the Supplementary Material.
 In the implementation of the minimum contrast estimation methodology, we consider $100$ candidate systems of  parametric    eigenvalues for the LRD operator (see Table 2 in the Supplementary Material).
   These candidate sets are displayed in Figure \ref{figlrdeigcand} jointly with the true eigenvalue sequence.
\begin{figure}[!htb]
\begin{center}
\includegraphics[height=5cm, width=4.5cm]{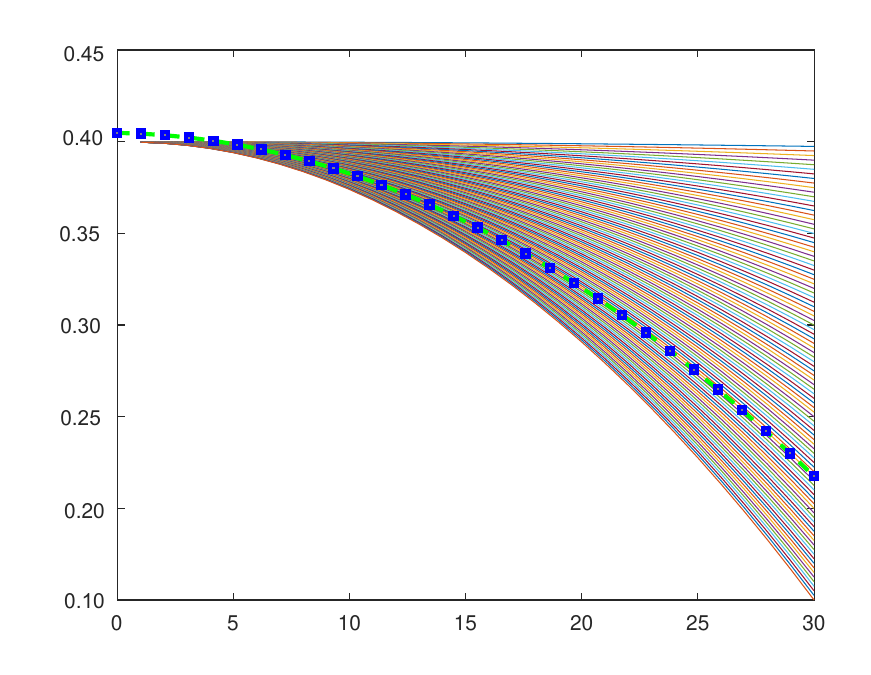}
\includegraphics[height=5cm, width=4.5cm]{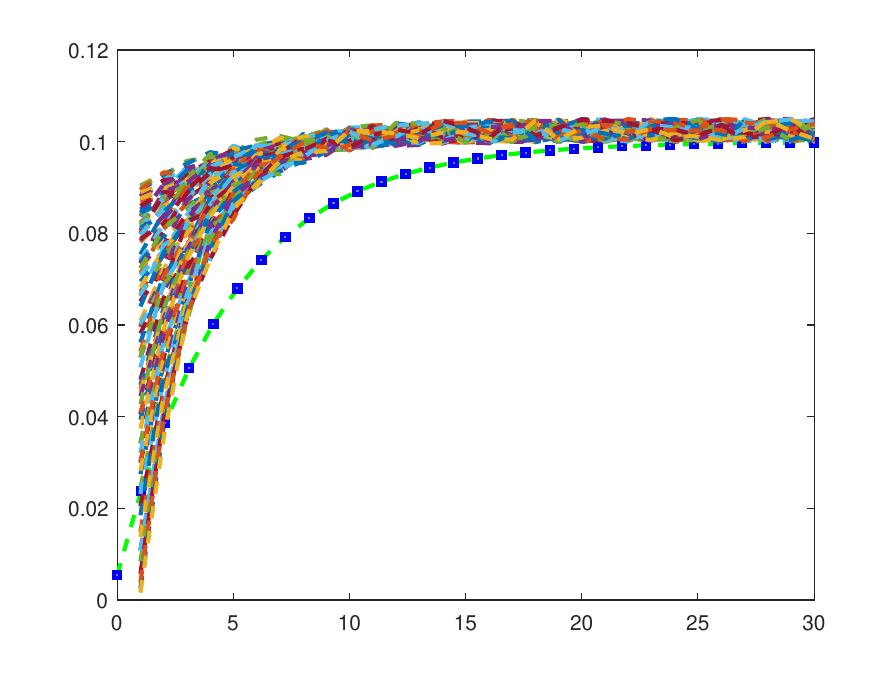}
\caption{ \scriptsize{The    first $30$ eigenvalues  $\alpha (n,\theta_{0}),$   $n=1,\dots,30$  of the  LRD operator $\mathcal{A }_{\theta_{0} }$ (dotted blue--green line), and the set of $100$ parametric candidates $\alpha (n,\theta_{i} ),$  $n=1,\dots,30,$  $i=1,\dots,100,$   for decreasing  LRD operator eigenvalues  (left--hand side). The  first $30$ eigenvalues   $\alpha (n,\vartheta_{0}),$  $n=1,\dots,30$   (dotted blue--green line)   of the  LRD operator $\mathcal{A }_{\vartheta_{0} },$  and the set of $100$ parametric candidates $\alpha (n,\vartheta_{i} ),$  $n=1,\dots,30,$  $i=1,\dots,100,$  for increasing LRD operator eigenvalues  (right--hand side)}}\label{figlrdeigcand}
\end{center}
\end{figure}

        The functional spherical values at different times of the generated  multifractionally integrated    SPHAR(3) process are displayed
      for $M=10$  under decreasing  eigenvalue sequence of the LRD operator   in Figure  \ref{sphar3}. As commented,  the remaining special cases are plotted in Section 3 of the Supplementary Material.  Additionally, generations under LRD operator with eigenvalues vanishing for $n\geq n_{0}=16,$ i.e., the   LRD--SRD case, is showed in Section 3.1 of the Supplementary Material for multifractionally integrated SPHARMA(1,1) process.

It can be observed in all generations,  displayed on a spherical  angular neighborhood of the selected pole (see Figure \ref{pole}), that  persistent in time of the local spherical behavior   is stronger   when a positive bounded non--decreasing sequence of eigenvalues of  LRD operator is considered, since  the highest positive eigenvalues are displayed at high discrete Legendre frequencies.  For decreasing eigenvalue sequence  of the  LRD operator,   the opposite LRD effect is observed through  spherical scales.   One can also observe a  symmetry in the evolution in time of the spherical patterns, under both, decreasing  and increasing  LRD operator eigenvalue sequences,  when spatial  spherical SRD is observed. An increasing level of   local linear correlation in space at intermediate times  is displayed  when autoregression of order $3$ is considered.
\begin{figure}[H]
\centering\includegraphics[scale=0.4]{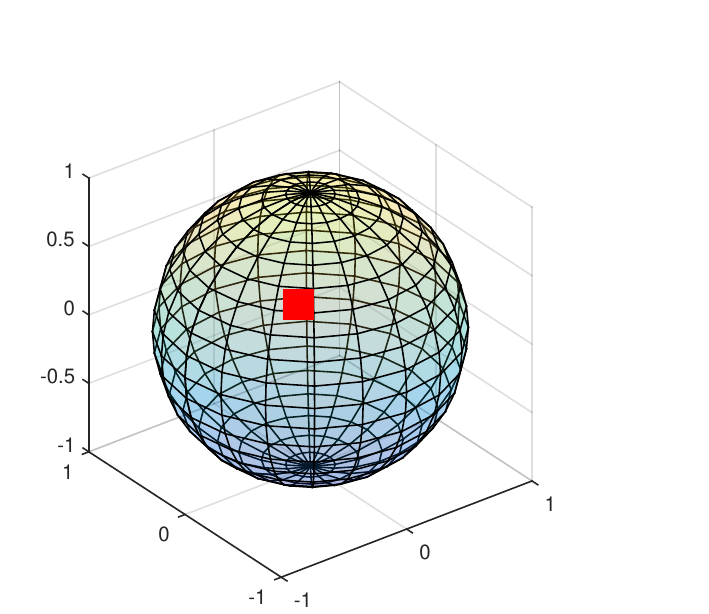}
\caption{ \scriptsize{The selected pole in the zonal functions}}\label{pole}
\end{figure}
\noindent
\begin{figure}[H]
\centering
\includegraphics[scale=0.7]{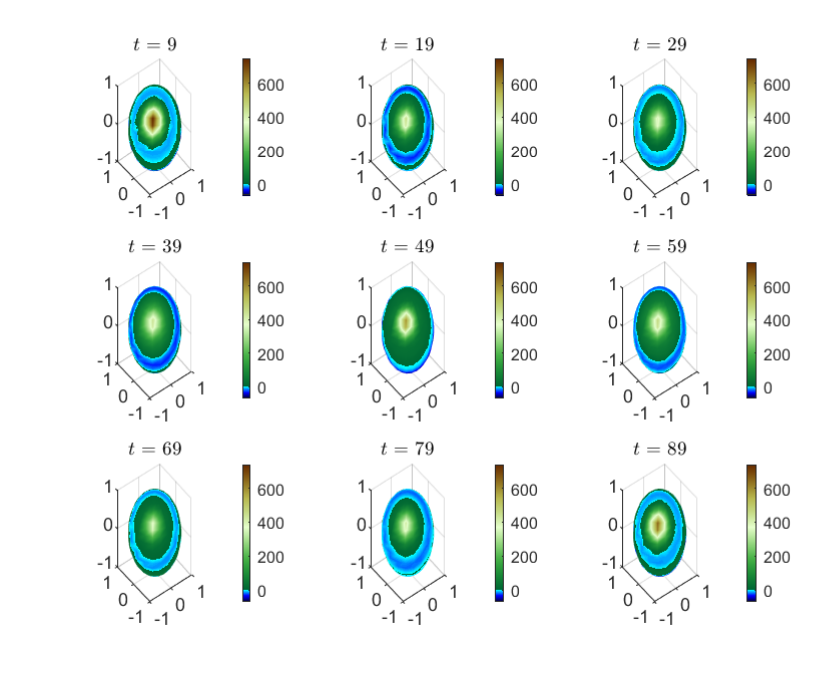}
\caption{ \scriptsize{Generations  of multifractionally integrated  SPHAR(3) process $\widetilde{X}$ at  times $t=9,19,29, 39, 49, 59, 69, 79, 89,$  projected into the direct sum  $\oplus_{n=1}^{10}H_{n}$ of eigenspaces of the spherical Laplace Beltrami operator on $L^{2}(\mathbb{S}_{2}, d\nu)$ under decreasing eigenvalue sequence of the LRD operator}}
\label{sphar3}
\end{figure}

 \subsection{Minimum contrast estimation results}
\label{mcess}
We now display the minimum contrast estimation results referred to the multifractionally integrated  SPHAR(3) process, when   LRD operator  $\mathcal{A}_{\theta }$ has  decreasing sequence of eigenvalues as plotted at the left--hand side of Figure \ref{figlrdeigcand}, under a truncation order $M=30.$   See  Section 4 of the  Supplementary Material, for the remaining cases of multifractionally integrated  SPHAR(1),  SPHAR(3),  SPHARMA (1,1), SPHARMA(3,1) processes.
Specifically, Figure \ref{figsqrsdo} displays, for $i=1,\dots,100$ frequency nodes, operator $|\mathcal{M}_{\omega _{i}} |^{1/2}$    (left--hand side), and  $\left|4(\sin(\omega_{i} /2))^{2}\right|^{-\mathcal{A}_{\theta}/4} |\mathcal{M}_{\omega _{i}}|^{1/2}$ (right--hand side), projected into $H_{n},$  $n=1,\dots, 30.$  These factors have been computed in  the implementation of the minimum contrast estimation methodology in the functional spectral domain.
Figure \ref{SDO} provides the projected  spectral density operator kernel  at temporal Fourier frequencies $\omega = -\pi+0.0628(10)i,$ $i=1,3,7,10$ in the interval $[-\pi,\pi].$ Its empirical counterpart is given in Figure \ref{figsdp} where
the modulus of the projected  fDFT, and  tapered periodogram operator kernel of the generated data at temporal Fourier  frequency zero are displayed. The last one  over a grid of $30\times 30$ Legendre frequencies. The projected empirical contrast operator $\mathbf{U}_{T,\theta }$ in equation (\ref{eco}) is then computed. Its minimization is performed in the bounded operator norm.
\begin{figure}[H]
\begin{center}
\includegraphics[height=0.15\textheight, width=0.4\textwidth]{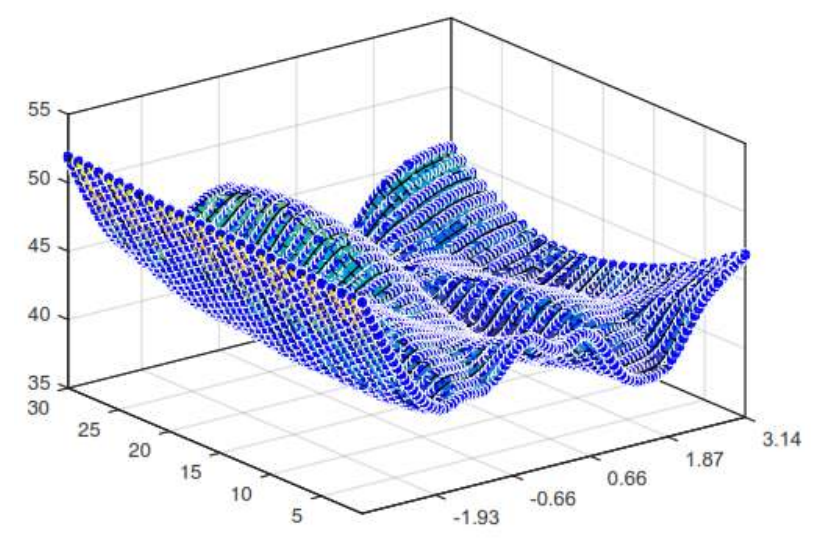}
\includegraphics[height=0.15\textheight, width=0.4\textwidth]{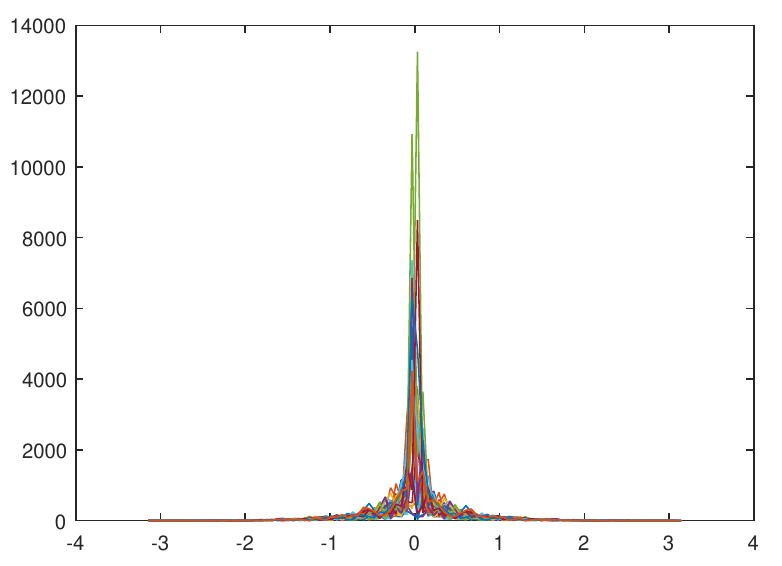}
\caption{ \scriptsize{The square root (s.r.) of the  modulus of the projected regular factor of the spectral density operator   (left--hand side), and
the product of this factor with the  s.r.  of the modulus of the  singular  factor   of the spectral density operator  (right--hand side) at Legendre frequencies $n=1,\dots ,30$}}\label{figsqrsdo}
\end{center}
\end{figure}
\begin{figure}[H]
\begin{center}
\includegraphics[height=0.25\textheight, width=0.45\textwidth]{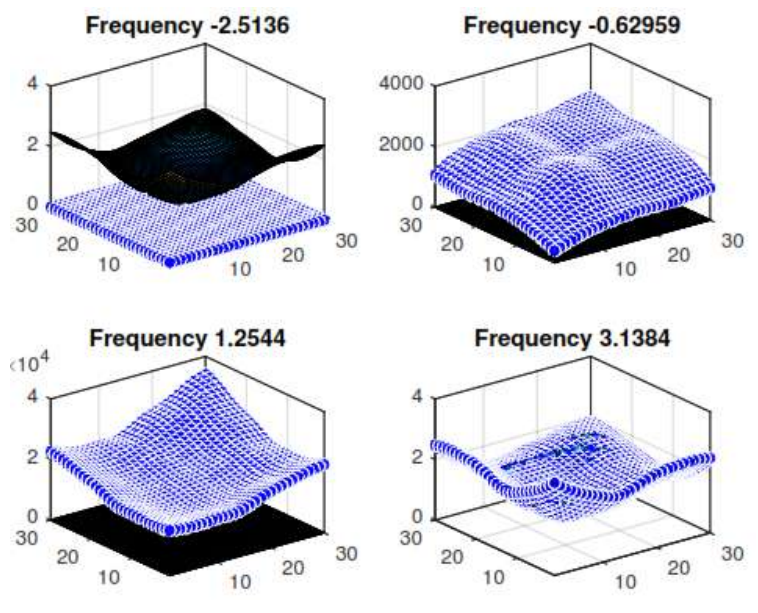}
\caption{ \scriptsize{Kernel of spectral density operator    $\mathcal{F}_{\omega ,\theta_{0}}$ projected into $H_{n}\otimes H_{n},$  $n=1,\dots,30,$   for temporal Fourier frequencies $\omega = -\pi+0.0628(10)i,$ $i=1,3,7,10$}}\label{SDO}
\end{center}
\end{figure}
\begin{figure}[H]
\begin{center}
\includegraphics[height=0.15\textheight, width=0.3\textwidth]{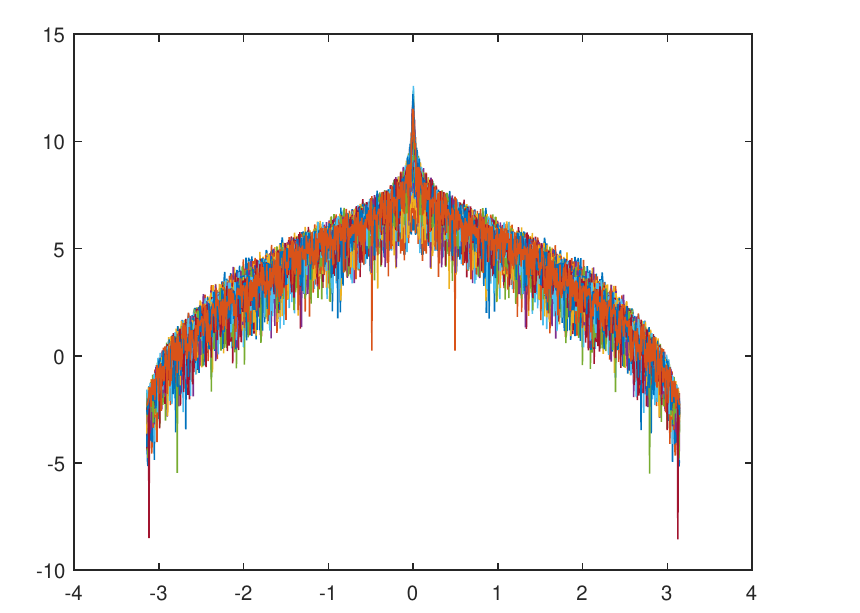}
\includegraphics[height=0.15\textheight, width=0.3\textwidth]{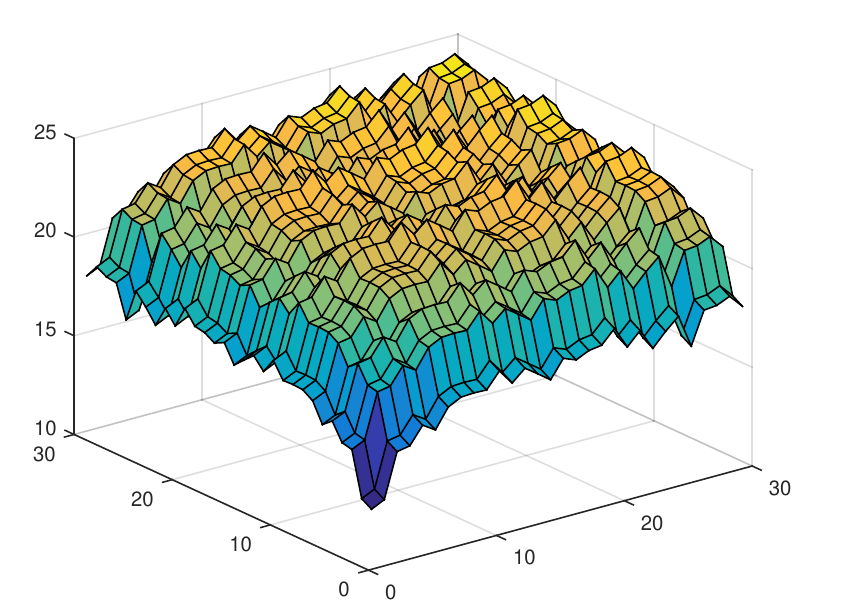}
\caption{ \scriptsize{The modulus of the projected  fDFT of the data, $30$ spectral curves corresponding to $30$ discrete Legendre  frequencies (left--hand side).  Tapered periodogram kernel at zero   temporal Fourier frequency over a $30\times 30$  grid of discrete Legendre  frequencies (right--hand side).
Both, fDFT and tapered periodogram kernel, are displayed at the logarithm scale }}\label{figsdp}
\end{center}
\end{figure}

Under the LRD operator scenario plotted   at the left--hand side of  Figure \ref{figlrdeigcand} (see also Table 1 of the Supplementary Material),    Figure \ref{fhistT1000sphar3}
displays the  histograms of the temporal mean of the  empirical  absolute errors from $R=100, 2000, 5000$ independent generations of
 a functional
sample of size $T=1000$ of  multifractionally integrated  SPHAR(3) process.  The empirical analysis performed for the remaining  cases  under decreasing LRD operator eigenvalue sequence, from   $R=100, 2000,5000$ independent generations of each one of the   functional samples of size  $T=50, 500,1000$ considered, are given in Sections 4.1, 4.3, 4.5, and 4.7 of the Supplementary Material (see also Sections 4.2, 4.4, 4.6 and 4.8 of the Supplementary Material for the same analysis under increasing  LRD operator eigenvalue sequence as plotted at the right--hand side of Figure \ref{figlrdeigcand}).
The results are displayed  for the eigenspaces $H_{n},$ $n=1,5, 10, 15, 20, 25, 30.$
The empirical probabilities
\begin{equation}
\widehat{\mathcal{P}}\left( \|f_{n,\theta_{0}}(\cdot )-f_{n,\widehat{\theta}_{T}}(\cdot )\|_{L^{1}([-\pi,\pi])}>\varepsilon_{i}\right),\ i=1,\dots, 100,
\label{ep}
\end{equation}
\noindent
are computed   under decreasing and increasing  LRD operator  eigenvalue  sequence, for the  multifractionally integrated SPHAR(3) model in Figure  \ref{femprobsph3}, considering
$R=100, 2000, 5000$ independent generations of each one of the  functional samples of size
 $T=50, 500, 1000.$
 In all scenarios displayed in Section 4 of the Supplemntary Material, the empirical  probabilities  (\ref{ep})  are computed  from projection into the eigenspaces $H_{n},$ $n=1,\dots, 30$   of the   Laplace Beltrami operator, and for a grid of $100$ thresholds in the interval $(0, 0.1).$

When the number of repetitions increases,  the tails of the  empirical distribution  of the temporal mean of the empirical  absolute errors become lighter, being  the  shape of the  empirical distribution closer to  a   Gaussian distribution. This effect associated with the increasing of the sample size  is more pronounced  at highest spatial resolution levels (i.e., at high discrete Legendre  frequencies). In particular, the empirical distribution of the temporal mean of the empirical absolute errors becomes more concentrated around its mean faster at  higher than at coarser  resolution levels.   A similar behavior is observed in  the  asymmetric empirical distribution of quadratic error temporal mean,  displaying a larger degree of dispersion,  due to the stronger effect of the functional  spectral singularity at zero frequency.
\begin{figure}[H]
     \centering
     \begin{subfigure}[t]{0.3\textwidth}
         \centering
         \includegraphics[height=0.135\textheight, width=\textwidth]{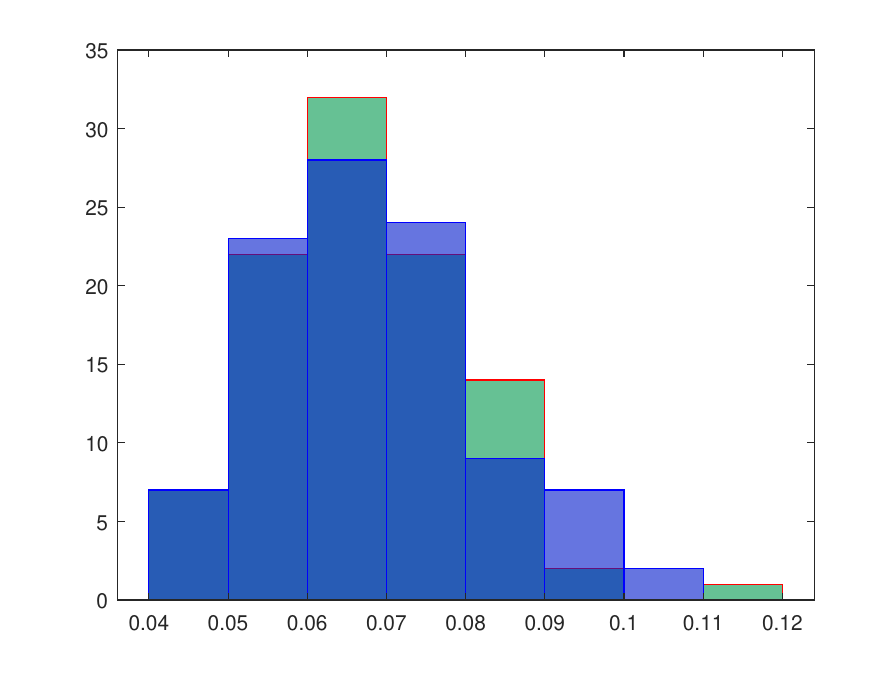}
         \caption{$R=100$, scales 1--5}
         \label{fhistR100es15T1000sphar3}
     \end{subfigure}
     \hfill
     \begin{subfigure}[t]{0.3\textwidth}
         \centering
         \includegraphics[height=0.135\textheight, width=\textwidth]{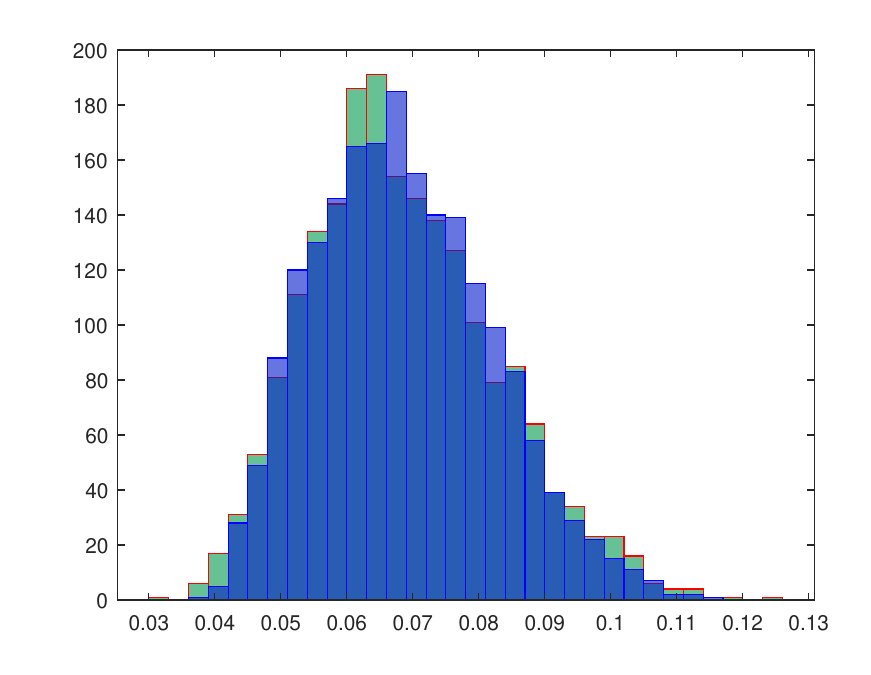}
         \caption{$R=2000$, scales 1--5}
         \label{fhistR2000es15T1000sphar3}
     \end{subfigure}
     \hfill
     \begin{subfigure}[t]{0.3\textwidth}
         \centering
         \includegraphics[height=0.135\textheight, width=\textwidth]{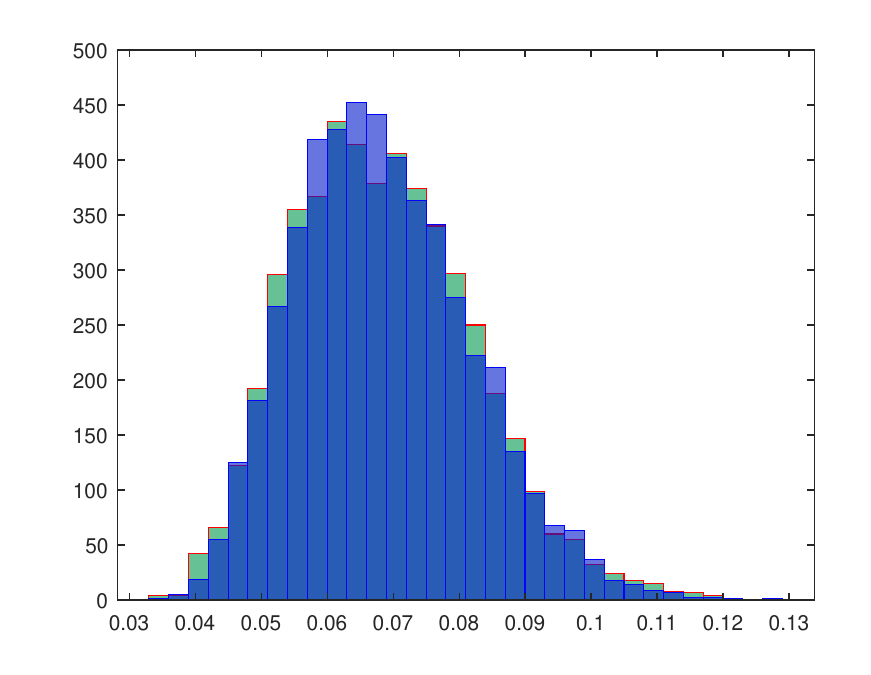}
         \caption{$R=5000$, scales 1--5}
         \label{fhistR5000es15T1000}
     \end{subfigure}
     \begin{subfigure}[c]{0.3\textwidth}
         \centering
         \includegraphics[height=0.135\textheight, width=\textwidth]{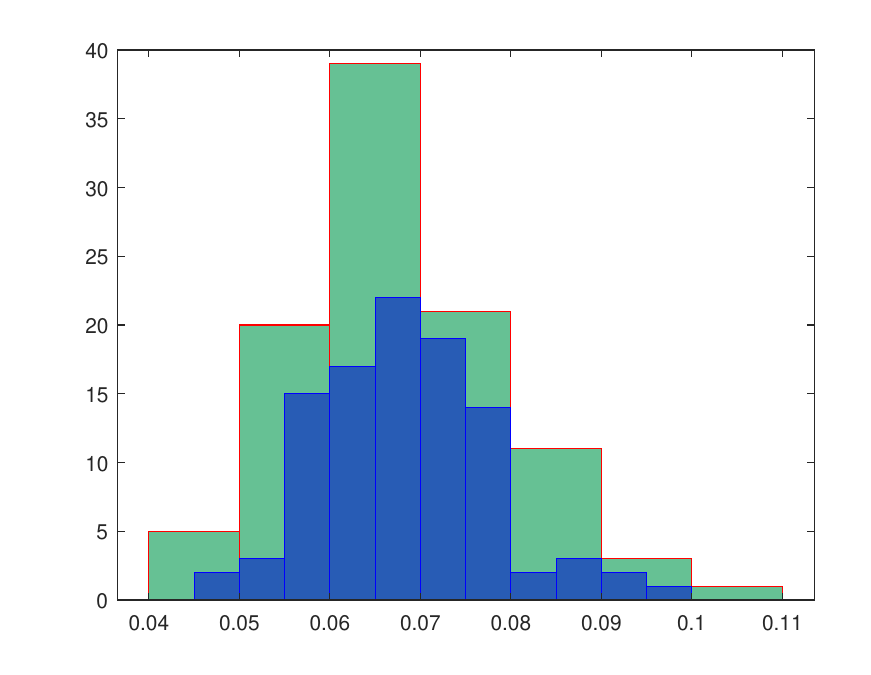}
         \caption{$R=100$, scales 10--15}
         \label{fhistR100es1015T1000sphar3}
     \end{subfigure}
     \hfill
     \begin{subfigure}[c]{0.3\textwidth}
         \centering
         \includegraphics[height=0.135\textheight, width=\textwidth]{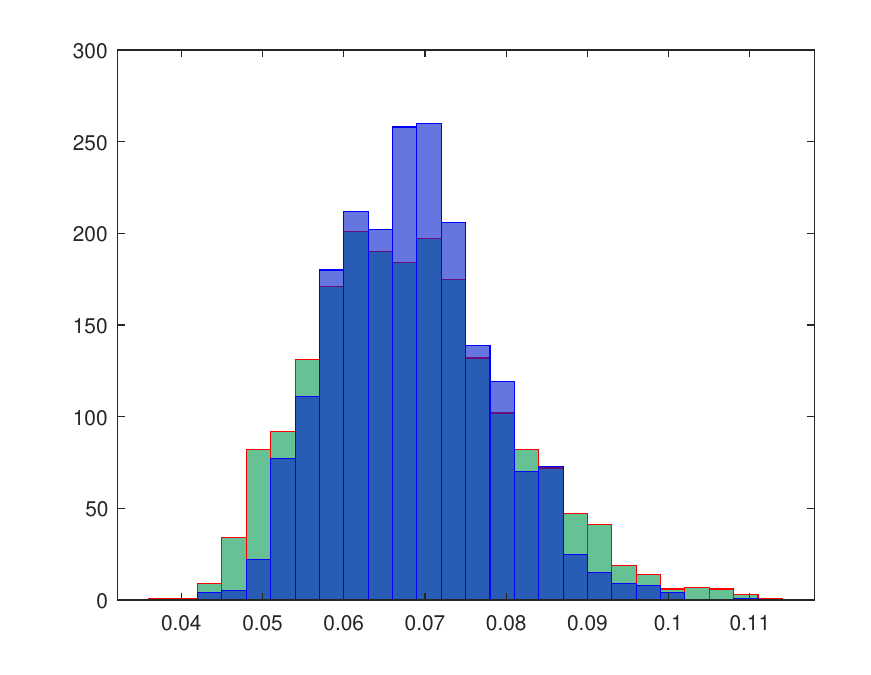}
         \caption{$R=2000$, scales 10--15}
         \label{fhistR2000es1015T1000sphar3}
     \end{subfigure}
     \hfill
     \begin{subfigure}[c]{0.3\textwidth}
         \centering
         \includegraphics[height=0.135\textheight, width=\textwidth]{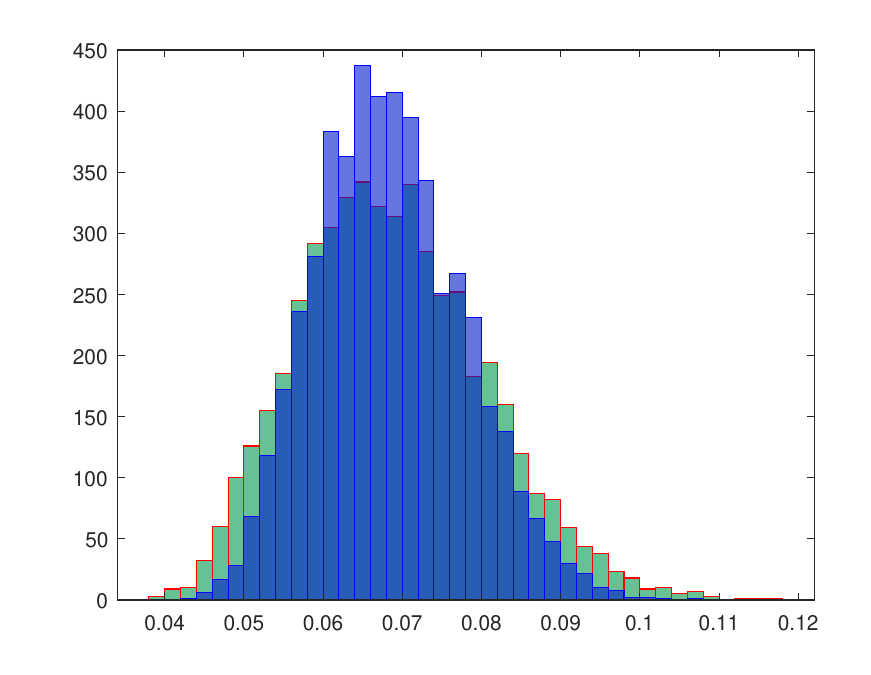}
         \caption{$R=5000$, scales 10--15}
         \label{fhistR5000es1015T1000sphar3}
     \end{subfigure}
        \begin{subfigure}[c]{0.3\textwidth}
         \centering
         \includegraphics[height=0.135\textheight, width=\textwidth]{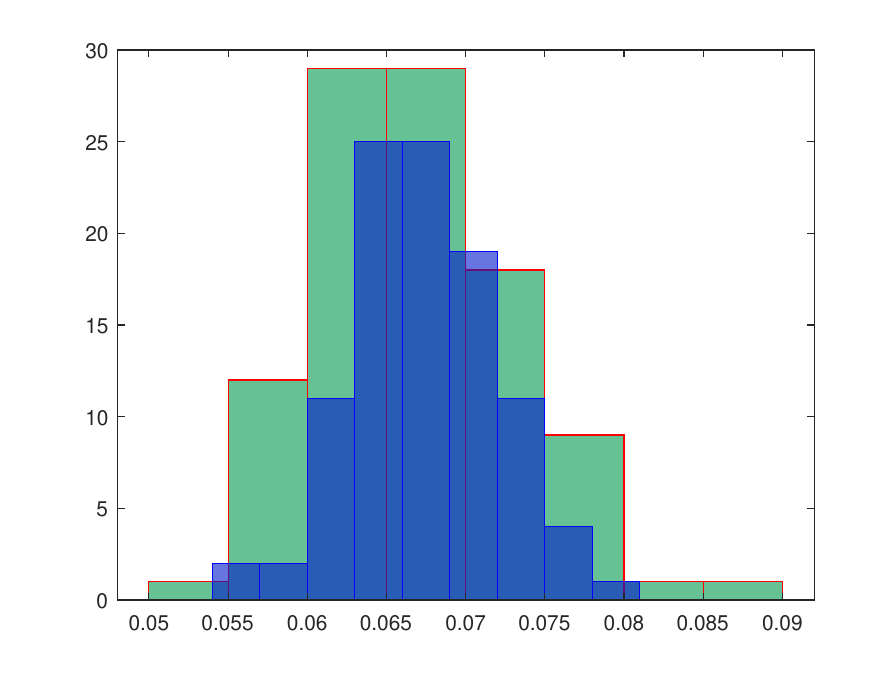}
         \caption{$R=100$, scales 20--25}
         \label{fhistR100es2025T1000sphar3}
     \end{subfigure}
     \hfill
     \begin{subfigure}[c]{0.3\textwidth}
         \centering
         \includegraphics[height=0.135\textheight, width=\textwidth]{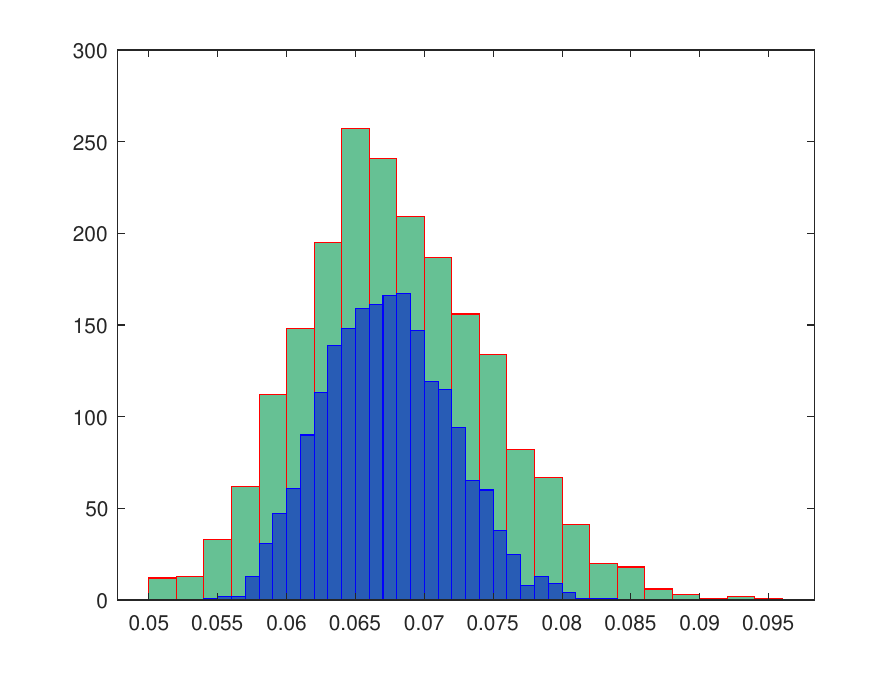}
         \caption{$R=2000$, scales 20--25}
         \label{fhistR2000es2025T1000sphar3}
     \end{subfigure}
     \hfill
     \begin{subfigure}[c]{0.3\textwidth}
         \centering
         \includegraphics[height=0.135\textheight, width=\textwidth]{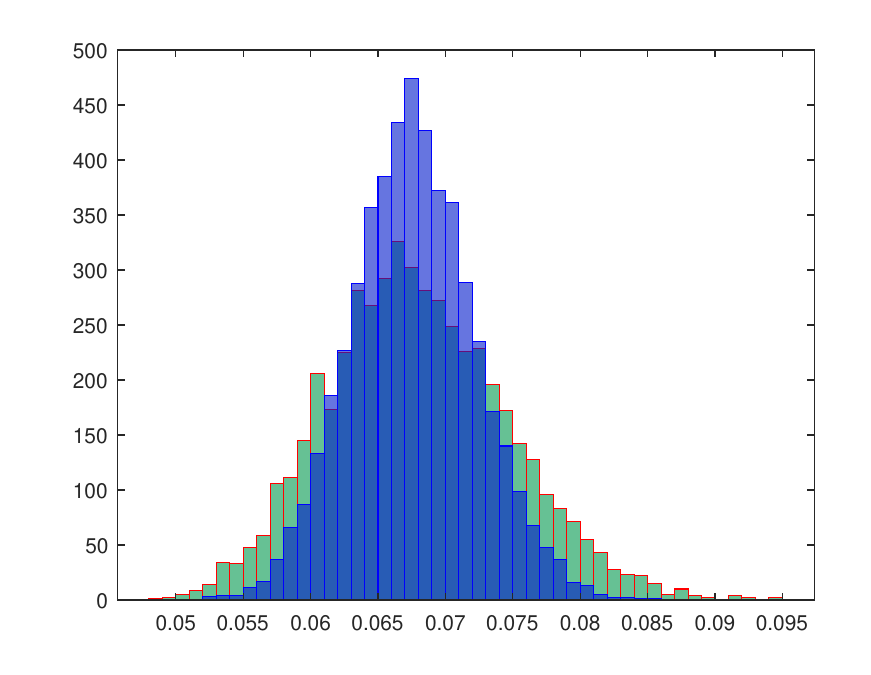}
         \caption{$R=5000$, scales 20--25}
         \label{fhistR5000es2025T1000sphar3}
     \end{subfigure}
      \hfill
     \begin{subfigure}[b]{0.3\textwidth}
         \centering
         \includegraphics[height=0.135\textheight, width=\textwidth]{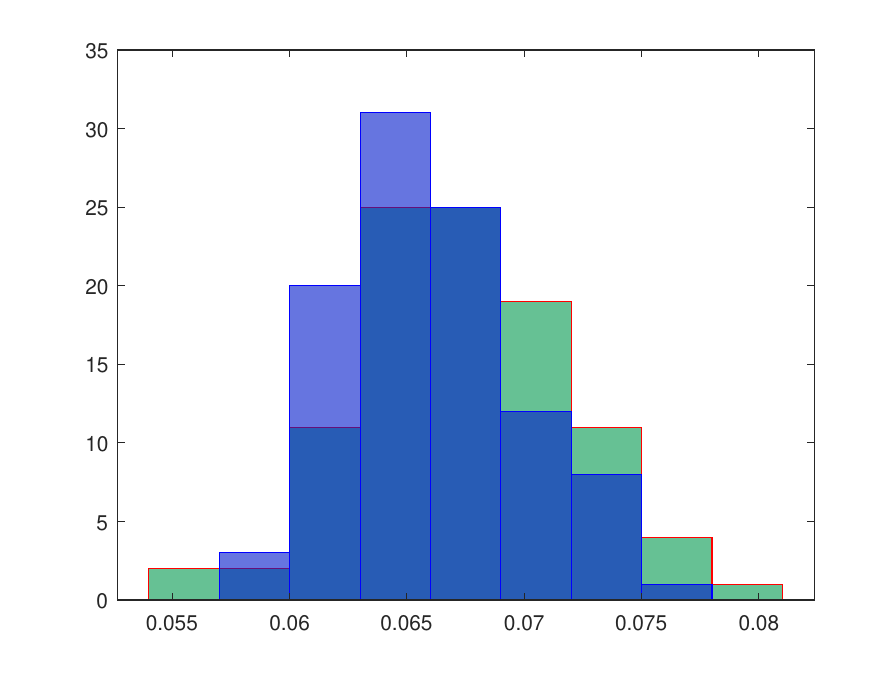}
         \caption{$R=100$, scales 25--30}
         \label{fhistR100es2530T1000sphar3}
     \end{subfigure}
           \hfill
     \begin{subfigure}[b]{0.3\textwidth}
         \centering
         \includegraphics[height=0.135\textheight, width=\textwidth]{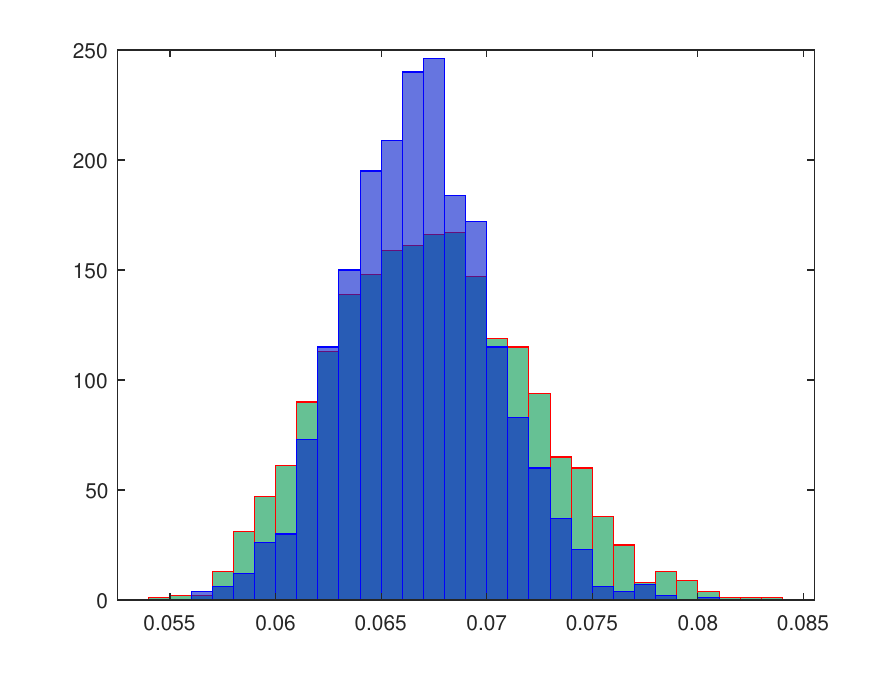}
         \caption{$R=2000$, scales 25--30}
         \label{fhistR2000es2530T1000sphar3}
     \end{subfigure}
           \hfill
     \begin{subfigure}[b]{0.3\textwidth}
         \centering
         \includegraphics[height=0.135\textheight, width=\textwidth]{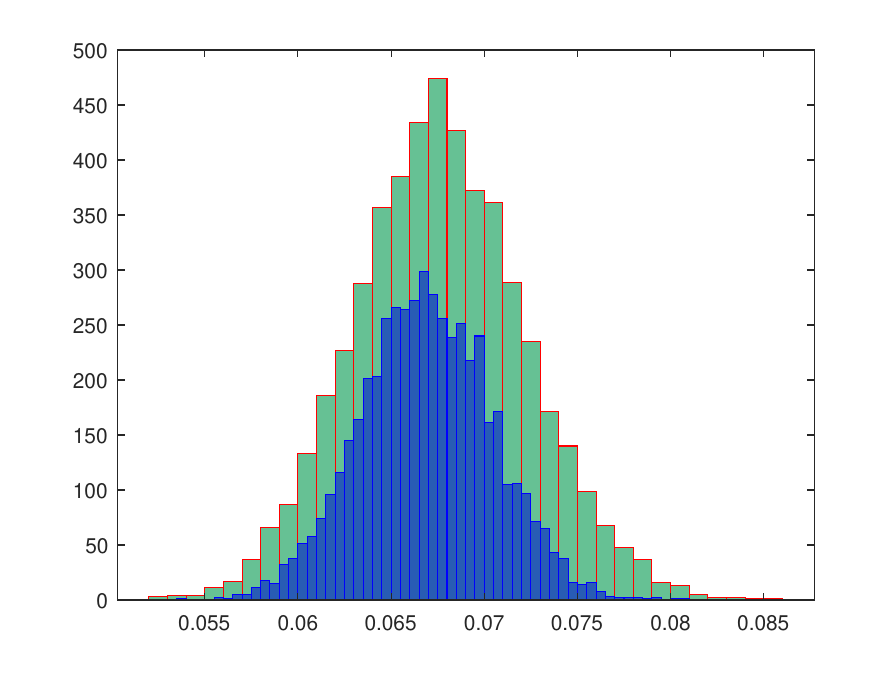}
         \caption{$R=5000$, scales 25--30}
         \label{fhistR5000es2530T1000sphar3}
     \end{subfigure}
        \caption{Histograms of the temporal mean of the empirical absolute errors from a functional  sample of  size $T=1000$ (LRD operator decreasing eigenvalue sequence)}
        \label{fhistT1000sphar3}
\end{figure}
Regarding the empirical probability analysis, one can observe that the increasing of $R$ has a strong effect by spherical scales on the gradually decay of the empirical probabilities to zero over the smallest threshold values  in the grid in the  interval $(0,0.1),$ while increasing  parameter $T$  enlarges the dark blue area, in the contour plots displayed, where  empirical probabilities become zero over  the largest threshold values in the grid analyzed.  Note that the opposite effects of parameter $R$ by spherical scales, in the increasing  and decreasing  LRD operator eigenvalue scenarios, are observed. All the results displayed are affected by a numerical integration error.
       \begin{figure}[H]
     \centering
     \begin{subfigure}[t]{0.3\textwidth}
         \centering
         \includegraphics[width=\textwidth]{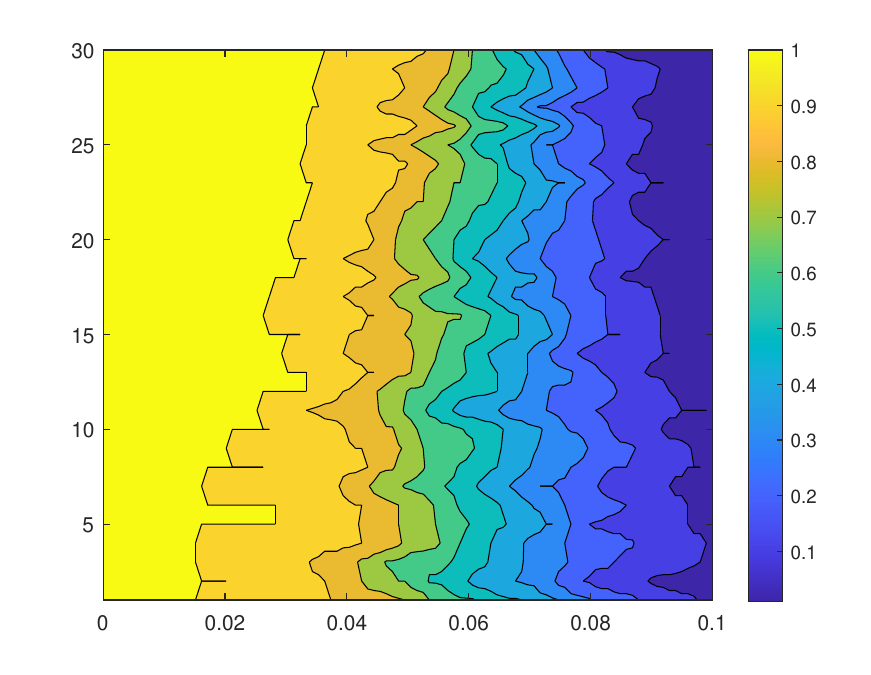}
         \caption{$T=50, R=100$}
         \label{femproT50R100sph3}
     \end{subfigure}
     \hfill
     \begin{subfigure}[t]{0.3\textwidth}
         \centering
         \includegraphics[width=\textwidth]{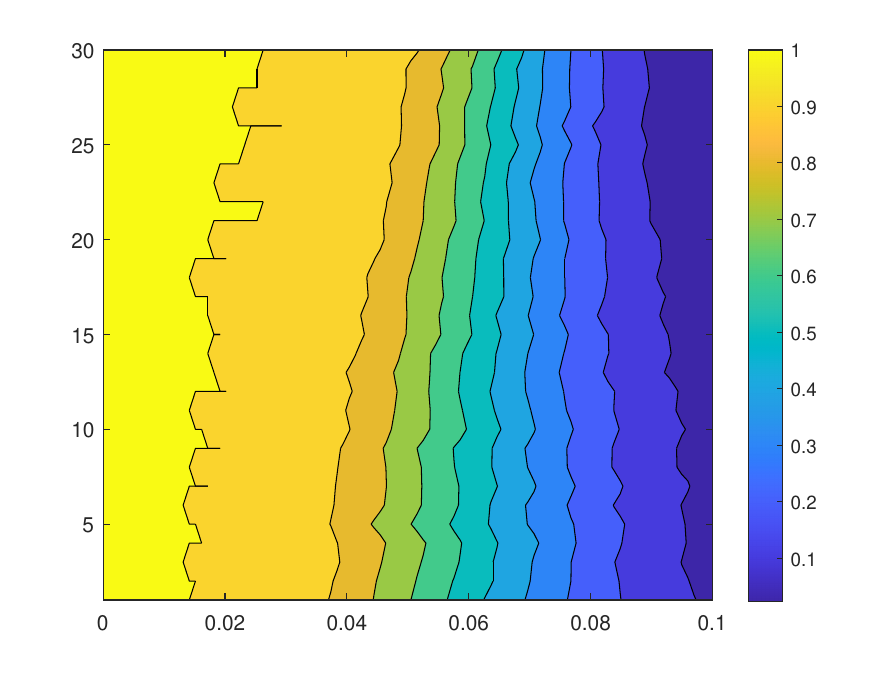}
         \caption{$T=50, R=2000$}
         \label{femproT50R2000}
     \end{subfigure}
     \hfill
     \begin{subfigure}[t]{0.3\textwidth}
         \centering
         \includegraphics[width=\textwidth]{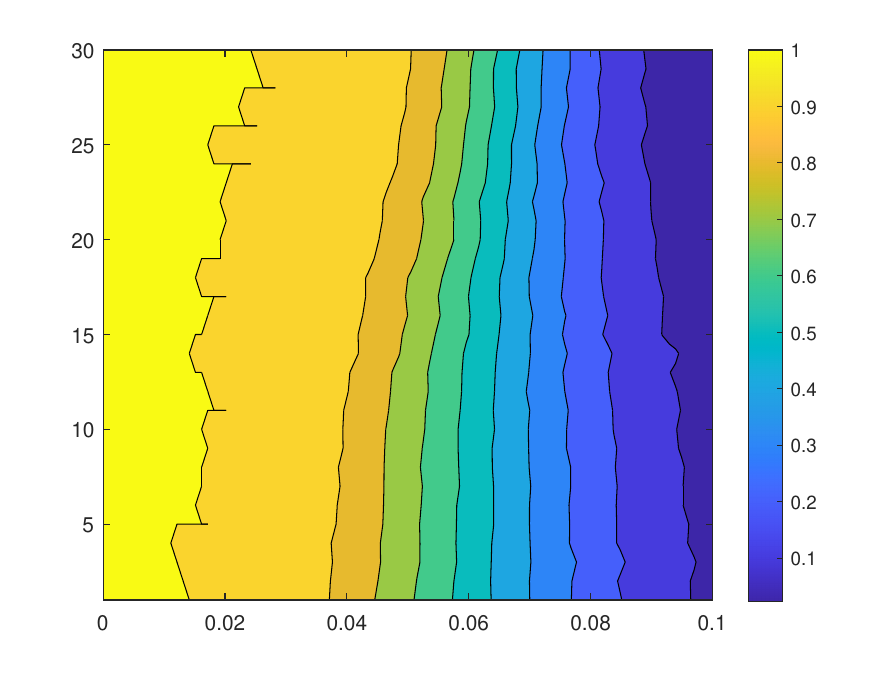}
         \caption{$T=50, R=5000$}
         \label{femproT50R5000}
     \end{subfigure}
     \begin{subfigure}[c]{0.3\textwidth}
         \centering
         \includegraphics[width=\textwidth]{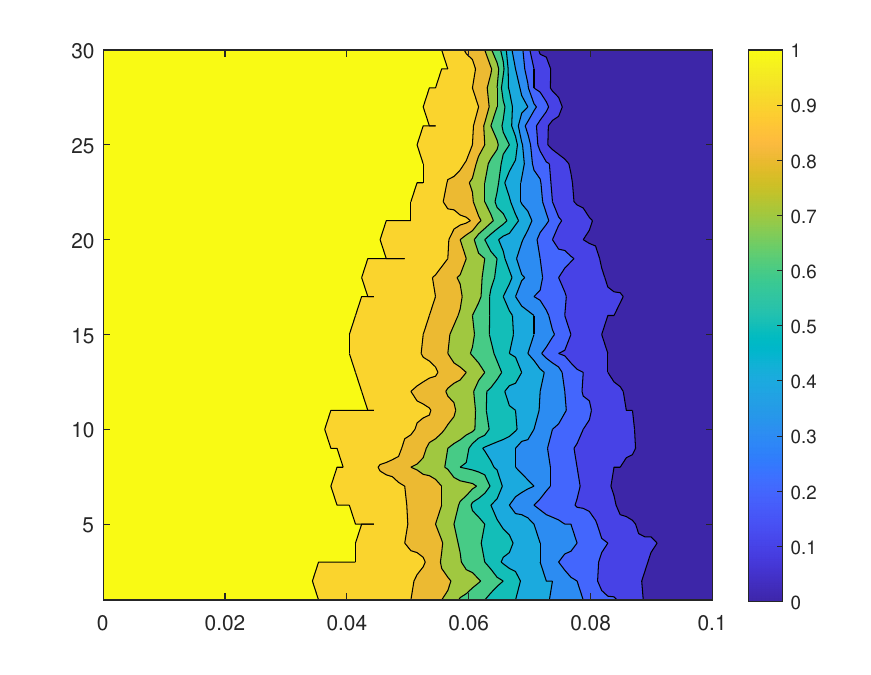}
         \caption{$T=500, R=100$}
         \label{femproT500R100}
     \end{subfigure}
     \hfill
     \begin{subfigure}[c]{0.3\textwidth}
         \centering
         \includegraphics[width=\textwidth]{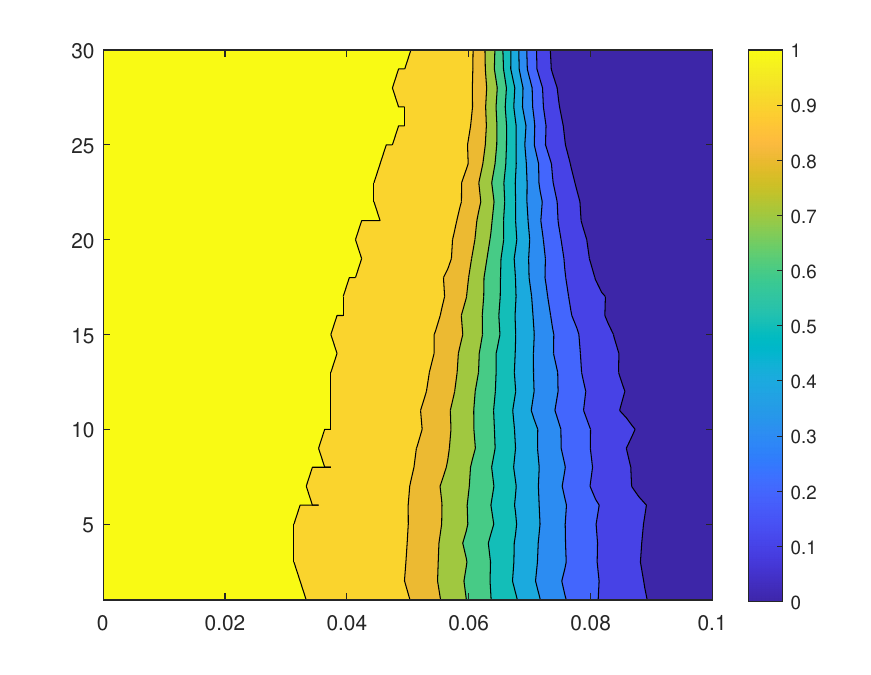}
         \caption{$T=500, R=2000$}
         \label{femproT500R2000}
     \end{subfigure}
     \hfill
     \begin{subfigure}[c]{0.3\textwidth}
         \centering
         \includegraphics[width=\textwidth]{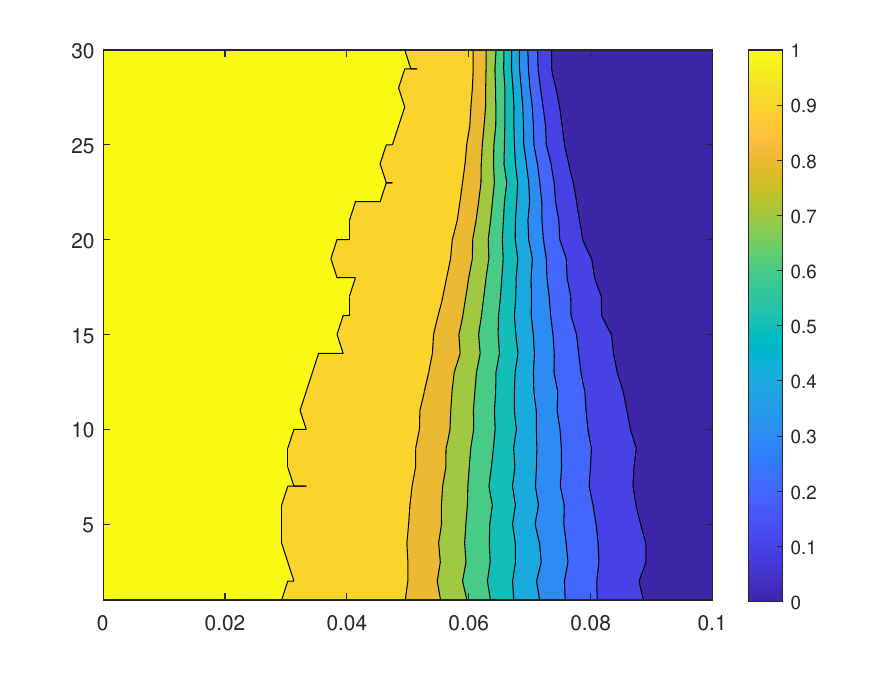}
         \caption{$T=500, R=5000$}
         \label{femproT500R5000}
     \end{subfigure}
        \begin{subfigure}[c]{0.3\textwidth}
         \centering
         \includegraphics[width=\textwidth]{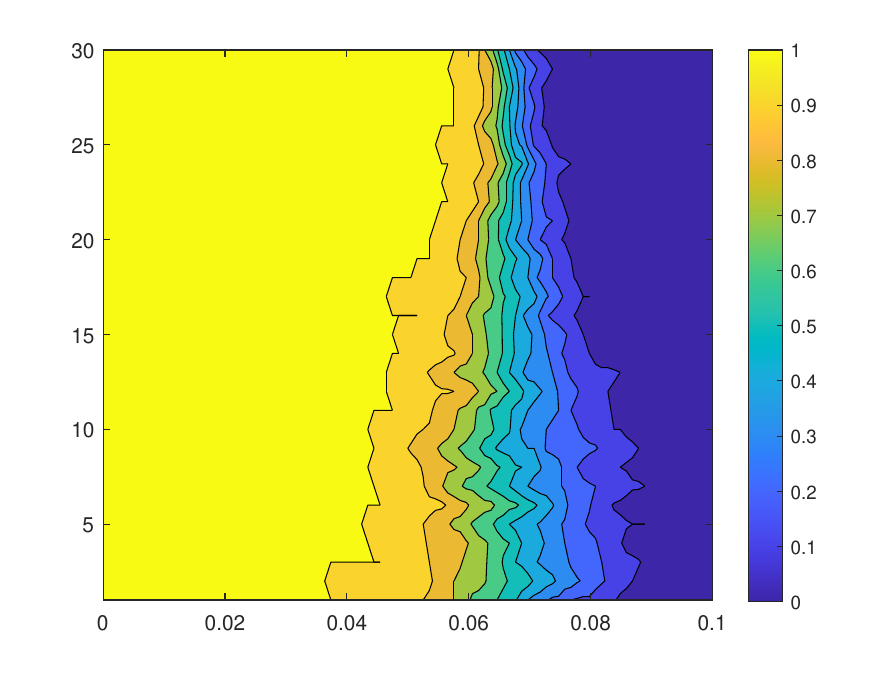}
         \caption{$T=1000, R=100$}
         \label{femproT1000R100}
     \end{subfigure}
     \hfill
     \begin{subfigure}[c]{0.3\textwidth}
         \centering
         \includegraphics[width=\textwidth]{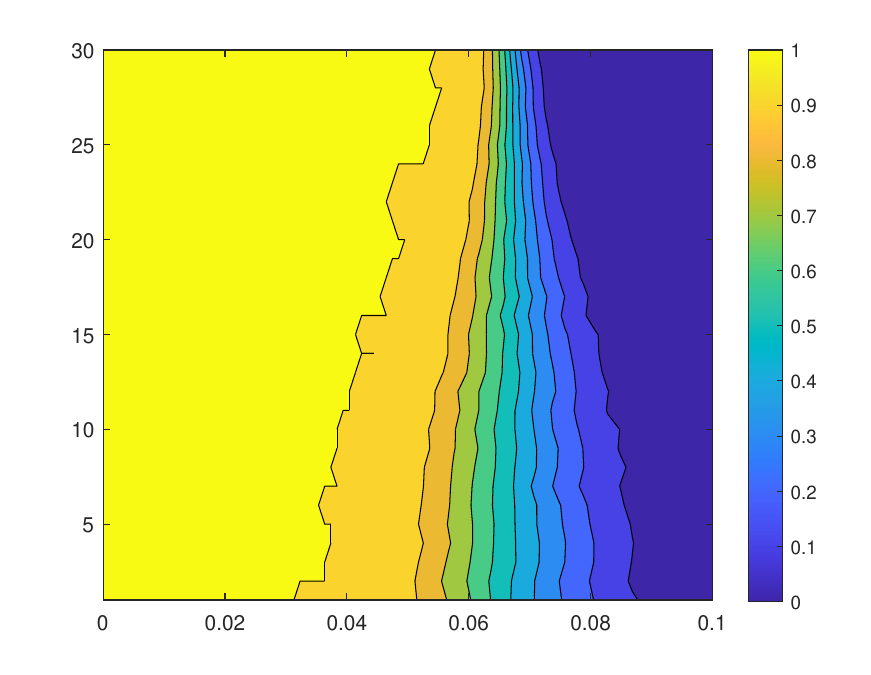}
         \caption{$T=1000, R=2000$}
         \label{femproT1000R2000}
     \end{subfigure}
     \hfill
     \begin{subfigure}[c]{0.3\textwidth}
         \centering
         \includegraphics[width=\textwidth]{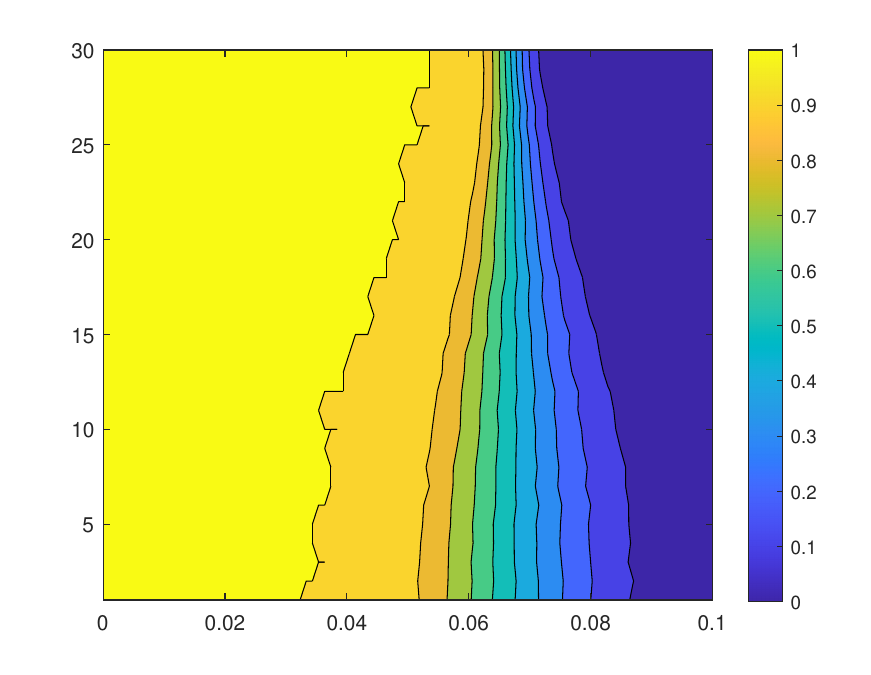}
           \caption{$T=1000, R=5000$}
         \label{femproT1000R5000}
     \end{subfigure}
      \hfill
     \begin{subfigure}[t]{0.3\textwidth}
         \centering
         \includegraphics[width=\textwidth]{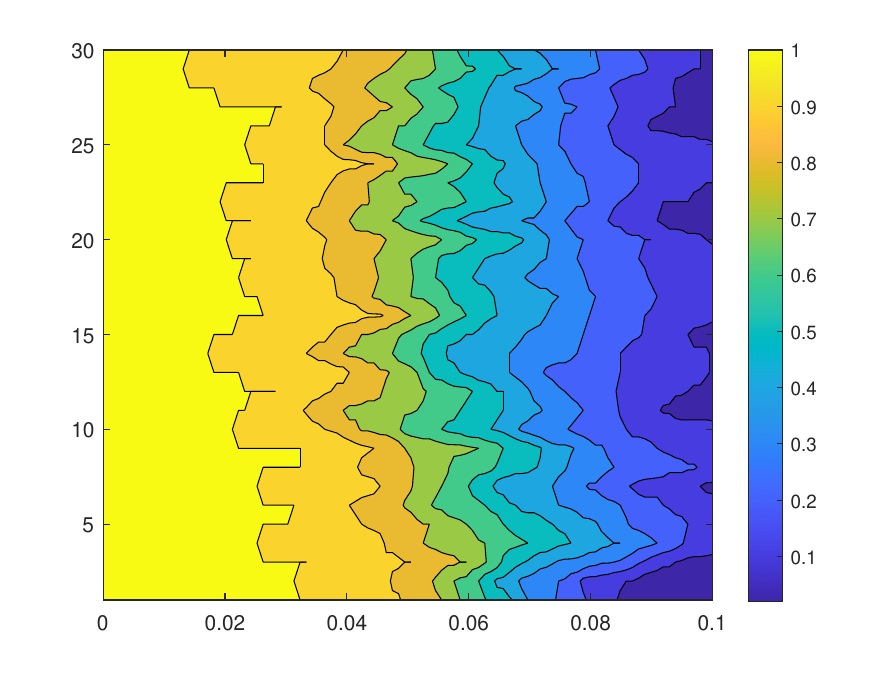}
         \caption{$T=50, R=100$}
         \label{femproT50R100NCsphar3}
     \end{subfigure}
     \hfill
     \begin{subfigure}[t]{0.3\textwidth}
         \centering
         \includegraphics[width=\textwidth]{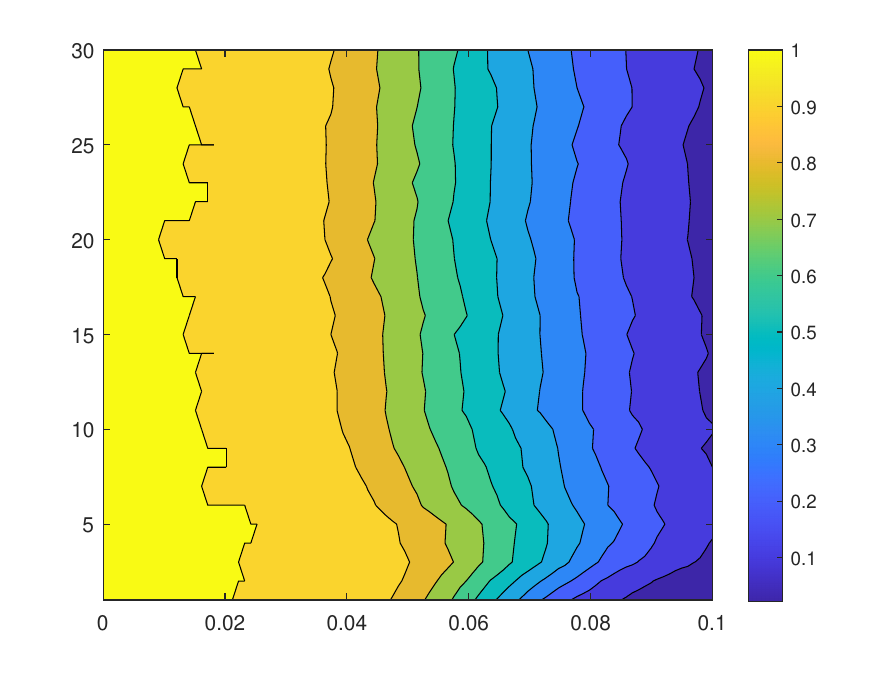}
         \caption{$T=50, R=2000$}
         \label{femproT50R2000NC}
     \end{subfigure}
     \hfill
     \begin{subfigure}[t]{0.3\textwidth}
         \centering
         \includegraphics[width=\textwidth]{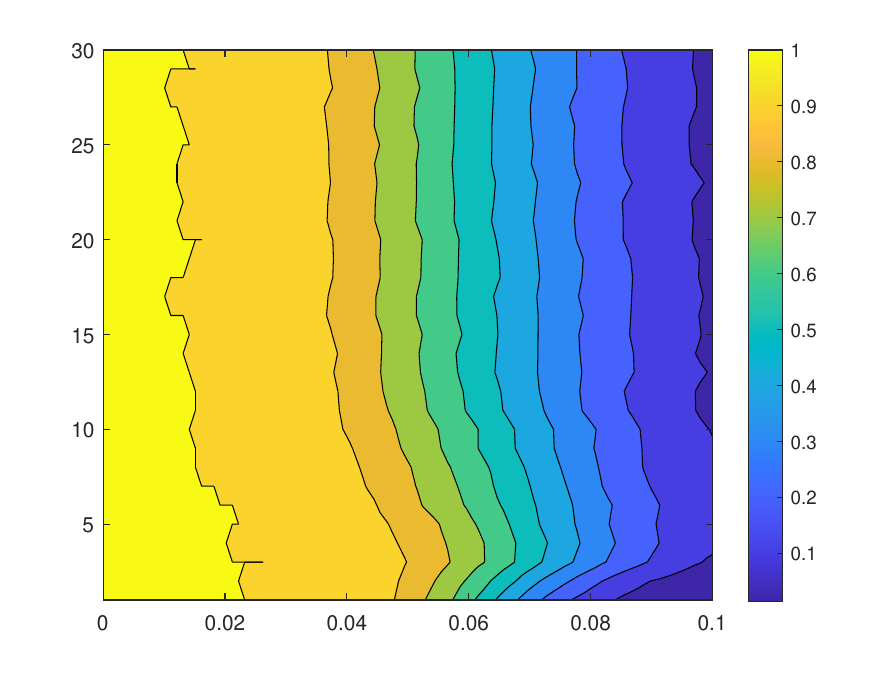}
         \caption{$T=50, R=5000$}
         \label{femproT50R5000NC}
     \end{subfigure}
     \begin{subfigure}[c]{0.3\textwidth}
         \centering
         \includegraphics[width=\textwidth]{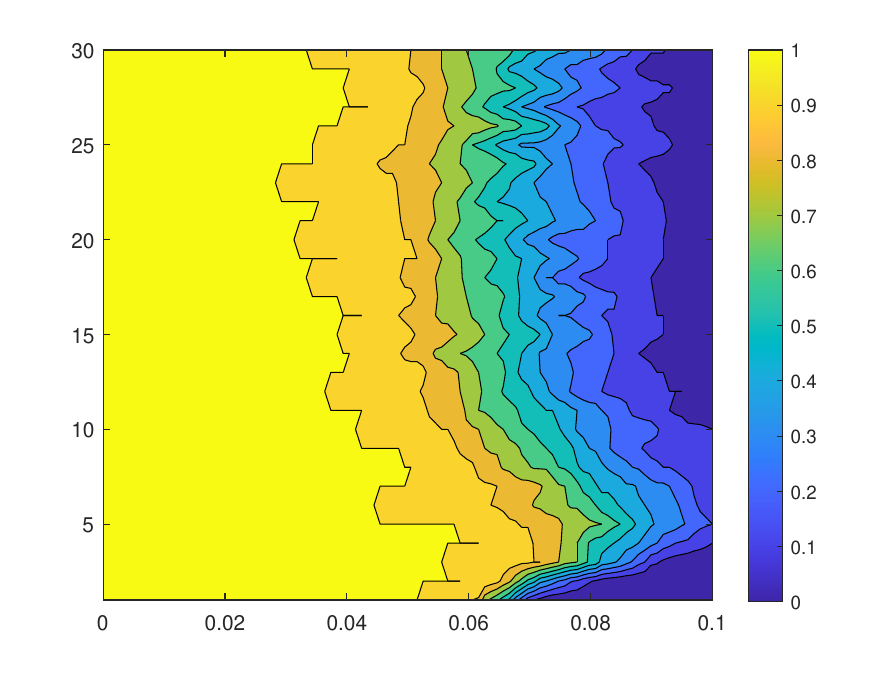}
         \caption{$T=500, R=100$}
         \label{femproT500R100NC}
     \end{subfigure}
     \hfill
     \begin{subfigure}[c]{0.3\textwidth}
         \centering
         \includegraphics[width=\textwidth]{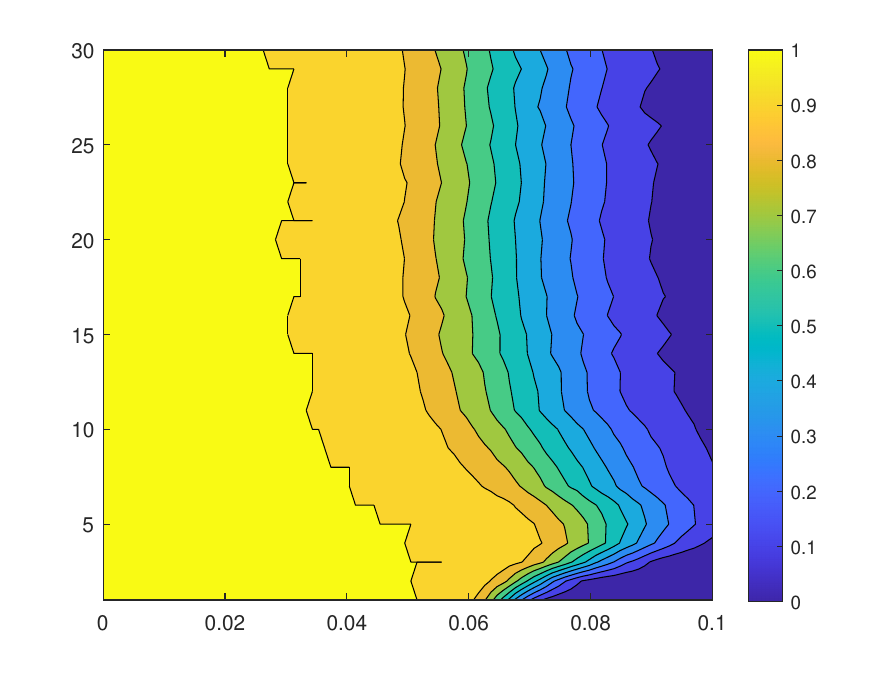}
         \caption{$T=500, R=2000$}
         \label{femproT500R2000NC}
     \end{subfigure}
     \hfill
     \begin{subfigure}[c]{0.3\textwidth}
         \centering
         \includegraphics[width=\textwidth]{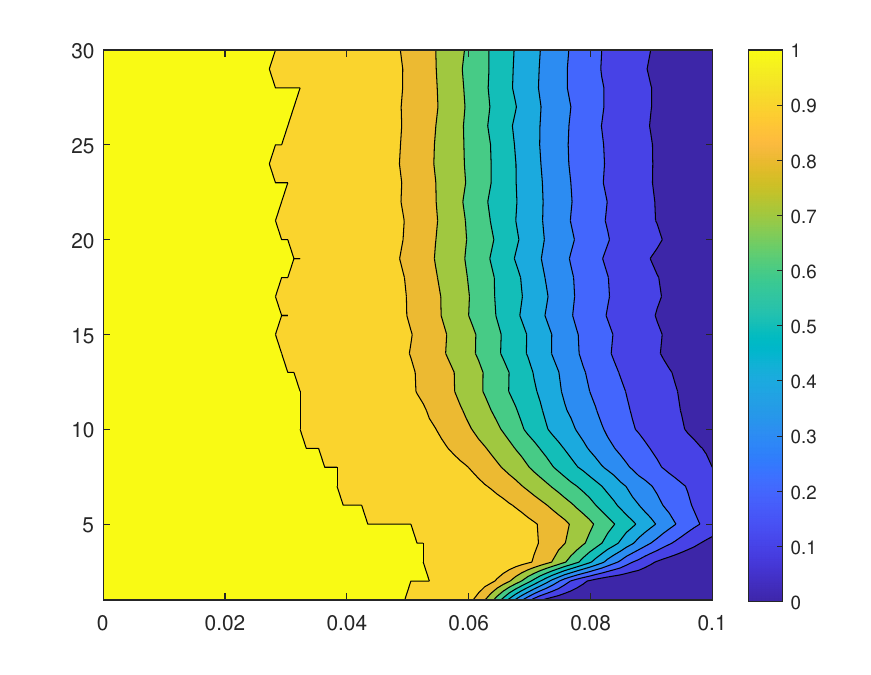}
         \caption{$T=500, R=5000$}
         \label{femproT500R5000NC}
     \end{subfigure}
        \begin{subfigure}[c]{0.3\textwidth}
         \centering
         \includegraphics[width=\textwidth]{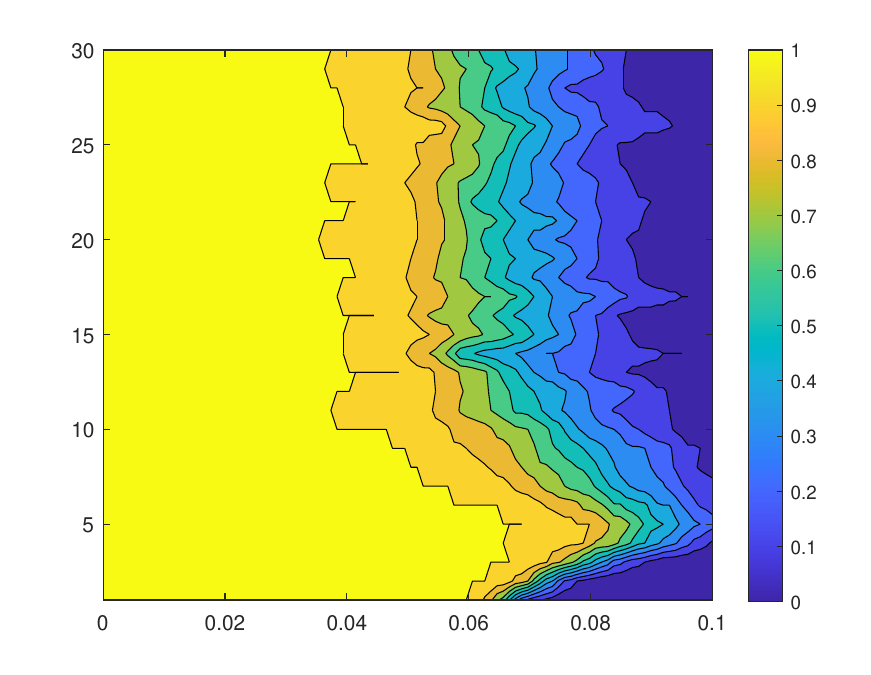}
         \caption{$T=1000, R=100$}
         \label{femproT1000R100NC}
     \end{subfigure}
     \hfill
     \begin{subfigure}[c]{0.3\textwidth}
         \centering
         \includegraphics[width=\textwidth]{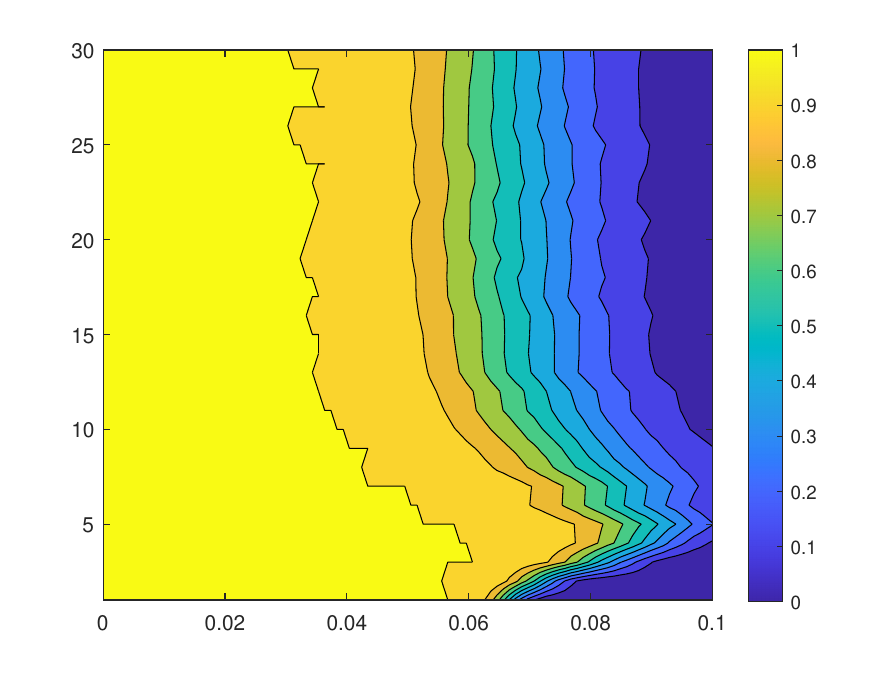}
         \caption{$T=1000, R=2000$}
         \label{femproT1000R2000NC}
     \end{subfigure}
     \hfill
     \begin{subfigure}[c]{0.3\textwidth}
         \centering
         \includegraphics[width=\textwidth]{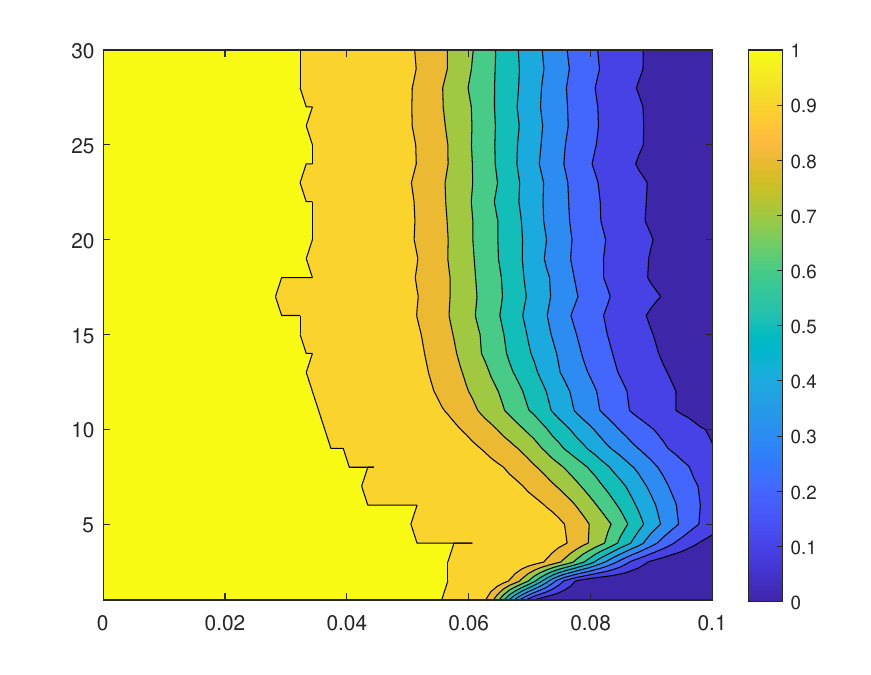}
         \caption{$T=1000, R=5000$}
         \label{femproT1000R5000NC}
     \end{subfigure}
             \caption{Empirical probabilities in equation (\ref{ep}), LRD operator decreasing eigenvalue sequence (plots (a)--(i)), and LRD operator increasing eigenvalue sequence  (plots (j)--(r))}
                \label{femprobsph3}
\end{figure}

\subsection{SRD--LRD estimation results}
\label{sim2}
In this section,  we study the case where  the  projected spherical functional  process displays LRD   at discrete Legendre frequencies $n=1,\dots,15,$ while   SRD is observed at discrete Legendre  frequencies $n=16,\dots,30.$
The eigenvalues
$\{ \alpha (n,\theta ),\ n=1,\dots,15 \}$ of   $\mathcal{A}_{\theta }$ are  displayed for $\theta =\theta_{0}$   in  Figure \ref{eqlrdeigex2}.  The   $100$ candidate  eigenvalue  systems    $\{ \alpha (n,\theta_{i} ), \ n=1,\dots,15, \ i=1,\dots, 100\},$ $\theta_{i}\in \Theta,$ $i=1,\dots, 100,$   involved  in the implementation of the  minimum contrast estimation procedure  are also showed in Figure \ref{eqlrdeigex2}. These  eigenvalue systems are computed from sampling a scaled beta distribution with parameters $2$ and  $\frac{5i}{i+1},$  $i=1,\dots,100.$
\begin{figure}[H]
\begin{center}
\includegraphics[height=3.5cm, width=8cm]{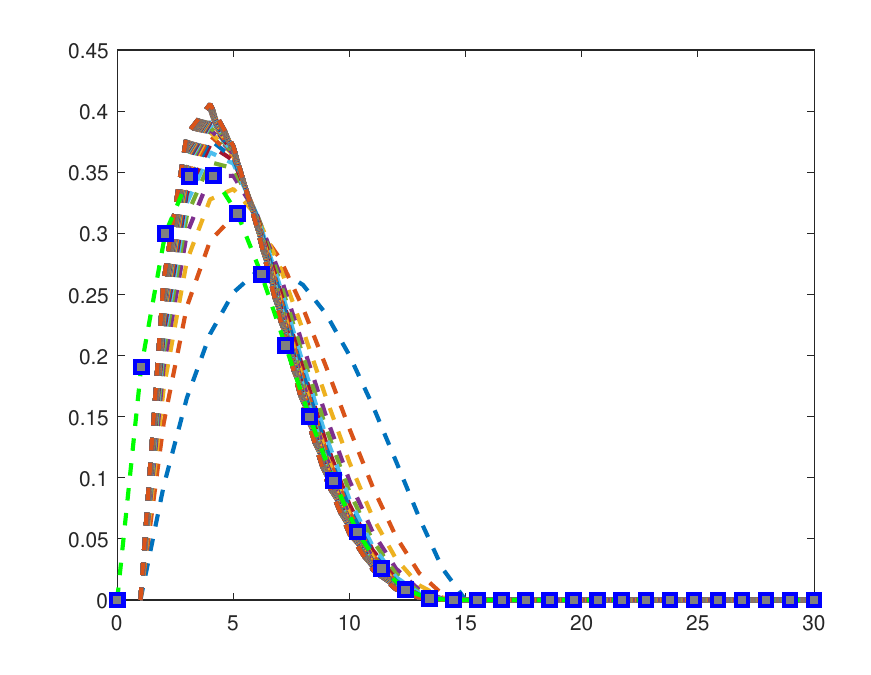}
\caption{ \scriptsize{Eigenvalues of $\mathcal{A}_{\theta_{0}}$ for the case of SRD--LRD  models (dotted blue--green line), and the eigenvalues of the  $100$ parametric candidates $\mathcal{A}_{\theta_{i}},$ $i=1,\dots,100,$ considered }}\label{eqlrdeigex2}
\end{center}
\end{figure}
For multifractionally integrated SPHARMA(1,1) process,  the modulus  of the  fDFT projected    into the eigenpaces $H_{n},$ $n=1,\dots, 15,$ of the spherical Laplace Beltrami operator,  where LRD is displayed, is showed at the left--hand side of Figure \ref{figregfDFTex2}. The  modulus of the  fDFT projected into the eigenspaces $H_{n},$   $n=16,\dots,30,$ where SRD is displayed, can be found at the right--hand side of Figure \ref{figregfDFTex2}.  In the last case, its  weighted kernel operator estimator, based on the Gaussian kernel and bandwidth parameter $B_{T}=0.2,$ is  given at the right--hand side of
Figure \ref{weightedperestex2bb}. For the minimum contrast parametric estimator  $\widehat{f}_{\omega  }^{(T,n_{0})}(\cdot ,\widehat{\theta }_{T})$ with $n_{0}=15,$    minimizing  over the $100$ candidate parametric models
  the  bounded linear operator norm $\mathcal{L}(L^{2}(\mathbb{S}_{2}, d\nu; \mathbb{C}))$  of the  empirical contrast operator $\mathbf{U}_{T,\theta }$ in  (\ref{eco}),  Figure \ref{EX2MCOLSFPE} displays the empirical probabilities $\widehat{\mathcal{P}}\left( \|f_{n}(\cdot )-\widehat{f}^{(T,(n_{0}))}_{n}(\cdot  ,\widehat{\theta}_{T}) \|_{L^{1}([-\pi,\pi])}>\varepsilon_{i}\right),$
 for $n=1,\dots,15,$ and thresholds
$\varepsilon_{i}=i(0.016) \in (0,0.8),$ $i=1,\dots,50.$

The performance of the SRD--LRD estimation methodology is also analyzed  for the remaining multifractionally integrated spherical functional processes generated as displayed in  Figure \ref{fmixedT500R2000}. Specifically, in this figure,   results in terms of the  empirical mean quadratic errors, associated with SRD functional spectral  estimation (left hand--side),  and in terms of the histograms of the  temporal mean of the empirical  absolute errors  (right hand--side), associated with LRD spectral estimation, are respectively showed from $R=300$ independent  generations of a  functional sample of  size $T=500.$   Results  for    $T=50,$  $T=100,$  $T=500,$    $T=1000,$ and $R=100$ are displayed in  Section 5 of the Supplementary Material.
\begin{figure}[!htb]
\begin{center}
\includegraphics[height=0.159\textheight, width=0.4\textwidth]{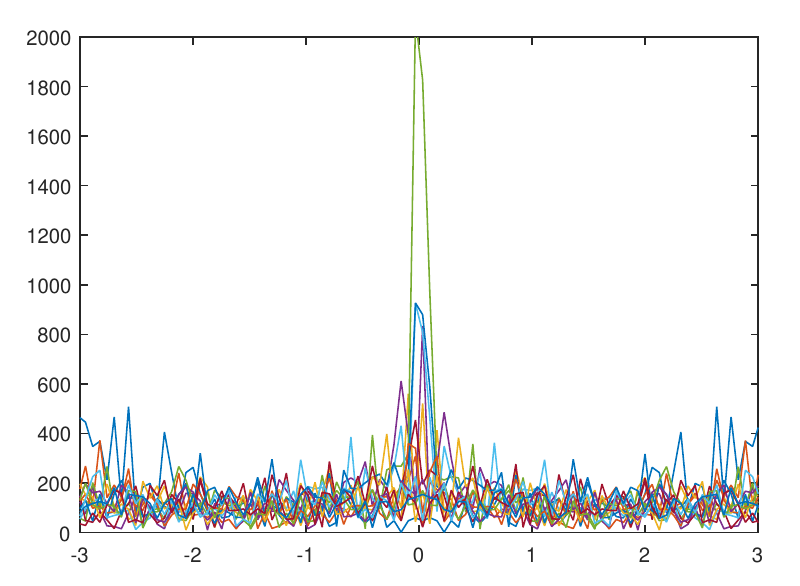}
\includegraphics[height=0.159\textheight, width=0.4\textwidth]{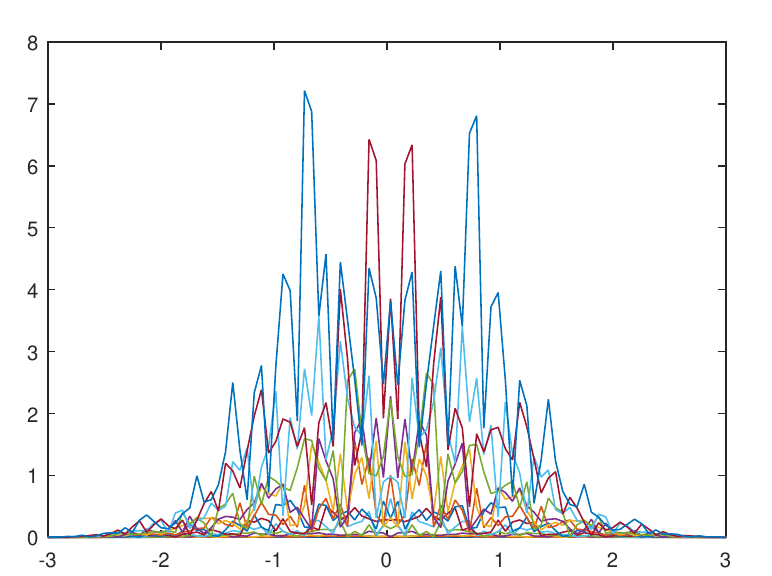}
\caption{ \scriptsize{Modulus of the projected   LRD fDFT  for  discrete Legendre frequencies $n=1,\dots,15$ (left--hand side), and  modulus of the projected   SRD fDFT  for  discrete Legendre frequencies $n=16,\dots,30$ (right--hand side) }}\label{figregfDFTex2}
\end{center}
\end{figure}

 \begin{figure}[!htb]
\begin{center}
\includegraphics[height=0.159\textheight, width=0.4\textwidth]{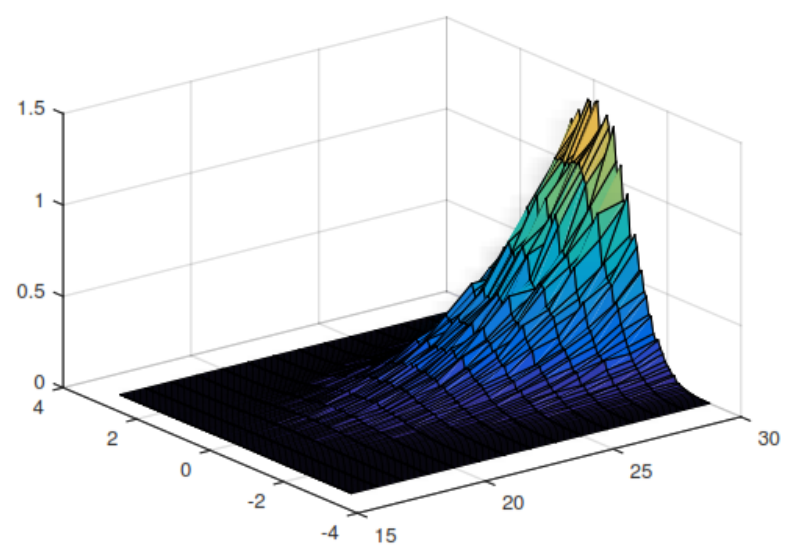}
\includegraphics[height=0.159\textheight, width=0.4\textwidth]{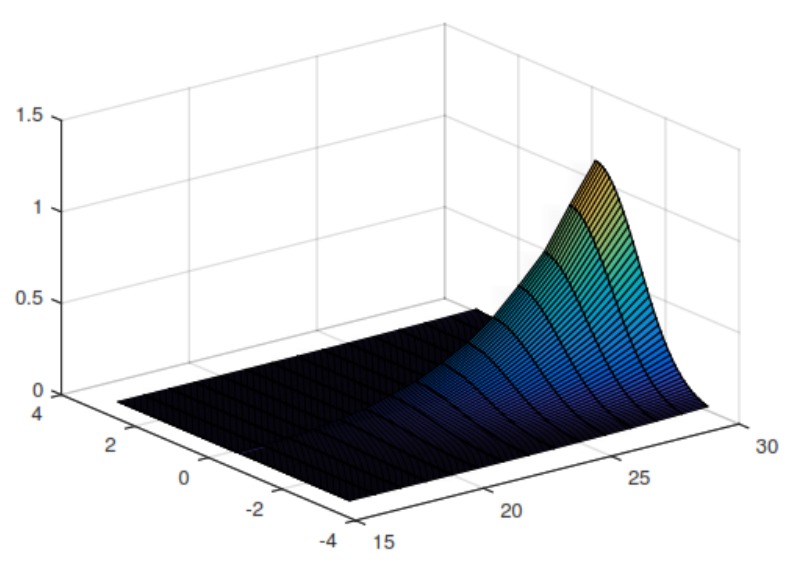}
 \caption{ \scriptsize{ The modulus of the  fDFT projected into the eigenspaces $H_{n},$   $n=16,\dots,30,$ (left--hand side),  and  its  weighted kernel operator estimator (right--hand side)
}}\label{weightedperestex2bb}
\end{center}
\end{figure}

\begin{figure}[!htb]
\begin{center}
\includegraphics[height=0.159\textheight, width=0.4\textwidth]{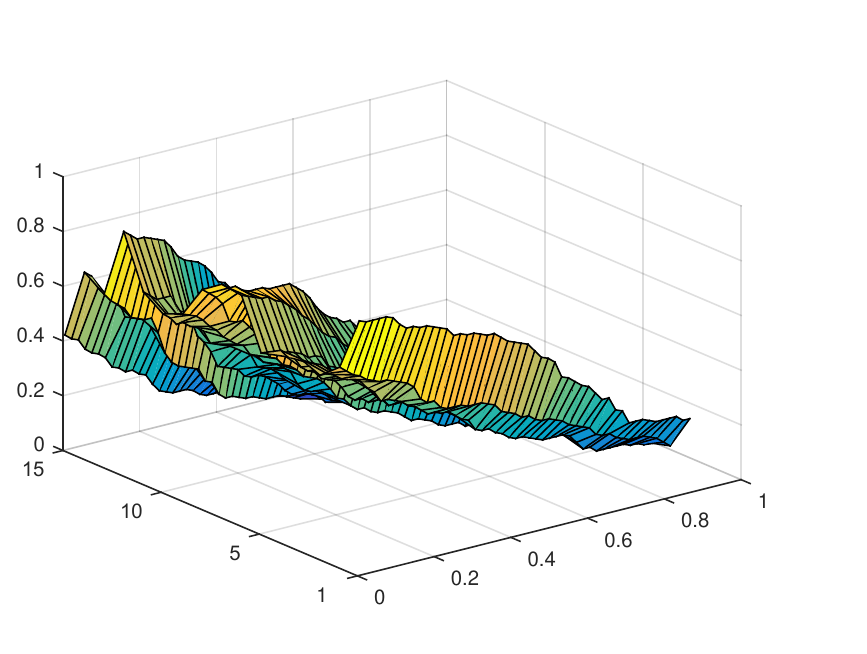}
\includegraphics[height=0.159\textheight, width=0.4\textwidth]{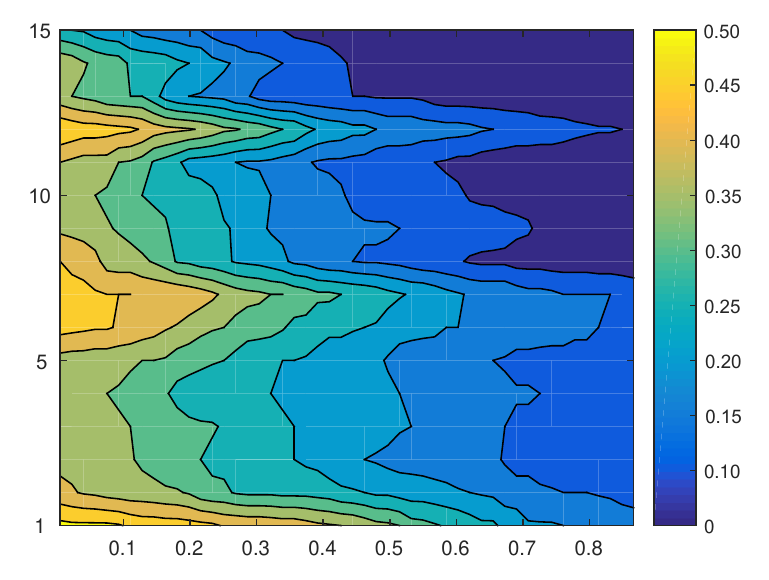}
\caption{ \scriptsize{The empirical probabilities $\widehat{\mathcal{P}}\left( \left\|f_{n}(\cdot )-\widehat{f}_{n}(\cdot ,\widehat{\theta}_{T})^{(T,n_{0})} \right\|_{L^{1}([-\pi,\pi])}>\varepsilon_{i}\right),$ for $n=1,\dots,15,$ and
$\varepsilon_{i}=i(0.016) \in (0,0.8),$ $i=1,\dots,50,$
based on  $100$ independent generations of a functional sample of size  $T=100$
(surface  and contour plot left to right--hand side)}}\label{EX2MCOLSFPE}
\end{center}
\end{figure}
\setcounter{section}{5} \setcounter{equation}{0} 

\begin{figure}[H]
     \centering
     \begin{subfigure}[t]{0.3\textwidth}
         \centering
         \includegraphics[height=0.15\textheight, width=\textwidth]{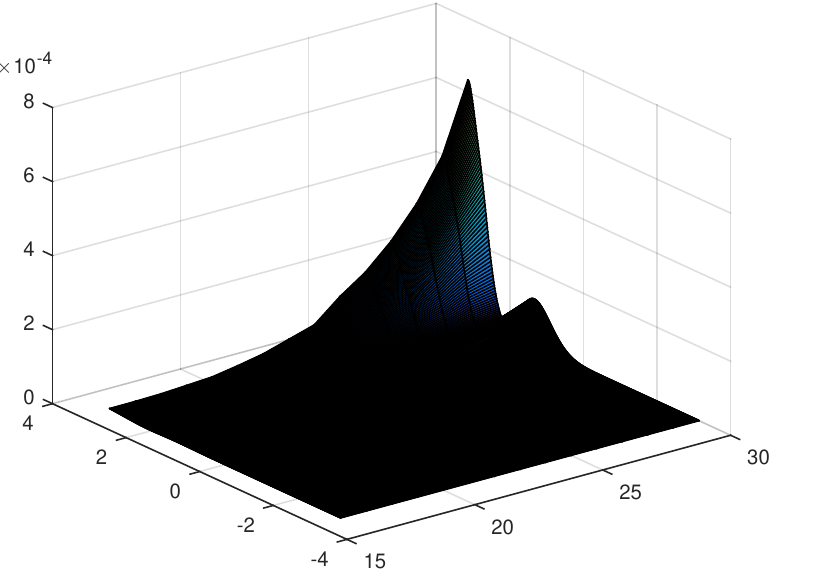}
         \caption{SPHAR(1). E.M.Q.E. of P.W.P.O.}
         \label{fEQMIXEDSPHAR1T500R2000SPHAR1}
     \end{subfigure}
     \hfill
     \begin{subfigure}[t]{0.3\textwidth}
         \centering
         \includegraphics[height=0.15\textheight, width=\textwidth]{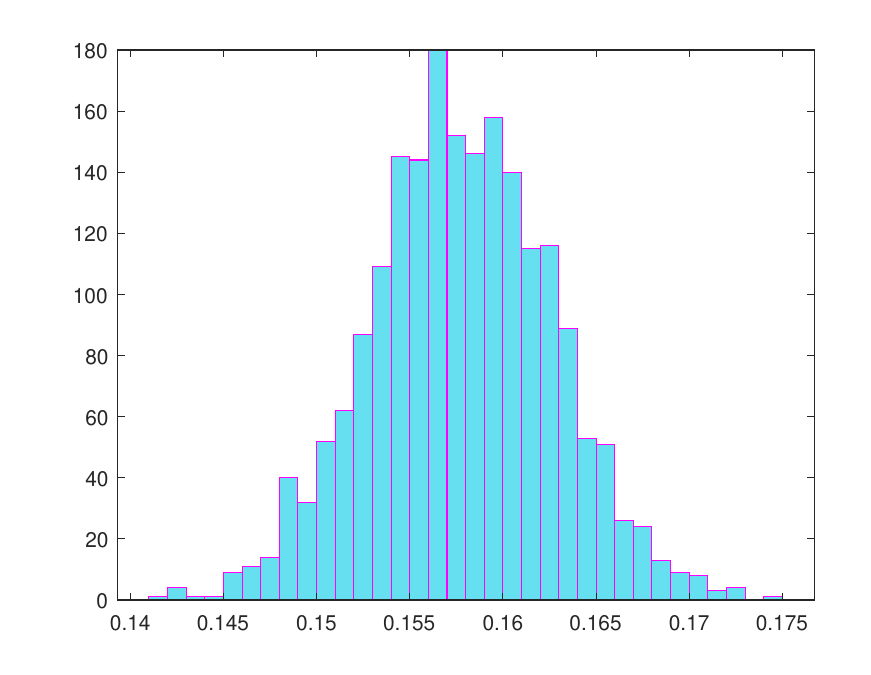}
         \caption{M.I. SPHAR(1). Scale $10$. T.M.E.A.E. histograms}
         \label{fMIXEDEQES15T500R2000SPHAR1}
     \end{subfigure}
     \hfill
     \begin{subfigure}[t]{0.3\textwidth}
         \centering
         \includegraphics[height=0.15\textheight, width=\textwidth]{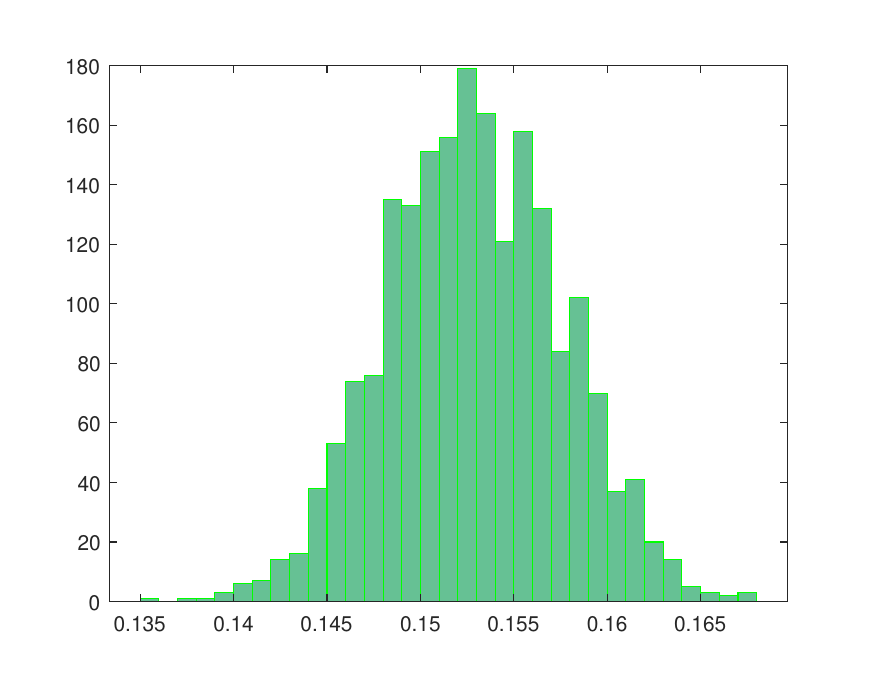}
         \caption{M.I. SPHAR(1). Scale $15$. T.M.E.A.E. histograms}
         \label{fMIXEDEQES1015T500R2000SPHAR1}
     \end{subfigure}
          \begin{subfigure}[c]{0.3\textwidth}
         \centering
         \includegraphics[height=0.15\textheight, width=\textwidth]{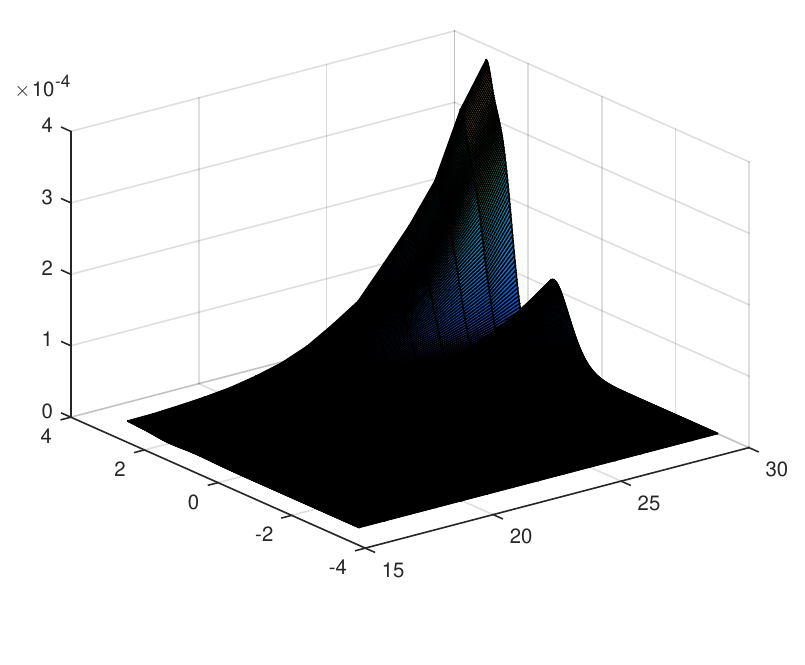}
         \caption{SPHAR(3). E.M.Q.E. of P.W.P.O.}
         \label{fEQMIXEDSPHAR3T500R2000SPHAR3}
     \end{subfigure}
     \hfill
     \begin{subfigure}[c]{0.3\textwidth}
         \centering
         \includegraphics[height=0.15\textheight, width=\textwidth]{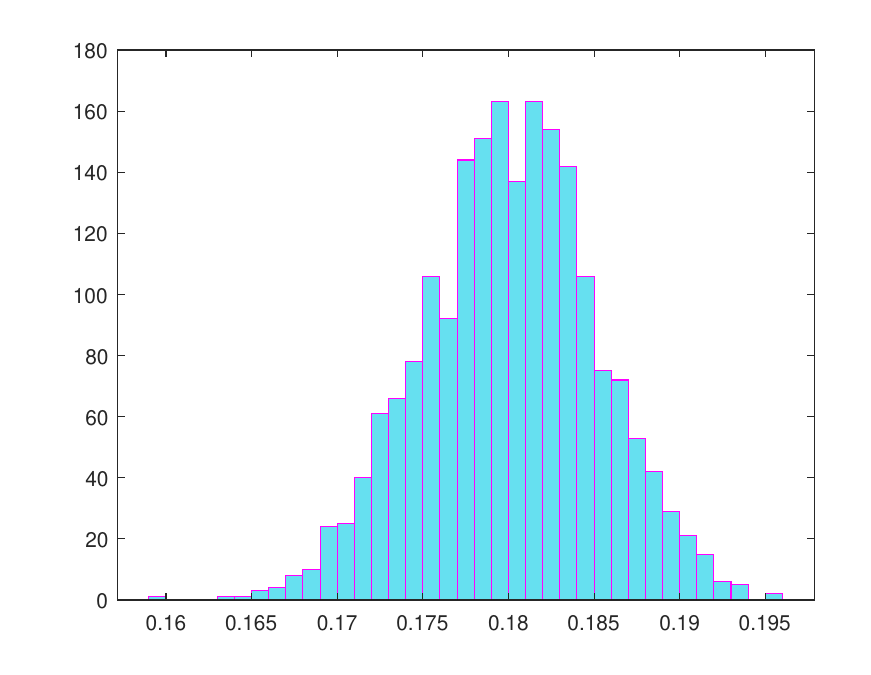}
         \caption{M.I. SPHAR(3). Scale  $10$. T.M.E.A.E. histograms}
         \label{fMIXEDEQES15T500R2000SPHAR3}
     \end{subfigure}
     \hfill
     \begin{subfigure}[c]{0.3\textwidth}
         \centering
         \includegraphics[height=0.15\textheight, width=\textwidth]{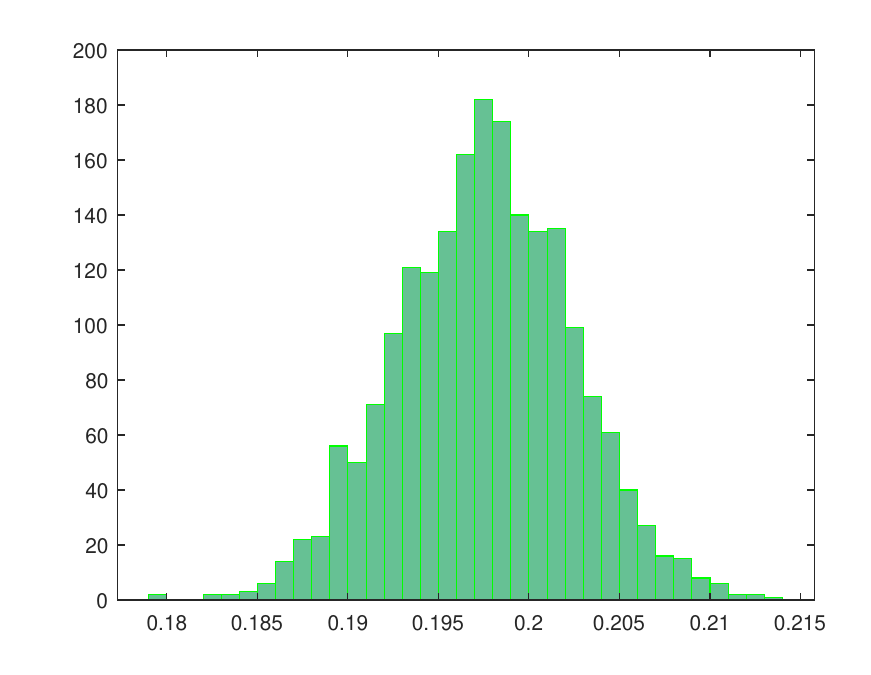}
         \caption{M.I. SPHAR(3). Scale  $15$. T.M.E.A.E. histograms }
         \label{fMIXEDEQES1015T500R2000SPHAR3}
     \end{subfigure}
         \begin{subfigure}[C]{0.3\textwidth}
         \centering
         \includegraphics[height=0.15\textheight, width=\textwidth]{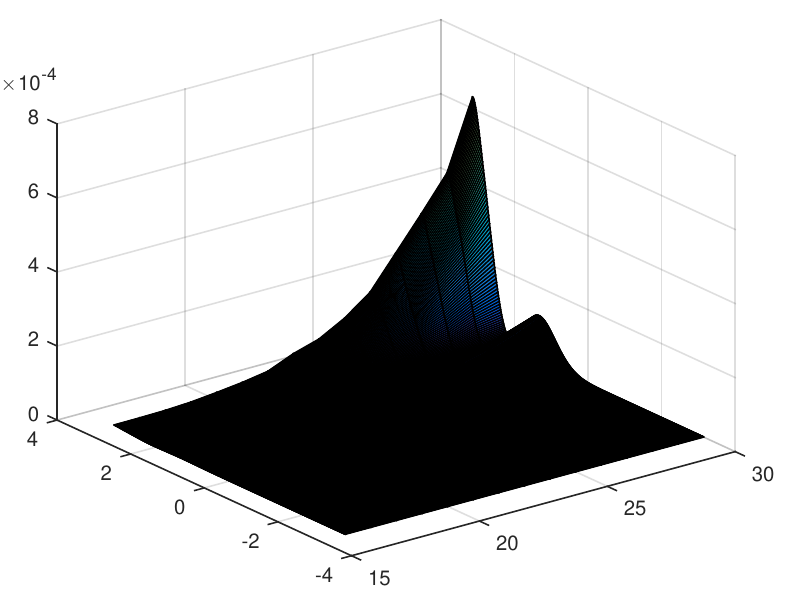}
         \caption{SPHARMA(1,1). E.M.Q.E. of P.W.P.O.}
         \label{fEQMIXEDSPHARMA11T500R2000SPHARMA11}
     \end{subfigure}
     \hfill
     \begin{subfigure}[C]{0.3\textwidth}
         \centering
         \includegraphics[height=0.15\textheight, width=\textwidth]{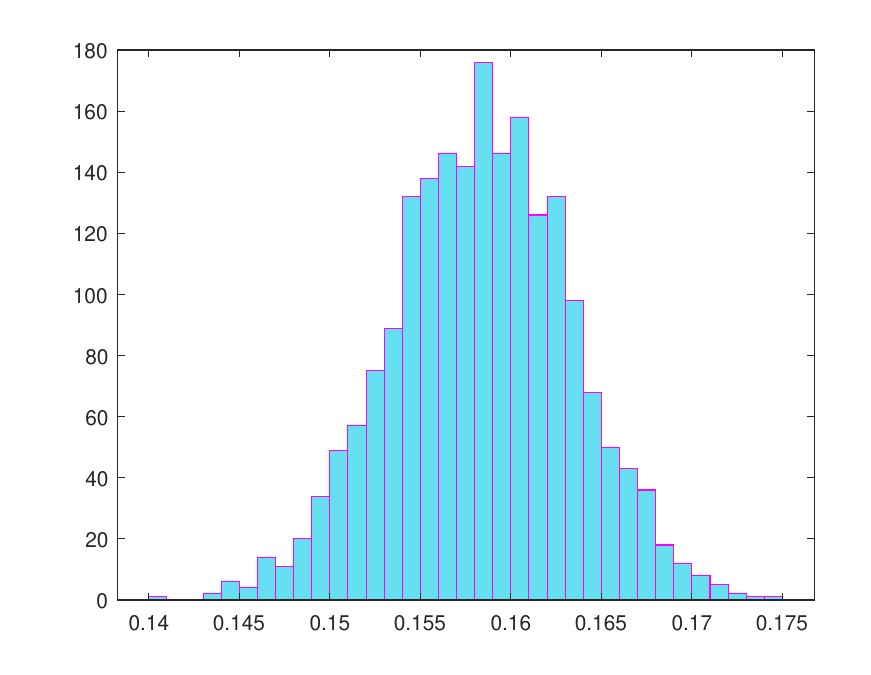}
         \caption{M.I. SPHARMA(1,1). Scale $10$. T.M.E.A.E. histograms}
         \label{fMIXEDEQES15T500R2000SPHARMA11}
     \end{subfigure}
     \hfill
     \begin{subfigure}[C]{0.3\textwidth}
         \centering
         \includegraphics[height=0.15\textheight, width=\textwidth]{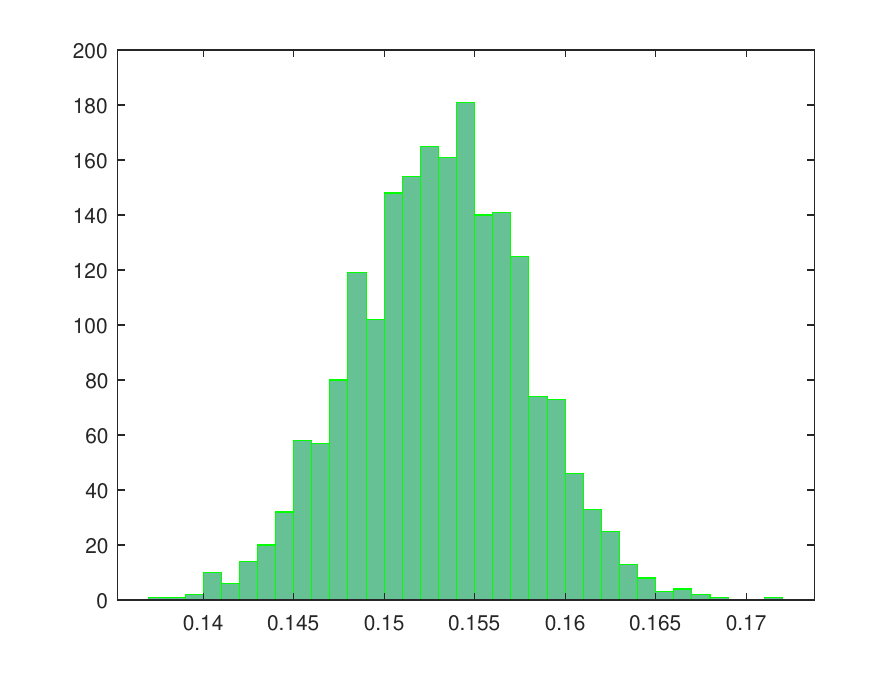}
         \caption{M.I. SPHARMA(1,1). Scale $15$. T.M.E.A.E. histograms}
         \label{fMIXEDEQES1015T500R2000SPHARMA11}
     \end{subfigure}
          \begin{subfigure}[t]{0.3\textwidth}
         \centering
         \includegraphics[height=0.15\textheight, width=\textwidth]{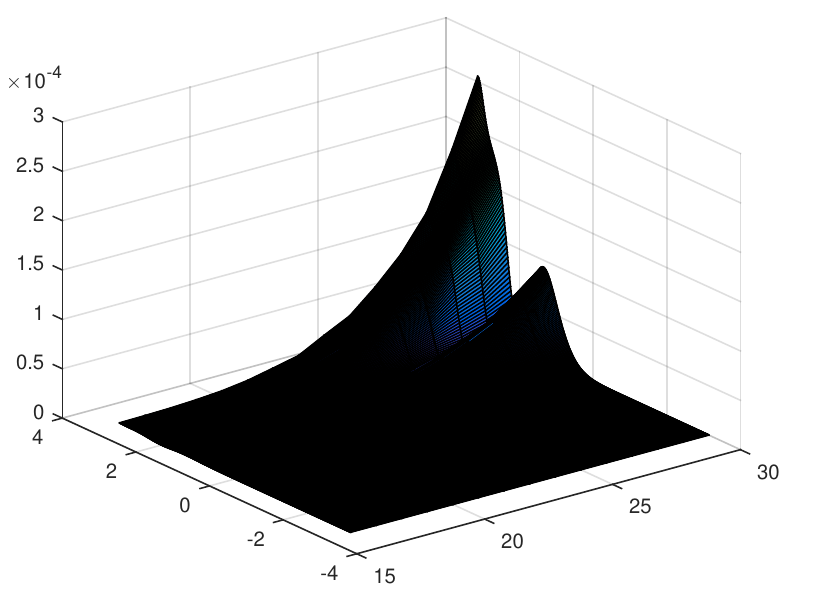}
         \caption{SPHARMA(3,1). E.M.Q.E. of P.W.P.O.}
         \label{fEQMIXEDSPHARMA31T500R2000SPHARMA31}
     \end{subfigure}
     \hfill
     \begin{subfigure}[t]{0.3\textwidth}
         \centering
         \includegraphics[height=0.15\textheight, width=\textwidth]{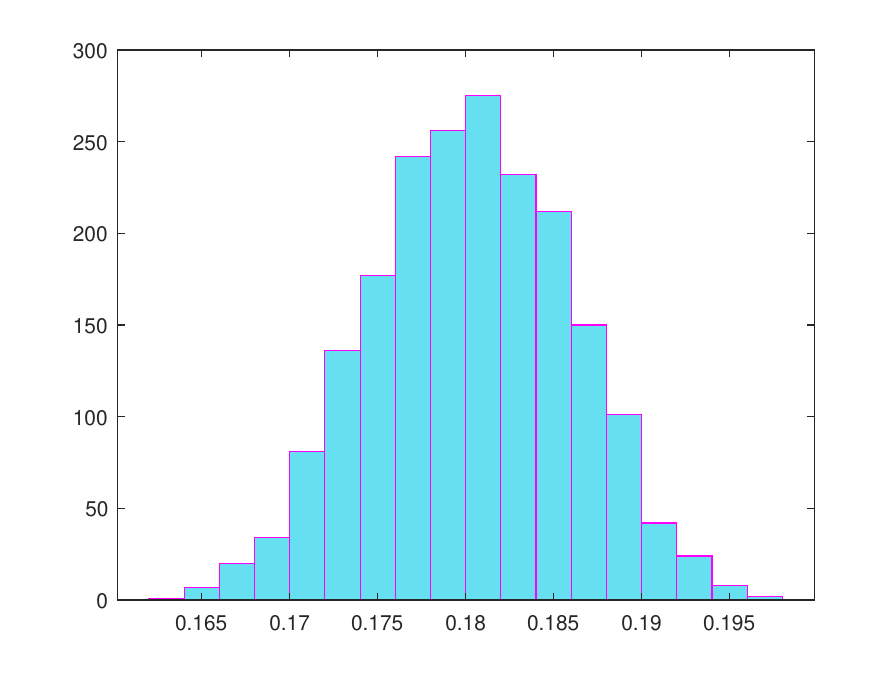}
         \caption{M.I. SPHARMA(3,1). Scale $10$. T.M.E.A.E. histograms }
         \label{fMIXEDEQES15T500R2000SPHARMA31}
     \end{subfigure}
     \hfill
     \begin{subfigure}[t]{0.3\textwidth}
         \centering
         \includegraphics[height=0.15\textheight, width=\textwidth]{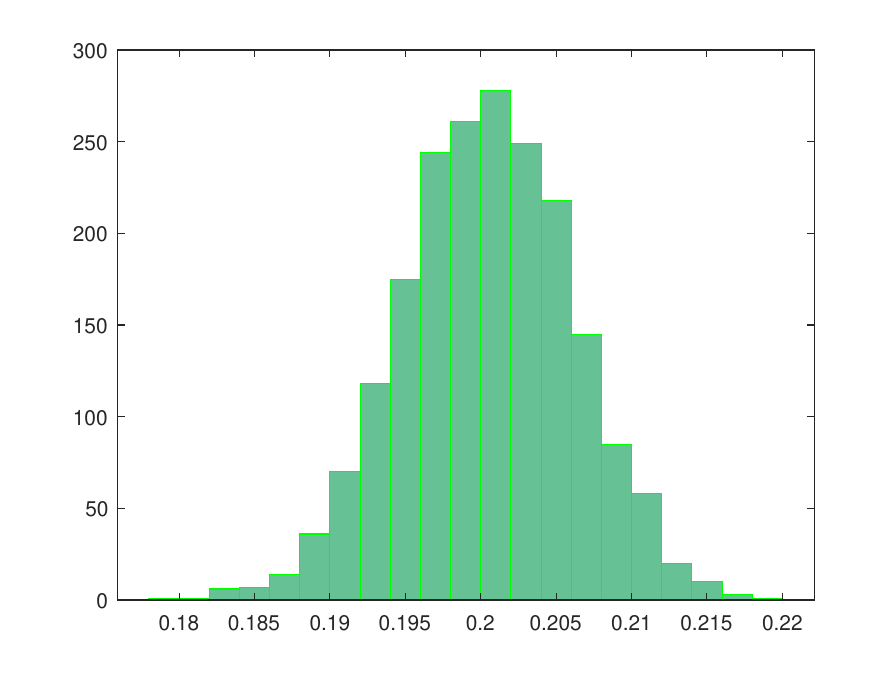}
         \caption{M.I. SPHARMA(3,1). Scale $15$. T.M.E.A.E. histograms}
         \label{fMIXEDEQES1015T500R2000SPHARMA31}
     \end{subfigure}
        \caption{\scriptsize{The empirical mean quadratic errors (E.M.Q.E.s)  associated with the projected weighted periodogram  operator (P.W.P.O.) estimator (eigenspaces $H_{n},$ $n=16,\dots, 30,$ of the  Laplace Beltrami operator), are displayed at the left--hand side. The remaining plots provide the histograms of the temporal mean of the empirical absolute errors, associated with the minimum contrast parameter estimation of LRD operator defining multifractional integration (M.I),  for eigenspaces $H_{n},$  $n=10,15.$ All the results displayed are based on $R=300$ independent generations of a functional sample of size  $T=500.$  The bandwidth parameter $B_{T}=0.2$ has been chosen in all the cases  }}
        \label{fmixedT500R2000}
\end{figure}
One can observe since  $T=500$ the order of magnitude of the empirical mean quadratic errors is $10^{-4}$ for the bandwidth parameter $B_{T}=0.2$ in the SRD estimation.  It is well-known that the bandwidth parameter affects precision of the weighted periodogram operator estimator, and the impact of parameter
$T$ is very strong. This fact can also be observed in Section 5 of the Supplementary Material,  looking at  differences in the magnitude of empirical mean quadratic errors, based on $100$ repetitions, associated with the weighted periodogram operator for $T=50$ and $T=100,$ considering bandwidth parameter $B_{T}= 0.1,$ as well as for $T=500,$ and $T=1000,$ considering bandwidth parameter $B_{T}= 0.2,$ since a substantial reduction in such magnitudes occurs when increasing the functional sample size $T$ (see, e.g., Theorem 3.6 in \cite{Panaretos13}).  Similar  results as in previous section are obtained in terms of  the empirical distribution of the temporal mean of the empirical  absolute errors in the implementation of minimum contrast estimation methodology for $n=1,\dots,15.$
\section{Final comments}\label{sec:7}

Results displayed in   Sections \ref{mcess} and  \ref{sim2}  (see also Sections 4 and 5 of the Supplementary Material) are based on the computation of the empirical distribution of the temporal mean of the absolute errors, and the empirical probabilities.
  The interaction between  parameters   $n,$  $R$  and $T$ in the asymptotic analysis of the two estimation methodologies proposed is illustrated  beyond the Gaussian scenario in the simulation study undertaken.  Specifically, from the empirical distributions plotted, one can conclude that  their  rate of convergence is a function of  the spherical scale $n,$ the functional sample size $T,$ and the number of repetitions $R.$ Differences between empirical distributions of absolute errors by scales are more pronounced for decreasing sequence of LRD operator eigenvalues  than in the case of increasing LRD operator eigenvalue sequence.
  As expected  (see also Supplementary Material), the effect of the element of  SPHARMA(p,q) process  family  considered is negligible for the minimum contrast estimation methodology. Namely, a slightly  increasing  of the concentration   rate of the empirical errors  when   the parameter of autoregression $p$ increases is observed.
  Additionally, the empirical probability analysis also reflects the interaction of these three parameters through  the  rate of converge to zero. Under nondecreasing  eigenvalue sequence of the LRD operator, a smoother decay to zero of the empirical probabilities than in the  case of decreasing eigenvalue sequence of the LRD operator  is observed.
 A new battery of limit results  will be investigated beyond the Gaussian scenario in a subsequent paper for  the asymptotic analysis of the proposed estimators of the second--order structure of the  LRD manifold cross--time  RFs     studied here.

\setcounter{section}{7} \setcounter{equation}{0} 
\begin{acknowledgements}
  This work has been supported in part by projects
MCIN/ AEI/PID2022-142900NB-I00,  and CEX2020-001105-M MCIN/ AEI/10.13039/501100011033).

The authors would like to thank the Editor in Chief and Associate Editor, as well as  two anonymous reviewers for their helpful and constructive comments and suggestions which have led to a substantial  improvement of  this paper.
 \end{acknowledgements}
 {\small
 The authors declare that they have no conflict of interest.}



\end{document}